\def\notes{0}
\def\pods{0}
\newtheorem{theorem}{Theorem}[section]
\newtheorem{corollary}[theorem]{Corollary}
\newtheorem{claim}[theorem]{Claim}
\newtheorem{lemma}[theorem]{Lemma}
\newtheorem{definition}[theorem]{Definition}
\newtheorem*{remark}{Remark}
\newcommand{\msf}{\mathsf} %mathsf
\newcommand{\multiline}[1]{%
  \begin{tabularx}{\dimexpr\linewidth-\ALG@thistlm}[t]{@{}X@{}}
    #1
  \end{tabularx}
}
\algnewcommand{\LineComment}[1]{\Statex \(\triangleright\) \begin{small}{\emph #1}\end{small} }
\newcommand{\CR}{nonadaptive continual release model} %non-adaptive continual release 
\newcommand{\aCR}{adaptive continual release model} % adaptive continual release
\renewcommand{\epsilon}{\varepsilon} %epsilon
\newcommand{\eps}{\varepsilon} %epsilon
\newcommand{\X}{\mathcal{X}} %input distribution
\newcommand{\Z}{\mathcal{Z}} %output distribution
\newcommand{\BigO}[1]{\ensuremath{\operatorname{O}\left(#1\right)}} %asymptotic O
\newcommand{\dset}{y} %dataset entry
\newcommand{\dstream}{x} %data stream entry
\renewcommand{\vec}[1]{{\bf #1}} %for full (dataset/datastream)
\newcommand{\dstreamvec}{\vec{x}}
\newcommand{\avec}{\vec{a}}
\newcommand{\yvec}{\vec{y}}
\newcommand{\random}{ } %random (dataset/datastream)
\newcommand{\xt}{\vec{\dstream}_{[t]}} %records 1 through t
\newcommand{\node}{v}
\newcommand{\nodeval}{s}
\newcommand{\adv}{\ensuremath{\mathcal{A}\mathit{dv}}\xspace} %adversary
\newcommand{\mech}{\ensuremath{\mathcal{M}}\xspace} %mechanism
\newcommand{\alg}{\ensuremath{\mathcal{A}}\xspace} %algorithm
\newcommand{\Sim}{\mathcal{S}\text{im}} %Simulator
\newcommand{\maxsum}[1]{\ensuremath{\mathrm{\sf MaxSum}_{#1}}} %maxsum problem
\newcommand{\kindselection}[1]{\ensuremath{\text{$k$-}\mathrm{\sf Select}_{#1}}} %top-k-selection problem
\newcommand{\marginals}[1]{\ensuremath{\mathrm{\sf Marginals}_{#1}}} %marginals
\newcommand{\selection}[1]{\ensuremath{\mathrm{\sf SumSelect}_{#1}}} %selection problem
\newcommand{\idealgauss}{\mech_{\msf{gauss}}}
\newcommand{\idealrecomp}{\mech_{\msf{comp}}}
\newcommand{\err}{\text{\sf ERR}} %err
\newcommand{\side}{\text{\sf side}} %side
\newcommand{\gauss}{\ensuremath{\mathcal{N}}} %gaussian distribution
\newcommand{\Gauss}{\gauss} % gaussian distribution
\newcommand{\Lap}{\mathsf{Lap}}
\newcommand{\zo}{\{0,1\}} %binary strings
\newcommand{\R}{\mathbb{R}} %reals
\newcommand{\N}{\mathbb{N}} %naturals
\newcommand{\defeq}{\overset{\text{\tiny def}}{=}} %defeq
\DeclareMathOperator*{\E}{\mathbb{E}} % expectation
\DeclareMathOperator*{\argmax}{arg\,max} %argmax
\newcommand{\paren}[1]{{\left ( {#1} \right)}}
\newcommand{\bparen}[1]{{\big ( {#1} \big)}}
\newcommand{\Bparen}[1]{{\Big ( {#1} \Big)}}
\definecolor{mygreen}{rgb}{0.0, 0.49, 0.28}
\newcommand{\srnote}[1]{{\color{blue}\footnote{{\color{blue} {\bf SR:} #1}}}}
\newcommand{\pjnote}[1]{{\color{mygreen}\footnote{{\color{mygreen} {\bf P:} #1}}}}
\newcommand{\ssnote}[1]{{\color{red}\footnote{{\color{red} {\bf SS:} #1}}}}
\newcommand{\asnote}[1]{{\color{brown}\footnote{{\color{brown} {\bf A:} #1}}}}
\newcommand{\old}[1]{\sout{#1}}
\newcommand{\srnote}[1]{}
\newcommand{\ssnote}[1]{}
\newcommand{\pjnote}[1]{}
\newcommand{\asnote}[1]{}
\newcommand{\old}[1]{}
\newcommand{\sstext}[1]{{\color{red} #1}}
\newcommand{\as}[1]{{\color{brown} #1}}
\newcommand{\sstext}[1]{{#1}}
\newcommand{\as}[1]{{#1}}
\title{%Selection is Hard: 
The Price of Differential Privacy under Continual Observation \\}
 \author{Palak Jain\thanks{Department of Computer Science, Boston University. \texttt{\{palakj,satchit,sofya,ads22\}@bu.edu}.
 Palak Jain and Adam Smith were supported in part by NSF award CCF-1763786 as well as a Sloan Foundation research award.
 Sofya Raskhodnikova was partially supported by NSF award CCF-1909612.
 Satchit Sivakumar was supported in part by NSF award CNS-2046425, as well as Cooperative Agreement CB20ADR0160001 with the Census Bureau. The views expressed in this paper are those of the authors and not those of the U.S. Census Bureau or any other sponsor.} \and Sofya Raskhodnikova\footnotemark[1] \and Satchit Sivakumar\footnotemark[1] \and Adam Smith\footnotemark[1]}
 \date{\today}
\begin{document}
%\maketitle %commented otu for conference format
\ifnum\pods=0
\maketitle 
\else
\fi

\begin{abstract}
We study the accuracy of differentially private mechanisms in the continual release model. A continual release mechanism receives a sensitive dataset as a stream of $T$ inputs and produces, after receiving each input, an  accurate output on the obtained inputs.
In contrast, a batch algorithm receives the data as one batch and produces a single output.

We provide the first strong lower bounds on the error of continual release mechanisms. In particular, for two fundamental problems that are widely studied and used in the batch model, we show that the worst case error of every continual release algorithm is $\tilde \Omega(T^{1/3})$ times larger than that of the best batch algorithm. 
Previous work shows only a polylogarithimic (in $T$) gap between the worst case error achievable in these two models; further, for many problems, including the summation of binary attributes, the polylogarithmic gap is tight (Dwork et al., 2010; Chan et al., 2010). Our results show that  problems closely related to summation---specifically, those that require selecting the largest of a set of sums---are fundamentally harder in the continual release model than in the batch model.

Our lower bounds assume only that privacy holds for streams fixed in advance (the ``nonadaptive'' setting). However, we provide  matching upper bounds that hold in a model where privacy is required even for adaptively selected streams. This model may be of independent interest.

\end{abstract}

\ifnum\pods=1
\settopmatter{printacmref=false}
\else
\fi

\maketitle %added in for ACM SIGMOD PODS conference format

\ifnum\pods=0
\newpage  
{ \setstretch{1}
\tableofcontents
}
\newpage
\else
\fi

%%%%%%%%%%%%%%%%%%%%%%%%%%%%%%%%%%%%%%%%%%%%%%%%%%%%%%%%%%%%%%%%%%%%%%%
%                            INTRODUCTION
%%%%%%%%%%%%%%%%%%%%%%%%%%%%%%%%%%%%%%%%%%%%%%%%%%%%%%%%%%%%%%%%%%%%%%%

\section{Introduction}

In fields ranging from healthcare to criminal justice, sensitive data is being analyzed to identify patterns and draw population-level conclusions. Differentially private (DP) data analysis \cite{dwork2006calibrating} studies the design of algorithms that publish such aggregate statistics about input datasets while preserving the privacy of individuals whose data they contain. Differential privacy has been extensively studied and DP algorithms have been deployed in both industry and government.
Current government deployments, notably at the US Census Bureau~\cite{censuscite}, operate in the \textit{batch model}: that is, they collect their input all at once and produce a single output.
However, in many situations, the data are collected over time, and the published statistics need to be updated regularly. An example of such a statistic is the number of COVID-19 cases. To investigate privacy in these situations, Dwork et al.~\cite{DworkNPR10} and Chan et al.~\cite{ChanSS10} introduced the {\em continual release model}.
In this model, a mechanism receives a sensitive dataset as a stream of $T$ input records and produces, after receiving each record, an  accurate output on the obtained inputs. Intuitively, the mechanism is differentially private if releasing the entire vector of $T$ outputs satisfies differential privacy.
The main challenge for privacy is that each individual record contributes to outputs at multiple time steps.

Dwork et al.~\cite{DworkNPR10} and Chan et al.~\cite{ChanSS10} considered the problem of computing summation in the continual release model when each record consists of one bit. They designed a continual release mechanism, called the {\em binary tree mechanism}, that achieves (additive) error $O(\log^2 T)$  %over $\eps$, but we haven't discussed privacy parameters yet.
for this problem.
%only polylogarithmically larger in $T$ than that of the best batch algorithm for this problem. 
%As far as we are aware, all s
Dwork et al.~\cite{DworkNPR10} also showed that an error of $\Omega(\log T)$ is necessary to privately release all running sums. (Further related work is discussed in \Cref{sec:related-work}.)

\subsection{Our Contributions}

We ask what price 
\ifnum\pods=0
differentially private 
\else
DP
\fi
algorithms must pay in accuracy to solve a problem in the continual release model instead of the batch model.
%We ask what price one must pay in the accuracy of differentially private algorithms to solve a problem in the continual release model.We compare continual release mechanisms to {\em batch} algorithms that receive their input all at once and produce a single output.
%
% What price in accuracy of differentially private algorithms do we have to pay for continual release? 
 The largest previously known gap in accuracy between the two models is logarithmic in $T$, exhibited by the result of~\cite{DworkNPR10} on summation.
%The only hardness result in the continual release model we are aware of are the $\Omega(\log T)$ lower bound on the accuracy of mechanisms for summation of binary variables by Dwork et al.~\cite{DworkNPR10} and the lower bound of Bolot et al.~\cite{BolotFMNT13}. 
We show that for two fundamental problems, which are related to summation and widely studied in the batch model, the gap is exponentially larger.

In the first problem, called $\maxsum{}$,
each input consists of $d$ binary attributes and the goal is to approximate the maximum of the attribute sums. We define the error of a mechanism as the maximum error over all the time steps. For $\maxsum{}$, the error at each time step is the absolute value of the difference between the true answer and the output of the mechanism at that time step. 
The second problem, $\selection{}$, is the ``argmax” version of $\maxsum{}$: the goal is to find the index of the largest attribute sum. The error at a particular time step for this problem is the absolute difference between the maximum sum and the attribute sum at the index returned by the mechanism at that time step. 
% As for $\maxsum{}$, a trivial mechanism (that outputs an arbitrary index) has error $O(T)$.
%
Both problems are abstractions of practically relevant tasks.
For instance, if the data collected by a public health agency (e.g., the US CDC) consists of records indicating which of $d$ medical conditions each person suffers from, then $\maxsum{}$ corresponds to the number of cases of the most common condition that occurred so far, and $\selection{}$ corresponds to the name of this condition.
Algorithms for these tasks are key ingredients in differentially private solutions to more complex problems such as synthetic data generation \cite{HardtLM12} and high-dimensional optimization~\cite{TalwarTZ15}.
We prove tight bounds on the error for these two problems in the continual release model in terms of the parameters $T$, called the {\em time horizon}, and $d$, called the dimension, discussed above,
%in terms of the time horizon $T$, dimension $d$, and 
as well as the privacy parameter~$\eps$. 

To provide a comparison to the continual release model, we assume here that algorithms in the batch model get input datasets of size $T$. Intuitively, a batch algorithm $\alg$ is $(\eps,\delta)$-differentially private if, for all datasets $\dstreamvec$ and $\dstreamvec'$ that differ in one record, all events under the distributions $\alg(\dstreamvec)$ and $\alg(\dstreamvec')$ have similar probabilities. In the case of $\delta=0$ (also referred to as {\em pure differential privacy)}, these probabilities differ by at most a factor of $e^\eps$.  In the case of $\delta>0$ (referred to as {\em approximate differential privacy}), if these probabilities are $p$ and $p'$, they must satisfy $p\leq e^\eps \cdot p'+\delta.$ (See Definition~\ref{def:differentially private}). 
%\pj{To provide a comparison to the continual release model, we assume that algorithms in the batch model get input datasets of size $T$.} 
To give a meaningful privacy guarantee, the parameter $\delta$ has to be small: in our case, $\delta=o(\eps/T)$. 
%{\color{cyan} \sout{where $T$ denotes the time horizon or, in the batch model, the size of the input dataset}}. 
For continual release mechanisms, we study \emph{event-level} privacy, where each user's data appears in a single record, as opposed to {\em user-level} privacy, where a user's data could be distributed over multiple records. (See~\cite{DworkNPR10} for the discussion of these two variants.)

%Given a mechanism $\mech$, define $\alg_\mech$ to be the batch model algorithm that receives an input dataset $\dstreamvec$, runs $\mech$ on stream $\dstreamvec$, and returns the output stream $\avec$ of $\mech$. The mechanism $\mech$ is {\em $(\epsilon, \delta)$-differentially private (DP) in the \CR{}} if $\alg$ is $(\epsilon, \delta)$-DP in the batch model. 

We demonstrate a strong separation between the continual release and the batch models. For approximate differential privacy, we show that when $d$ is sufficiently large, $\maxsum{d}$ and $\selection{d}$ require $\tilde \Omega(T^{1/3})$  and $\tilde \Omega\bparen{(\frac{T}{\log d})^{1/3}}$ error blowup, respectively, in the continual release model compared to the batch setting. For pure differential privacy, the blowup (when $d$ is large) is $\tilde \Omega(T^{1/2})$ for $\maxsum{}$ and $\tilde \Omega(\frac{T^{1/2}}{\log^{1/2} d})$ for $\selection{}$. 

\ifnum\pods=0
\begin{table}[t] 
\else
\begin{table*}[t]
\vspace{4pt}
\fi
    \begin{center} 
\begin{tabular}{|c||l|l|l| } 
      \hline
 & Approximate DP ($\delta>0)$ 
 & Pure DP ($\delta=0)$ 
 & Reference
 \\
 \hline
 \hline
 \begin{tabular}{c} \vspace{5pt}\\$\maxsum{}$\\ \end{tabular}  
 & 
 $\tilde \Omega\Big(\min\Big\{\sqrt[3]{\frac T {\eps^2}},\frac{\sqrt{d}}\eps,T\Big\}\Big)$
&
$\tilde \Omega\Big(\min\Big\{\sqrt{\frac T \eps},\frac{d}\eps,T\Big\}\Big)$
& 
\ifnum\pods = 0
\small
\else
\fi
{Thm.~\ref{thm:main_maxsum}}

 \\ 
 & 
 $\tilde O\Big(\min\Big\{\sqrt[3]{\frac T {\eps^2}},\frac{\sqrt{d}\  { \color{blue}\text{polylog}(T)}}\eps,T\Big\}\Big)$
 &
$\tilde O\Big(\min\Big\{\sqrt{\frac T \eps},\frac{d\ { \color{blue}\text{polylog}(T)}}\eps,T\Big\}\Big)$ 
& 
\ifnum\pods = 0
\small
\else
\fi
{Cor.~\ref{cor:upper-bounds},~\ref{cor:pure_dp}}
\\
\hline
\begin{tabular}{c} \vspace{5pt}\\$\selection{}$\\ \end{tabular} & 
$\tilde \Omega\Big(\min\Big\{\sqrt[3]{\frac {T\log^2 d} {\eps^2}},\frac{\sqrt{d}}\eps,T\Big\}\Big)$ &
$\tilde \Omega\Big(\min\Big\{\sqrt{\frac {T\log d} {\eps}},\frac{d}\eps,T\Big\}\Big)$ 
&
\ifnum\pods = 0
\small
\else
\fi
{Thm.~\ref{thm:selection-LB}}

\\ 
&$\tilde O\Big(\min\Big\{\sqrt[3]{\frac {T\log^2 d} {\eps^2}},\frac{\sqrt{d}\ {\color{blue}\text{polylog}(T)}}\eps,T\Big\}\Big)$
&$\tilde O\Big(\min\Big\{\sqrt{\frac {T\log d} {\eps}},\frac{d\ { \color{blue}\text{polylog}(T)}}\eps,T\Big\}\Big)$
&
\ifnum\pods = 0
\small
\else
\fi
{Cor.~\ref{cor:upper-bounds},~\ref{cor:pure_dp}}
\\
 \hline
\end{tabular}
\end{center}
\caption{Our results on the error of $(\eps,\delta)$-DP mechanisms in the continual release model. The corresponding upper and lower bounds differ only in the $\text{polylog}(T)$ terms, highlighted in blue. For approximate differential privacy, the lower bounds apply when $\delta=o(\eps/T)$, and the upper bounds apply when $\delta>\text{poly}\bparen{\tfrac 1 T}$.}    \label{tab:results}
\ifnum\pods=1
\end{table*}
\else
\end{table}
\fi

Our results are summarized in Table~\ref{tab:results}.
To put our bounds in context, observe that for both problems we consider, there is a trivial algorithm that ignores its data, always outputs the same value and has error at most $T$ (since each attribute sum is an integer between 0 and $T$).
Our bounds on the error should be contrasted with the error achievable by $(\eps,0)$-differentially private algorithms in the batch model: $O(\frac 1\eps)$ for $\maxsum{}$, and $O\paren{\frac{\log d}\eps}$ for $\selection{}$. The former is obtained by an instantiation of the Laplace mechanism from~\cite{dwork2006calibrating} and the latter---by an instantiation of the exponential mechanism of McSherry and Talwar~\cite{McTalwar}.

 We obtain our lower bounds by reductions from problems in the batch model. The key is to consider tasks for which multiple instances of the same base problem on one dataset need to be solved. For $\maxsum{}$, the corresponding task in the batch model is to output all marginals. Each marginal can be thought of as an instance of computing the (appropriately rescaled) sum of values in the corresponding coordinate. For $\selection{}$, the task in the batch model is based on solving independent instances of finding the largest marginal, each on its own subset of coordinates. We use the lower bounds for batch algorithms for these problems by Bun et al.~\cite{bunUV18}, Hardt and Talwar~\cite{HardtT10}, and Steinke and Ullman~\cite{US17}.

Each of our lower bounds is the minimum of three terms, corresponding to different parameter regimes.
Our lower bounds are matched (up to polylogarithmic factors in $T$ and $1/\delta$), in each regime, by two simple mechanisms and one trivial mechanism. The trivial mechanism always outputs an arbitrary value in the right range. The first simple mechanism is based on recomputing the value of the desired statistic (e.g., $\maxsum{}$) at regular intervals and providing the same answer until it is recomputed again. The second simple mechanism uses the binary tree mechanism to track all $d$ coordinates separately and takes the maximum (or, in the case of $\selection{}$, argmax) of the noisy values. The guarantees of these mechanisms for $\maxsum{}$, $\selection{}$, and general functions of sensitivity 1 are stated in Section~\ref{sec:upper-bounds}.
%
% Specifically, we obtain the following upper bounds for $\maxsum{}$. (\sr{They differ from the lower bounds only}
% %The only difference between these and the lower bounds are 
% in the $\text{polylog}(T)$ terms, highlighted in blue:)
% \begin{align*}
% &\tilde O\Big(\min\Big\{\sqrt[3]{\frac T {\eps^2}},\frac{\sqrt{d}\  { \color{blue}\text{polylog}(T)}}\eps,T\Big\}\Big) &  & \text{when $\delta>\text{poly}\bparen{\tfrac 1 T}$};\\
% &\tilde O\Big(\min\Big\{\sqrt{\frac T \eps},\frac{d\ { \color{blue}\text{polylog}(T)}}\eps,T\Big\}\Big) &&\text{when $\delta=0$}.
% \end{align*}
% For $\selection{}$, we obtain: 
% \begin{align*}
% &\tilde O\Big(\min\Big\{\sqrt[3]{\frac {T\log^2 d} {\eps^2}},\frac{\sqrt{d}\ { \color{blue}\text{polylog}(T)}}\eps,T\Big\}\Big) &  & \text{when $\delta>\text{poly}\bparen{\tfrac 1 T}$};\\
% &\tilde O\Big(\min\Big\{\sqrt{\frac {T\log d} {\eps}},\frac{d\ { \color{blue}\text{polylog}(T)}}\eps,T\Big\}\Big) &&\text{when $\delta=0$}.
% \end{align*}
%
Together, our mechanisms and our lower bounds characterize the error for $\maxsum{}$ and $\selection{}$ up to polylogarithmic factors in $T$ and $1/\delta$ in all regimes.

Our lower bounds apply to the original continual release model of Dwork et al.~\cite{DworkNPR10} and Chan et al.~\cite{ChanSS10}. In this model, which we refer to as the {\em nonadaptive} setting, privacy is defined  for streams fixed in advance. However, our matching upper bounds hold even 
when privacy and accuracy are  required % It is not just privacy that changes, the accuracy requirement changes, too.
for adaptively selected streams. In the adaptive version of the model, each record in the stream is chosen by an adversary after it sees all the answers of the mechanism from the prior time steps. This model gives more power to the adversary and therefore places more stringent requirements on privacy and accuracy.
This model may be of independent interest.

%%%%%%%%%%%%%%%%%%%%%%%%%%% Related Work %%%%%%%%%%%%%%%%%%%%%%%%%%%%%%
\subsection{Further Related Work}\label{sec:related-work} 
\paragraph{Event-level privacy} Bolot et al.~\cite{BolotFMNT13} and Perrier et al.~\cite{PerrierAK19} extended the tree mechanism of Dwork et al.~\cite{DworkNPR10} to work for weighted sums with exponentially decaying coefficients and for sums of bounded real values, respectively.
%Bolot et al.~\cite{BolotFMNT13} extended the binary tree mechanism to work for weighted sums with exponentially decaying coefficients, Perrieret al.~\cite{PerrierAK19} extended it to work for summations of bounded real values, and
Song et al.~\cite{SongLMVC18} generalized the model to graph data and obtained a continual release mechanism %with error $O(T^{3/2})$
for graph statistics, such as the degree distribution and subgraph counts, on bounded degree graphs. Fichtenberger et al.~\cite{FichtenHO21} studied a variety of other graph problems in the continual release setting, including minimum cut and densest subgraph. Differentially private online learning is investigated in a sequence of works~\cite{jainkt12, SmithT13, AgarwalS17} that use the summation primitive developed by Dwork et al.\ to obtain sublinear regret guarantees for many hypothesis classes. The adaptive continual release model arises implicitly in those works, but to our knowledge it was not formulated explicitly. 
Cardoso and Rogers~\cite{CardosoR21}  study, among other problems, SumSelect (called \textit{top-1 selection with unrestricted $\ell_0$ sensitivity} in their work) in the continual release model. Their focus is on empirical performance on streams that arise in practice, in which the index of the largest sum changes seldom. The recomputation-based algorithm we present 
%in Section XYZ [Add a reference in the full version.]
for SumSelect can be seen as a special case of their KnownBase algorithm. They evaluate the accuracy of the algorithm empirically whereas our work provides theoretical bounds on the error.
\ifnum\pods=0
One of the contributions of~\cite{CardosoR21} is making the algorithms work in a more restrictive computational model, in which the algorithm only stores the current values of the sums at any given time step and the seed of a pseudorandom function. The algorithms we present here can also be implemented in their model using the techniques in their paper.
\fi

\paragraph{User-level differential privacy} User-level privacy in the \emph{continual release model} was first studied by Dwork et al.~\cite{DworkNPR10} and Chan et al.~\cite{ChanSS10}. \emph{User-level} privacy is more stringent than \emph{event-level} privacy, so the lower bounds in our paper apply directly to that model. Even though, in general, event-level privacy does not imply user-level privacy, the recomputation technique used in some of our algorithms gives \emph{user-level} privacy whenever the mechanism employed for the recomputations is user-level private.
%and Nath and Rastogi~\cite{RastogiNath10}  who design mechanisms for releasing time series data.

\paragraph{Pan-Privacy}
Pan-privacy, defined by Dwork et al.~\cite{DworkNPRY10}, is a %Related to what? We would have to say ``to continual observation''. But I don't think it is necessary, since we are only talking about related work. The fact that is a privacy model is clear from the name.
 model that protects against intrusions into the memory of the algorithm as it processes a stream. In pan-privacy, as in continual release, the input is presented as a stream. However, the requirement of pan-privacy is orthogonal to that of continual release; see \cite{DworkNPRY10} for details.

\ifnum \pods=1
\subsection{Organization}\label{sec:organization}
In Section~\ref{sec:prelims}, we define differential privacy and the continual release model with nonadaptively chosen inputs, and state the problems we consider. The remaining preliminaries can be found in Appendix~\ref{app:preliminaries}. Sections~\ref{sec:maxsum-lowerbound}--\ref{sec:upper-bounds} present our technical results and the definition of the adaptive continual release model. All omitted proofs appear in the appendix. 
\fi

%%%%%%%%%%%%%%%%%%%%%%%%%%%%%%%%%%%%%%%%%%%%%%%%%%%%%%%%%%%%%%%%%%%%%%%
%                            DEFINITIONS
%%%%%%%%%%%%%%%%%%%%%%%%%%%%%%%%%%%%%%%%%%%%%%%%%%%%%%%%%%%%%%%%%%%%%%%

\section{Definitions}\label{sec:prelims}

%%%%%%%%%%%%%%%%%%%%%%%%%% Differential Privacy %%%%%%%%%%%%%%%%%%%%%%%%%%%%%%%%
\subsection{Preliminaries on Differential Privacy}
We first introduce the notion of $(\epsilon, \delta)$-indistinguishability.
\begin{definition}[$(\epsilon, \delta)$-Indistinguishability]
Random variables $R_1$ and $R_2$  over the same outcome space $\mathcal{Y}$ are  {\em $(\epsilon, \delta)$-indistinguishable} (denoted $R_1 \approx_{\eps, \delta} R_2$) if for all subsets $S \subseteq \mathcal{Y}$,  
\ifnum\pods=0
the following hold:
\fi
\begin{align*}
\Pr[R_1 \in S] \leq e^{\epsilon} \Pr[R_2 \in S] + \delta;\\
\Pr[R_2 \in S] \leq e^{\epsilon} \Pr[R_1 \in S] + \delta.
\end{align*}
\end{definition}

A dataset $\vec{\dstream} = (\dstream_1, \dots, \dstream_n) \in \X^n$ is a vector of elements, called {\em records}, from a universe $\X$. Two datasets are {\em neighbors} if they differ in one record (i.e., one coordinate). Informally, differential privacy requires that an algorithm's output distributions are similar on all pairs of neighboring datasets. In the batch model, the algorithm receives datasets as one batch as opposed to in an online fashion.

\begin{definition}[Differential Privacy in Batch Model~\cite{dwork2006calibrating,DworkKMMN06}]\label{def:differentially private} A randomized algorithm $\alg: \X^n \rightarrow \mathcal{Y}$ is {\em $(\eps, \delta)$-differentially private (DP)} if for every pair of neighboring datasets $\vec{\dstream}, \vec{\dstream}'\in \X^n$,
 \begin{equation*}
    \alg(\vec{\dstream}) \approx_{\eps, \delta}   \alg(\vec{\dstream}').
 \end{equation*}
The case $\delta=0$ is referred to as {\em pure} differential privacy, whereas the case $\delta>0$ is called  {\em approximate} differential privacy.
 \end{definition} 
 %Sometimes $(\eps,0)$-differential privacy is called {\em pure} differential privacy whereas the general notion is called {\em approximate} differential privacy.

\ifnum\pods=0
 Differential privacy protects groups of individuals.

\begin{lemma}[Group Privacy~\cite{dwork2006calibrating}]\label{lem:group_privacy} Every $(\eps, \delta)$-DP algorithm \alg is $\left(\ell \eps, \delta' \right)$-DP for groups of size $\ell$, where $\delta' = \delta\frac{e^{\ell \eps} -1}{e^\eps-1}$; that is, for all datasets $\vec{\dstream}, \vec{\dstream}'$ such that $\|\vec{\dstream} - \vec{\dstream}' \|_0 \leq \ell$,
\begin{equation*}
    \alg(\vec{\dstream}) \approx_{\ell \eps, \delta'}   \alg(\vec{\dstream}').
\end{equation*}
\end{lemma}

Differential privacy is closed under post-processing.
\begin{lemma}[Post-Processing~\cite{dwork2006calibrating,BunS16}]\label{prelim:postprocess} If $\alg$  is an $(\eps, \delta)$-DP algorithm with output space $\mathcal{Y}$ and $\mathcal{B}$  is a randomized map from $\mathcal{Y}$ to $\mathcal{Z}$, then the algorithm $\mathcal{B} \circ \alg$ is $(\eps, \delta)$-DP.
\end{lemma}

\begin{definition}[Sensitivity] Let $f: \X^n \rightarrow \R^m$ be a function. Its $\ell_1$-sensitivity is
\begin{equation*}
    \max_{\text{neighbors } \vec{\dstream}, \vec{\dstream}' \in \X^n} \|f(\vec{\dstream}) - f(\vec{\dstream}')\|_1.
\end{equation*}
To define $\ell_2$-sensitivity, we replace the $\ell_1$ norm with the $\ell_2$ norm.
\end{definition}

Our algorithms use the standard Laplace mechanism to ensure differential privacy. 

\begin{definition}[Laplace Distribution] The Laplace distribution with parameter $b$ and mean $0$, denoted  $\Lap(b)$,
has probability density
\begin{equation*}
    h(r) = \frac{1}{2b}e^{-\frac{|r|}{b}} \text{ for all $r \in \mathbb{R}$}.
\end{equation*}
\end{definition}

\begin{lemma}[Laplace Mechanism]\label{prelim:laplace_dp} Let $f : \X^n \rightarrow \mathbb{R}^m$ be a function with $\ell_1$-sensitivity at most $\Delta_1$. Then the Laplace mechanism is algorithm
\begin{equation*}
    \alg_f(\vec{\dstream}) = f(\vec{\dstream}) + (Z_1, \ldots, Z_m),
\end{equation*}
where $Z_i \sim \Lap\left(\frac{\Delta_1}{\eps}\right)$. Algorithm $\alg_f$ is $(\eps, 0)$-DP.
\end{lemma}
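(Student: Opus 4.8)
The plan is to prove the pointwise density-ratio bound for the output distributions on neighboring inputs and then integrate. Fix neighbors $\vec{\dstream}, \vec{\dstream}' \in \X^n$ and write $a = f(\vec{\dstream})$ and $a' = f(\vec{\dstream}')$. Since each coordinate $Z_i$ is drawn independently from $\Lap(\Delta_1/\eps)$, the random variable $\alg_f(\vec{\dstream})$ has a density on $\R^m$, namely the product $h_{\vec{\dstream}}(r) = \prod_{i=1}^m \frac{\eps}{2\Delta_1} \exp\!\big(-\tfrac{\eps}{\Delta_1}|r_i - a_i|\big)$, and similarly for $\vec{\dstream}'$ with $a'$ in place of $a$. (If $\Delta_1 = 0$ then $f$ is constant on neighbors, so the two output distributions are identical and the claim is trivial; assume $\Delta_1 > 0$ henceforth.)

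First I would compute the ratio of densities at an arbitrary point $r \in \R^m$:
\[
\frac{h_{\vec{\dstream}}(r)}{h_{\vec{\dstream}'}(r)} = \prod_{i=1}^m \exp\!\left(\frac{\eps}{\Delta_1}\big(|r_i - a_i'| - |r_i - a_i|\big)\right) = \exp\!\left(\frac{\eps}{\Delta_1}\sum_{i=1}^m \big(|r_i - a_i'| - |r_i - a_i|\big)\right).
\]
Then I would apply the triangle inequality coordinatewise, $|r_i - a_i'| - |r_i - a_i| \le |a_i - a_i'|$, to get $\sum_{i=1}^m \big(|r_i - a_i'| - |r_i - a_i|\big) \le \|a - a'\|_1 = \|f(\vec{\dstream}) - f(\vec{\dstream}')\|_1 \le \Delta_1$, where the last step uses the $\ell_1$-sensitivity bound. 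Hence $h_{\vec{\dstream}}(r) \le e^{\eps} h_{\vec{\dstream}'}(r)$ for every $r$.

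Finally I would integrate this pointwise inequality: for any measurable $S \subseteq \R^m$,
\[
\Pr[\alg_f(\vec{\dstream}) \in S] = \int_S h_{\vec{\dstream}}(r)\, dr \le e^{\eps} \int_S h_{\vec{\dstream}'}(r)\, dr = e^{\eps} \Pr[\alg_f(\vec{\dstream}') \in S].
\]
By symmetry of the argument (swapping the roles of $\vec{\dstream}$ and $\vec{\dstream}'$) the reverse inequality holds as well, so $\alg_f(\vec{\dstream}) \approx_{\eps, 0} \alg_f(\vec{\dstream}')$, which is exactly $(\eps, 0)$-differential privacy. There is no real obstacle here — the only point requiring a moment's care is the reduction from a pointwise density bound to the set-wise probability bound in the definition of $(\eps,\delta)$-indistinguishability, and the trivial handling of the degenerate case $\Delta_1 = 0$; everything else is the triangle inequality.
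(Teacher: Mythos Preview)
Your proof is correct and is the standard argument. The paper itself does not supply a proof of this lemma; it is stated as a preliminary fact with a citation to \cite{dwork2006calibrating}, so there is nothing to compare against beyond noting that your density-ratio computation is exactly the textbook derivation.
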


\begin{lemma}[Exponential Mechanism \cite{McTalwar}]\label{lem:expmech}
Let $L$ be a set of outputs and $g: L \times \X^n \to \mathbb{R}$ be a function that measures the quality of each output on a dataset. Assume that for every $m \in L$, the function $g(m,.)$ has $\ell_1$-sensitivity at most $\Delta$. Then, for all $\eps,n > 0$ and for all datasets $\dset \in \X^n$, there exists an $(\eps, 0)$-DP mechanism that outputs an element $m\in L$ such that, for all $a>0$, we have
\begin{equation*}
    \Pr[\max_{i \in [L]} g(i,\dset) -  g(m,\dset) \geq 2\Delta \frac{(\ln |L| + a)}{\eps}] \leq e^{-a}. 
\end{equation*}
\end{lemma}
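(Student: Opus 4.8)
The plan is to exhibit the mechanism of McSherry and Talwar explicitly and analyze its privacy and utility directly. Define the mechanism $\mathcal{M}$ that, on input $\dset$, outputs each $m \in L$ with probability proportional to $\exp\bigl(\tfrac{\eps}{2\Delta}\, g(m,\dset)\bigr)$; that is, with probability $\exp\bigl(\tfrac{\eps}{2\Delta} g(m,\dset)\bigr)\big/ \sum_{i \in L} \exp\bigl(\tfrac{\eps}{2\Delta} g(i,\dset)\bigr)$. Since $L$ is finite, the normalizing sum is finite and $\mathcal{M}$ is well defined. The two things to verify are that $\mathcal{M}$ is $(\eps,0)$-DP and that it satisfies the stated accuracy bound.

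For privacy, fix neighboring datasets $\dset, \dset'$ and an output $m \in L$, and bound the ratio $\Pr[\mathcal{M}(\dset) = m]\big/\Pr[\mathcal{M}(\dset') = m]$ by factoring it into the ratio of the numerators times the ratio of the normalizing sums (with $\dset'$ on top). Because $g(m,\cdot)$ has $\ell_1$-sensitivity at most $\Delta$, we have $|g(m,\dset) - g(m,\dset')| \le \Delta$, so the numerator ratio is at most $\exp(\eps/2)$. Applying the same sensitivity bound termwise to the normalizing sum shows $\sum_{i} \exp\bigl(\tfrac{\eps}{2\Delta}g(i,\dset')\bigr) \le \exp(\eps/2) \sum_{i} \exp\bigl(\tfrac{\eps}{2\Delta} g(i,\dset)\bigr)$, contributing another factor of $\exp(\eps/2)$. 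Multiplying, the ratio is at most $e^{\eps}$; by symmetry in $\dset$ and $\dset'$ (and since this holds for every singleton $\{m\}$, hence every event $S \subseteq L$), $\mathcal{M}$ is $(\eps,0)$-DP.

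For accuracy, write $\mathrm{OPT} = \max_{i \in L} g(i,\dset)$ and set $t = 2\Delta(\ln|L| + a)/\eps$, the claimed error bound. If no element $m$ satisfies $g(m,\dset) \le \mathrm{OPT} - t$ the bad event has probability $0$; otherwise, by a union bound, the probability that $\mathcal{M}(\dset)$ returns some such $m$ is at most the total unnormalized mass of those elements divided by the normalizer. The numerator of this quantity is at most $|L|\exp\bigl(\tfrac{\eps}{2\Delta}(\mathrm{OPT} - t)\bigr)$, and the denominator is at least $\exp\bigl(\tfrac{\eps}{2\Delta}\mathrm{OPT}\bigr)$ (keeping only the term for a maximizer). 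Hence the probability is at most $|L|\exp\bigl(-\tfrac{\eps t}{2\Delta}\bigr) = |L|\exp\bigl(-(\ln|L| + a)\bigr) = e^{-a}$, which is exactly the claim.

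There is no real obstacle in this argument; the only points requiring care are (i) tracking, in the privacy proof, that the sensitivity $\Delta$ affects both the numerator and the normalizing sum, so that the scaling $\tfrac{\eps}{2\Delta}$ in the exponent is precisely what is needed for the ratio to be $e^{\eps}$ rather than $e^{2\eps}$; and (ii) in the utility proof, lower-bounding the normalizer by the single term corresponding to an optimal output rather than attempting to estimate it exactly.
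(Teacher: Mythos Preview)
Your proof is correct and is the standard argument for the exponential mechanism. The paper does not actually prove this lemma; it is stated as a preliminary result cited from McSherry and Talwar, so there is no ``paper's own proof'' to compare against. Your write-up reproduces exactly the canonical privacy and utility analysis from the original source.
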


\begin{definition}[Gaussian Distribution] The Gaussian distribution with parameter $\sigma$ and mean 0, denoted $\Gauss(0,\sigma^2)$, has probability density
\begin{equation*}
    h(r) = \frac{1}{\sigma \sqrt{2\pi}} e^{-\frac{r^2}{2\sigma^2}}  \text{ for all $r \in \mathbb{R}$}.
\end{equation*}
\end{definition}

%%%%%%%%%%%%%%%%%%% rho zCDP prelims %%%%%%%%%%%%%%%%
\subsection{Preliminaries on $\rho$-zCDP}\label{sec:zCDP}

This section contains preliminaries about  ``zero-concentrated differential privacy" (zCDP). The difference between zero-concentrated differential privacy and $(\eps,\delta)$-differential privacy is that zCDP requires output distributions on all pairs of neighboring datasets to be $\rho$-close (\Cref{def:rho-indistinguishable}) instead of $(\eps,\delta)$-indistinguishable.
In \Cref{sec:upper-bounds} we analyse the privacy of our upper bounds in terms of zCDP and then use the fact that zCDP implies $(\eps,\delta)$-differential privacy (\Cref{prop:CDPtoDP}) to compare our upper and lower bounds.

\begin{definition}[R\'enyi Divergence \cite{Renyi61}]
Let $Q$ and $Q'$ be distributions on $\mathcal{Y}$. For $\xi \in (1,\infty)$, the R\'enyi divergence of order $\xi$ between $Q$ and $Q'$(also called the $\xi$-R\'enyi Divergence) is defined as
\begin{align}
    D_{\xi}(Q \| Q') = \frac{1}{\xi-1} \log\left( \E_{r \sim Q'} \left[ \left(\frac{Q(r)}{Q'(r)}\right)^{\xi-1} \right]  \right).
\end{align}
Here $Q(\cdot)$ and $Q'(\cdot)$ denote either probability masses (in the discrete case) or probability densities (when they exist). More generally, one can replace  $\frac{Q(.)}{Q'(.)}$ with the the Radon-Nikodym derivative of $Q$ with respect to $Q'$.
\end{definition}

\begin{definition}[$\rho$-Closeness]\label{def:rho-indistinguishable}
Random variables $R_1$ and $R_2$ over the same outcome space $\mathcal{Y}$ are  {\em $\rho$-close} (denoted $R_1 \simeq_{\rho} R_2$) if for all $\xi \in (1,\infty)$, 
\begin{align*}
D_{\xi}(R_1\|R_2) \leq \xi\rho \text{ and }  D_{\xi}(R_2\|R_1) \leq \xi\rho,
\end{align*}
where $D_{\xi}(R_1\|R_2)$ is the $\xi$-R\'enyi divergence  between the distributions of $R_1$ and $R_2$.
\end{definition}

\begin{definition}[zCDP in Batch Model~\cite{BunS16}]
A randomized batch algorithm $\alg : \X^n \to \mathcal{Y}$ is $\rho$-zero-concentrated differentially private ($\rho$-zCDP), if, for all neighboring datasets $\vec{\dset},\vec{\dset}' \in \mathcal{X}^n$,
$$\alg(\vec{\dset}) \simeq_{\rho} \alg(\vec{\dset'}).$$
\end{definition}

One major benefit of using zCDP is that this definition of privacy admits a clean composition result. We use it when analysing the privacy of the algorithms in \Cref{sec:upper-bounds}.

\begin{lemma}[Composition \cite{BunS16}] \label{lem:cdp_composition}
Let $\alg : \mathcal{X}^n \to \mathcal{Y}$ and $\alg' : \mathcal{X}^n \times \mathcal{Y} \to \mathcal{Z}$ be batch algorithms. Suppose $\alg$ is $\rho$-zCDP and $\alg'$ is $\rho'$-zCDP. Define batch algorithm $\alg'' : \mathcal{X}^n \to \mathcal{Y} \times \mathcal{Z}$ by $\alg''(\vec{\dset}) = \alg'(\yvec,\alg(\yvec))$. Then $\alg''$ is $(\rho+\rho')$-zCDP.
\end{lemma}

The \emph{Gaussian mechanism} is used in \Cref{sec:upper-bounds}. It estimates a real-valued function on a database by adding Gaussian noise to the value of the function. %The following lemma shows that the Gaussian mechanism is zCDP.

\begin{lemma}[Gaussian Mechanism \cite{BunS16}] \label{prop:gaussian-mech}
Let $f : \X^n \to \mathbb{R}$ be a function with $\ell_2$-sensitivity at most $\Delta_2$. Let $\alg$ be the batch algorithm 
%$\alg : \mathcal{X}^n \to \mathbb{R}$ 
that, on input $\yvec$, releases a sample from $\mathcal{N}(f(\yvec), \sigma^2)$. Then $\alg$ is $(\Delta_2^2/2\sigma^2)$-zCDP.
\end{lemma}

The final lemma in this section relates zero-concentrated differential privacy to $(\eps,\delta)$-differential privacy.

\begin{lemma}[Conversion from zCDP to DP \cite{BunS16}]\label{prop:CDPtoDP}
For all $\rho,\delta > 0$, if batch algorithm $\alg$ is $\rho$-zCDP, then $\alg$ is $(\rho+2\sqrt{\rho \log(1/\delta)},\delta)$-DP.
\end{lemma}

%http://www.gautamkamath.com/writings/gaussian_max.pdf
%\begin{lemma}[Maximum of independent Gaussian random variables \cite{KamathGauss}]\label{lem:gaussmax}
%Let $Z_1, \dots, Z_m$ represent $m$ independent Gaussian random variables with mean $0$ and variance $\sigma^2$. Then, 
%$$\E \left[ \max(Z_1, \dots, Z_m) \right] \leq \sqrt{2} \sigma \sqrt{\log m}$$
%\end{lemma}
\else
\fi

%%%%%%%%%%%%%%%%%%%%%%%%%%%%%%% Nonadaptive Continual Release Model %%%%%%%%%%%%%%%%%%%%%%%%%%%%%%%%%
\ifnum\pods = 0
\subsection{The Continual Release Model  with Nonadaptively Chosen Inputs}\label{sec:definitions-continual-release}
\else
\subsection{Nonadaptive Continual Release Model}\label{sec:definitions-continual-release}
\fi

A {\em mechanism} in the continual release model \cite{DworkNPR10,ChanSS10} is an algorithm that receives its input  $\vec{\dstream}=(\dstream_1,\dots,\dstream_T) \in \X^T$ as a stream. At each time step  $t\in[T]$, it gets a record $\dstream_t$ and outputs an answer $a_t$. The output stream $(a_1,\dots,a_T)$ is denoted by $\avec$. We use $\xt = (\dstream_1,\dots,\dstream_t)$ for $t\in[T]$ to denote the first $t$ records in a stream $\dstreamvec$ (similarly, $\avec_{[t]}=(a_1,\dots,a_t)$.) The total number of records in the stream, denoted by $T$, is called the  {\em time horizon}. For simplicity, we assume $T$ is known to the mechanism.

We consider two variants of the continual release model. In the {\em nonadaptive} model of \cite{DworkNPR10,ChanSS10}, the input stream $\dstreamvec$ is fixed before the mechanism runs. The {\em adaptive} model, defined in Section~\ref{sec:defs-adaptive}, allows an adversary to choose each input record $\dstream_t$ for $t\in\{2,\dots,T\}$ based on the previous outputs $a_1,\dots,a_{t-1}$ of the mechanism. The adaptive model gives the adversary more power. Therefore, the nonadaptive model provides weaker guarantees in terms of both privacy and accuracy. 
All our lower bounds are for the nonadaptive model and, consequently, imply the same lower bounds for the adaptive model. In contrast, all our algorithmic results are for the adaptive model (and, consequently, they also hold in the nonadaptive model).

We refer to standard algorithms that get their input in one batch and produce one output as {\em batch} algorithms. For clarity, we refer to continual release algorithms as mechanisms.

%%%%%%%%%%%%%%%%%%%%%% Accuracy in Nonadaptive Continual Release Model %%%%%%%%%%%%%%%%%%%%%%%%%%%

\paragraph{Accuracy}  
We start by defining how well a given output approximates the value of a function. We use a notion of error that depends on the function. Given a function $f: \X^*\to \mathcal{Y}$, a dataset $\dstreamvec\in\X^*$, and an answer $a\in \mathcal{Y}$, let $\err_f(\dstreamvec, a)$ be a nonnegative number that quantifies how far off $a$ is from $f(\dstreamvec)$. Specifically, when $\mathcal{Y}=\R^k$,
\begin{align}\label{eq:error-real-valued}
    \err_f(\dstreamvec,a)=\|f(\dstreamvec)-a \|_\infty.
\end{align}
Later (in (\ref{eq:err_sumselect})), we define a different notion of error for the optimization problem \selection{}. 
\ifnum\pods=1
The
\else
Intuitively, the 
\fi
error for an optimization problem corresponds to the deficit in the objective function. %output indices of the attributes.

\begin{definition}[Accuracy of a Mechanism]\label{def:na-accuracy}
In the \CR{}, a mechanism $\mech$ is {\em $(\alpha,T)$-accurate} for $f$  if, for all fixed input streams $\dstreamvec =(\dstream_1,\dots,\dstream_T)$, the maximum error $\err_f(\xt, a_t)$ over the outputs $a_1,\dots,a_T$ of mechanism $\mech$ is bounded by $\alpha$ with high probability, that is,
$$\quad \Pr_{\text{coins of }\mech}\left[ \max_{t\in[T]} \err_f(\xt, a_t)  \leq \alpha\right] \geq \frac{2}{3}.$$ 
\end{definition}

%We use the following notion of error measure several times in this paper. Let $\xt=(\dstream_1,...,\dstream_t)$ denote the vector of the first $t$ records in a data stream $\dstreamvec$. Given a time step $t\in[T]$, a partial data stream $\xt$, and an output $a_t$, let $\err(a_t; \xt)$ denote some measure of \emph{error} for $a_t$ on $\xt$. As an example, for the problem of releasing a function $f$ in the continual release model, an error measure at time step~$t$ could be $\abs{f(\xt) - a_t}$. 

%Next, we define accuracy.\srnote{If we define accuracy as below, we have to say ``with respect to err'' in all theorems. Maybe instead we can parameterize err by $f$?}
%\begin{definition}\label{def:na-acc}
%A mechanism $\mech$ is {\em $(\alpha,T)$-accurate} for $f$ with respect to $\err$ in the \CR{}, if for all fixed input streams $\dstreamvec =(\dstream_1,\dots,\dstream_T)$ the maximum error $\err(\xt, a_t)$ over the outputs $a_1,\dots,a_T$ of mechanism $\mech$ is bounded by $\alpha$ with high probability, that is,
%$$\quad \Pr_{\mech}\left( \max_{t\in[T]} \err(\xt, a_t)  \leq \alpha\right) \geq \frac{2}{3}.$$ 
%\end{definition}

\paragraph{Privacy} Finally, we define privacy in the \CR{}. 
\begin{definition}[Privacy of a Mechanism]\label{def:na-privacy}
Given a mechanism $\mech$, define $\alg_\mech$ to be the batch model algorithm that receives an input dataset $\dstreamvec$, runs $\mech$ on stream $\dstreamvec$, and returns the output stream $\avec$ of $\mech$. The mechanism $\mech$ is {\em $(\epsilon, \delta)$-differentially private (DP) in the \CR{}} if $\alg_{\mech}$ is $(\epsilon, \delta)$-DP in the batch model.
\end{definition}

Definition~\ref{def:na-privacy} refers to \emph{event-level} privacy, where each user's data appears in a single record, as opposed to {\em user-level} privacy, where a user's data could be distributed over multiple records.

%%%%%%%%%%%%%%%%%%%%%%%%%%%% Problem Definitions %%%%%%%%%%%%%%%%%%%%%%%%%%%%%%%
\subsection{Problem Definitions}

%%%%%%%%% MAXSUM DEFINITION %%%%%%%%%%%%
We consider two  functions on datasets, where each record consists of $d$ binary attributes. The first function, $\maxsum d$, returns the maximum attribute sum for the input records. The second function, $\selection d$, returns the index of such a maximum sum.
\begin{definition}\label{def:maxsum}
Let $d\in\N$ and $\X = \zo^d$.
For a dataset $\dstreamvec\in\X^*$ and $j\in[d]$, the {\em $j^{th}$ attribute} of record $\dstream_i$ is its $j^{th}$ coordinate, denoted $\dstream_i[j]$.
Let $t\in\N$ and $\xt\in \X^t$.
The function  $\maxsum{d} : \X^* \to \N$ is
$$\maxsum{d}(\xt) \defeq \max_{j\in[d]}\Big(\sum_{i\in[t]} \dstream_i[j]\Big).$$
The function  $\selection{d} : \X^* \to [d]$ is %defined as follows: for all $t\in\N$ and $\xt\in \X^t$,
$$\selection{d}(\xt) \defeq \argmax_{j\in[d]}\Big(\sum_{i\in[t]} \dstream_i[j]\Big).$$
If multiple indices $j$ attain the maximum sum, the function value is defined to be the smallest such index.
\end{definition}

%%%%%%%%%%%%%%%%%%%% SELECTION DEFINITION %%%%%%%%%%%%%%%%%%
%\subsubsection{$\selection{d}$ Definition}\label{sec:selection-definitions}

%$\selection{d} : \X^t \to [d]$ has input space $\X = \zo^d$, and is defined as follows.

%\begin{definition}[Selection]\label{def:selection}
%Let $\dstream_i[j]$ represent the $j^{th}$ coordinate of input $\dstream_i$.
%$$\selection{d}(\xt) \defeq \argmax_{j\in[d]}\left(\sum_{i\in[t]}^t \dstream_i[j]\right).$$
%\end{definition}

%Let $\mech$ be a mechanism for $\selection{d}$; let the output of $\mech$ at time step~$t$ be $a_t$. We measure the accuracy of $\mech$ using the following error measure $\err$ in \Cref{def:na-acc}. 

We study the accuracy of differentially private algorithms for computing these two functions. 
Our accuracy goal, stated in Definition~\ref{def:na-accuracy}, uses the notion $\err_f$. We define the error $\err_{\maxsum{}}$ as in (\ref{eq:error-real-valued}). For \selection{}, it is defined by:
\begin{equation}\label{eq:err_sumselect}
\err_{\selection{}} (\xt, a_t) = 
%\max_{j\in[d]}\sum_{i\in [t]} x_i[j] 
\maxsum{d}(\xt) -\sum_{i\in[t]} x_i[a_t].
\end{equation}

%%%%%%%%%%%%%%%%%%%%%%%%%%%%%%%%%%%%%%%%%%%%%%%%%%%%%%%%%%%%%%%%%%%%%%%
%                      MAXSUM LOWER BOUND
%%%%%%%%%%%%%%%%%%%%%%%%%%%%%%%%%%%%%%%%%%%%%%%%%%%%%%%%%%%%%%%%%%%%%%%
\section{Lower Bounds for $\maxsum{}$}\label{sec:maxsum-lowerbound}

In this section, we prove \Cref{thm:main_maxsum} that provides strong lower bounds on the accuracy parameter $\alpha$ for any accurate mechanism
for $\maxsum{d}$ in the \CR{}. Our lower bounds match the upper bounds from \Cref{sec:upper-bounds} for $\maxsum{d}$ in the \aCR{} up to logarithmic factors in the time horizon $T$ and the number of coordinates $d$.

%%%%% Lowerbound Theorem Statement 
\begin{theorem}\label{thm:main_maxsum}
For all $\eps \in (0,1], \delta\in[0,1),\alpha \geq 0, d\in\N$, sufficiently large $T\in\N$, and mechanisms $\mech$ in the \CR{} that are $(\varepsilon,\delta)$-differentially private and $(\alpha, T)$-accurate for $\maxsum{d}$, the following statements hold.
\begin{enumerate}
\item If $\delta > 0$ and $\delta = o(\eps/T)$, then 
    \ifnum\pods=1
    \\
    \else
    \fi
    $\alpha = \Omega\Big( \min\left\{ \frac{T^{1/3}}{\eps^{2/3}\log^{2/3}( \eps T)},  \frac{\sqrt{d}}{\eps \log d}, T  \right\} \Big)$.
\item If $\delta = 0$, then $\alpha = \Omega \Big(\min\left\{   \sqrt{\frac{T}{\eps}}, \frac{d}{\eps}, T \right\} \Big)$.
\end{enumerate}
\end{theorem}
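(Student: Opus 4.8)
The plan is to prove both lower bounds by reduction from known batch-model lower bounds for releasing \emph{all one-way marginals} of a static dataset. Concretely, I will use that for $(\eps,\delta)$-DP with $\delta$ sufficiently small (which is guaranteed here, since any dataset we construct has size $n\le T$ and the hypothesis gives $\delta=o(\eps/T)$), every algorithm that releases all $k$ one-way marginal \emph{counts} of a dataset $\yvec\in(\zo^k)^n$ with $\ell_\infty$-error below $\beta$ requires $\beta=\tilde\Omega\bigl(\min\{\sqrt k/\eps,\,n\}\bigr)$ — the fingerprinting-code bound of Bun, Ullman, and Vadhan — while for $(\eps,0)$-DP it requires $\beta=\tilde\Omega\bigl(\min\{k/\eps,\,n\}\bigr)$ — the packing lower bound of Hardt and Talwar, refined by Steinke and Ullman. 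The approximate-DP case (statement~1) uses the former, the pure-DP case (statement~2) the latter.

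Given a mechanism $\mech$ in the nonadaptive continual release model that is $(\eps,\delta)$-DP and $(\alpha,T)$-accurate for $\maxsum d$, together with a choice of $k\le d$ and $n$, I will build a batch algorithm $\alg$ that releases all $k$ one-way marginals of datasets in $(\zo^k)^n$. On input $\yvec$, algorithm $\alg$ runs $\mech$ on a stream whose first $n$ records are the records of $\yvec$, padded with zeros in coordinates $k+1,\dots,d$; after these records the running sum in coordinate $j$ equals the $j$-th marginal count $c_j=\sum_i\dstream_i[j]$. Then $\alg$ feeds a \emph{fixed} (data-independent) sequence of $0/1$ ``boosting'' records engineered so that at $k$ prescribed later time steps the running $\maxsum d$ value equals $c_j$ plus a publicly known offset; reading $\mech$'s outputs at those steps and subtracting the offsets recovers all of $c_1,\dots,c_k$ to within error $O(\alpha)$. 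Since the boosting records do not depend on $\yvec$ and each record of $\yvec$ occupies a single stream position, neighboring datasets map to neighboring streams, so $\alg$ inherits the $(\eps,\delta)$-DP guarantee of $\mech$; and since the stream is a fixed function of $\yvec$, the construction respects the nonadaptive model. Hence the batch bound applies and yields $\alpha=\tilde\Omega(\min\{\sqrt k/\eps,\,n\})$ (resp.\ $\tilde\Omega(\min\{k/\eps,\,n\})$) whenever $T$ is at least the length of the constructed stream.

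The crux — and the step I expect to be the main obstacle — is making the boosting gadget short: its length must be $\tilde O(n\cdot k)$, so that balancing $k$ against $n\approx T/k$ in the batch bound produces the $T^{1/3}$ (resp.\ $T^{1/2}$) exponent. The naive schedule (in phase $j$, add $\Theta(n)$ copies of $e_j$ until coordinate $j$ is the strict maximum) fails, because coordinate $j$ must overtake the still-large boosted value of coordinate $j-1$, forcing the $j$-th offset to be $\Omega(jn)$ and the total length $\Omega(nk^2)$, which would only give $\alpha=\tilde\Omega(T^{1/5})$. The technical heart is therefore a more economical \emph{non-adaptive} boosting schedule — exploiting that a single $0/1$ record boosts many coordinates simultaneously, and arranging the offsets so that coordinate $j$'s running sum climbs across several phases while staying below the current maximum until phase $j$ — that keeps the total length $\tilde O(nk)$; designing it and checking that each $c_j$ is cleanly readable (against the $\alpha$-sized noise in $\mech$'s outputs) is the bulk of the work.

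Finally I will instantiate the parameters. Using the trivial bound $\err_{\maxsum{}}(\cdot,\cdot)\le T$, it suffices to prove the stated $\min$ whenever its first two terms are below $T$. Taking $k=d$ and $n=\tilde\Theta(T/d)$ gives the $\sqrt d/\eps$ (resp.\ $d/\eps$) term; taking $k=\tilde\Theta\bigl((T\eps)^{2/3}\bigr)$ (resp.\ $k=\tilde\Theta(\sqrt{T\eps})$), which is at most $d$ in the relevant regime, and $n=\tilde\Theta(T/k)$, balances the two terms of the batch bound and gives the $T^{1/3}/\eps^{2/3}$ (resp.\ $\sqrt{T/\eps}$) term. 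The polylogarithmic losses in statement~1 arise from the $\delta=o(\eps/T)$ hypothesis of the fingerprinting bound and from the slack in the embedding, while statement~2 comes out clean.
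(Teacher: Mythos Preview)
Your proposal is correct and matches the paper's approach exactly: reduce to the batch one-way-marginals lower bounds (Bun--Ullman--Vadhan for $\delta>0$, Hardt--Talwar for $\delta=0$) via a boosting gadget, then balance $k$ against $n\approx T/k$.

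The one place you are over-anticipating difficulty is the gadget itself. You correctly diagnose that the naive schedule $(e_1)^n,(e_2)^{2n},\dots$ blows up to length $\Theta(nk^2)$, and that the fix is to boost many coordinates per record. But the actual construction is simpler than the ``staircase'' you sketch: after reading off coordinate $j$ (via $n$ copies of $e_j$), just send $n$ copies of the \emph{complement} $\bar e_j=1^d-e_j$. This raises every other coordinate by $n$, so at the end of phase $j$ all coordinates have received exactly $jn$ extra ones and phase $j{+}1$ starts from a level field. Each phase costs $2n$ records, total length $2kn$, and the offset at time $2jn$ is exactly $jn$---no log factors in the embedding at all (the $\log$'s in Item~1 come solely from the Bun--Ullman--Vadhan bound). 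With this gadget in hand your parameter calculation goes through verbatim.
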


$\maxsum{d}$ can be released in the batch model with $\alpha = O(1/\eps)$ via the Laplace mechanism \cite{dwork2006calibrating}. Hence, \Cref{thm:main_maxsum} shows a strong separation between the batch model of differential privacy and the continual release model.

%%%%%%%%%%%%%% 1-way marginal queries in the batch model %%%%%%%%%%%%%%%%

\subsection{1-way Marginal Queries in Batch Model}
To prove our lower bounds for $\maxsum{}$, we reduce from the problem of approximating 1-way marginals in the batch model.
The function $\marginals{d}: \X^* \to [0,1]^d$ maps a dataset $\vec{\dset}$ of any size $n$ to a vector $(q_1(\yvec),\dots,q_d(\yvec)),$ where $q_j$, called the $j^{th}$ marginal, is defined as %
$q_j(\vec{\dset}) = \frac{1}{n}\sum_{i=1}^n \vec{\dset}[j].$
The error $\err_{\marginals{}}$ is defined as in (\ref{eq:error-real-valued}). Next, we define accuracy for batch algorithms.

\begin{definition}[Accuracy of Batch Algorithms]\label{def:batchmodel-accuracy}
Let $\gamma \in [0,1]$, $n,d\in\N$, and $\X = \zo^d$. Let $f: \X^n \to \R^d$ be a function on datasets. Batch algorithm 
$\alg$ 
%\alg: \X^n \to \R^d$ 
is {\em $(\gamma,n)$-accurate} for $f$ if for all datasets $\vec{\dset}\in\X^n$,
$$\Pr_{{\text{coins of }}\alg}\left[ \err_f(\vec{\dset}, \alg(\vec{\dset})) \leq \gamma\right] \geq \frac{2}{3} . $$
%$$\Pr_{{\text{coins of }}\alg}\left(\max_{j \in [d]} \abs{q_j(\vec{\dset}) - \alg(\vec{\dset})_j} \leq \gamma\right) \geq \frac{2}{3} . $$
\end{definition}

We use the lower bounds from \cite{bunUV18,HardtT10} for the problem of estimating $\marginals{d}$ in the batch model. They are stated in Items~1 and 2 of \Cref{lem:oneway-marginals} for
approximate differential privacy and pure differential privacy, respectively. %We use these in the next section to prove Theorem~\ref{thm:main_maxsum}. 
Item~2 in \Cref{lem:oneway-marginals} is a slight modification of the lower bound from \cite{HardtT10} and follows from a simple packing argument.
\begin{lemma}\label{lem:oneway-marginals}
For all $\eps \in (0,1]$, $\delta \in [0,1]$, $\gamma \in (0,1)$, $d,n \in \N$,  and algorithms $\alg$ that are $(\eps, \delta)$-differentially private and $(\gamma,n)$-accurate for $\marginals{d}$, the following statements hold.\\ 
\hspace*{3mm} {\bf 1} {\em (\cite{bunUV18}).} If $\delta > 0$ and $\delta = o(1/n)$, then $n= \Omega\left( \frac{\sqrt{d}}{\gamma\eps  \log d} \right)$.\\
\hspace*{3mm}  {\bf 2} {\em (\cite{HardtT10}).} If $\delta = 0$, then $n = \Omega\left( \frac{d}{\gamma \epsilon} \right)$.
\end{lemma}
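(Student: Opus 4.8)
The statement has two parts, which I would prove by two different classical techniques: a packing argument for the pure-DP bound (Item~2), and a fingerprinting-code / tracing argument for the approximate-DP bound (Item~1).

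\textbf{Plan for Item~2 (pure DP).} The plan is a packing argument in the spirit of \cite{HardtT10}, with a ``short prefix'' twist that supplies the $1/\gamma$ factor. We may assume $\gamma < 1/3$: for $\gamma \in [1/3,1/2)$ the target bound reduces to $n = \Omega(d/\eps)$, which is the $\ell = n$ special case below, and for $\gamma \ge 1/2$ the bound is vacuous since the constant vector $\tfrac12\mathbf 1$ is already $\gamma$-accurate. Set $\ell = \lceil 3\gamma n\rceil$, and for each $v \in \zo^d$ let $D_v \in (\zo^d)^n$ be the dataset whose first $\ell$ records equal $v$ and whose remaining records equal $0^d$. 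Then $\marginals{d}(D_v) = \tfrac{\ell}{n} v$, so for $v \ne v'$ the two marginal vectors are at $\ell_\infty$-distance $\tfrac{\ell}{n} \ge 3\gamma$; hence the closed $\ell_\infty$-balls $B_v = \{a : \|a - \marginals{d}(D_v)\|_\infty \le \gamma\}$ are pairwise disjoint. Crucially, $D_v$ and $D_{v'}$ differ in at most $\ell$ records (all among the first $\ell$), no matter how many coordinates $v$ and $v'$ disagree on, so group privacy (\Cref{lem:group_privacy}) gives $\alg(D_v) \approx_{\ell\eps,\,0} \alg(D_{v'})$. Combining this with $(\gamma,n)$-accuracy, which gives $\Pr[\alg(D_v) \in B_v] \ge \tfrac23$, yields $\Pr[\alg(D_{0^d}) \in B_v] \ge \tfrac23 e^{-\ell\eps}$ for every $v$. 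Summing over all $2^d$ vectors $v$ and using that the $B_v$ are disjoint gives $1 \ge \tfrac23 e^{-\ell\eps} 2^d$, hence $\ell\eps \ge d\ln 2 - \ln\tfrac32 = \Omega(d)$; since $\ell \le 3\gamma n + 1$, this rearranges to $n = \Omega(d/(\gamma\eps))$.

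\textbf{Plan for Item~1 (approximate DP).} Here I would invoke the fingerprinting-code (equivalently, tracing-attack) lower bound for one-way marginals of Bun et al.~\cite{bunUV18}; the $\log d$ in the denominator of the stated bound is the slack of the particular, slightly lossy, version being applied. The proof sketch: fix a distribution $\mathcal D$ over datasets $\vec{\dset} \in (\zo^d)^n$ in which column $j$ is i.i.d.\ $\mathrm{Bernoulli}(p_j)$ with the biases $p_j$ drawn from a carefully chosen prior on $(0,1)$, and a ``tracer'' that, given a $\gamma$-accurate marginal vector $a$, scores record $i$ by the correlation $\langle a - p,\ \vec{\dset}_i - p\rangle$. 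The two ingredients are: (i) \emph{completeness} --- $\gamma$-accuracy of $a$ forces the aggregate score $\sum_i \langle a - p, \vec{\dset}_i - p\rangle$ to be large relative to $n$, so some record has large expected score; and (ii) \emph{soundness} --- for a record sampled freshly from $\mathcal D$ and independent of $\vec{\dset}$, the score has mean $0$ and concentrates at scale $O(\sqrt d)$. If $\alg$ were $(\eps,\delta)$-DP with $\delta = o(1/n)$, then deleting record $i$ changes the law of its score by only an $e^\eps$ factor (plus the negligible $\delta$), so the in-sample and fresh score distributions must nearly coincide for every $i$; combining (i), (ii), and a union bound over the $n$ records, and balancing the two sides, forces $n = \Omega\bparen{\sqrt d/(\gamma\eps\log d)}$. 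I would cite \cite{bunUV18} for the precise instantiation rather than reprove it.

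\textbf{Main obstacle.} Item~2 is routine; the only point requiring care is the short-prefix construction --- this is exactly what improves the naive all-records-equal packing (which scales the Hamming distance up to $n$ and gives only $n = \Omega(d/\eps)$) to the sharp $n = \Omega(d/(\gamma\eps))$ --- together with the small edge case $\gamma \ge 1/3$. The genuinely hard ingredient is Item~1: obtaining a $\sqrt d$ rather than a $d$ dependence for approximate DP requires the fingerprinting/tracing machinery, in particular the right prior on the biases $p_j$ and the concentration estimate bounding a fresh record's score by $O(\sqrt d)$ with high probability throughout the regime $\delta = o(1/n)$. Since \cite{bunUV18} already supplies this, the proof of the lemma reduces to citing \cite{bunUV18} for Item~1 and carrying out the short packing argument above for Item~2.
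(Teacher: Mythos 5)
Your proposal is correct and matches the paper's treatment: the paper does not prove this lemma either, citing \cite{bunUV18} for Item~1 and noting that Item~2 ``is a slight modification of the lower bound from \cite{HardtT10} and follows from a simple packing argument,'' which is exactly the short-prefix packing you carry out (and it parallels the packing proof the paper does write out for Item~2 of \Cref{lem:kindsel-mainlb}, which uses $2\gamma n$ copies of a special record plus all-zero padding). Your edge-case handling for $\gamma\geq 1/3$ and your attribution of the $\log d$ loss in Item~1 to the fingerprinting-code bound are both sound.
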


%%%%%%%%%%%%%%%%%%%%% Proof of  Maxsum Lowerbound %%%%%%%%%%%%%%%%%%%%%%

\subsection{Proof of \Cref{thm:main_maxsum}}

Let $\mech$ be an $(\eps,\delta)$-DP and $(\alpha, T)$-accurate mechanism for $\maxsum{d}$ in the \CR. We use $\mech$ to construct an $(\eps, \delta)$-DP batch algorithm $\alg$ that is $(\frac{\alpha}{n},n)$-accurate  for $\marginals{d}$.
The main idea in the construction, presented in Algorithm~\ref{alg:maxsum_reduction}, is to force $\mech$ to output an estimate of the sum for one attribute at a time by making the sum in that attribute the largest. First, $\alg$ streams its own dataset $\yvec$ to $\mech$. Then it sends $n$ additional records with 1 in the first attribute and 0 everywhere else. After this, the first attribute sum is the largest, and the answer produced by $\mech$ at this point can be used to estimate the first marginal. Then $\alg$ equalizes the number of extraneous 1's for each attribute by sending $n$ additional records with 0 in the first attribute and 1 everywhere else. It repeats this for each attribute, collecting the answers from $\mech,$ and then outputs its estimates for the marginals.
%\begin{figure}[H]
%    \centering
%    \includegraphics[width=.9\textwidth]{}
%    \caption{An illustration of the reduction in \Cref{alg:maxsum_reduction}}
%    \label{fig:binary_sim}
%\end{figure}

For vectors $\vec{u}=(u_1,\dots,u_{\ell})$ and $\vec{v}=(v_1,\dots,v_m)$, let $\vec{u}\circ\vec{v}=(u_1,\dots,u_{\ell},v_1,\dots,v_m)$.
For a vector $\vec{v}$, let $\vec{v}^n$ denote the vector $\vec{v}\circ \vec{v} \circ\dots\circ \vec{v}$ representing $n$ concatenated copies of $\vec{v}.$

\begin{algorithm}[h]
    \caption{Algorithm $\alg$ for estimating all 1-way marginals}
    \label{alg:maxsum_reduction}
    \begin{algorithmic}[1]
        \Statex\textbf{Input:} $\vec{\dset} = (\dset_1,\dots,\dset_n) \in \X^n$, where $\X = \zo^d$, and black-box access to mechanism $\mech$.
        \Statex\textbf{Output:}  $\vec{b} = (b_1,\dots,b_d) \in \R^d$.
        \State Let $\vec{e}_j$ be a vector of length $d$ with $1$ in coordinate $j$ and $0$ everywhere else; let $\overline{\vec{e}_j} \gets (1)^d - \vec{e}_j$.
        \State \label{step:construct-stream}Construct a stream $\vec{\dstream} \gets \vec{\dset} \circ (\vec{e}_1)^n \circ (\overline{\vec{e}_1})^n \circ \dots \circ (\vec{e}_{d-1})^n \circ (\overline{\vec{e}_{d-1}})^n \circ (\vec{e}_d)^n$ with $2dn$ records.\label{line:allz_nonad}
        \For{$t\in [T]$}
        \State Send $\dstream_t$ to $\mech$ and get the corresponding output $a_t$.
        \EndFor
        \For{$j\in[d]$}
        \State $b_j \gets a_{2jn}/n - j$.  \label{algline:maxsum_setoutput}
        \EndFor
        \State Output $\vec{b} \gets (b_1,\dots,b_d)$. 
    \end{algorithmic}
\end{algorithm}
\begin{lemma}\label{lem:maxsum_reduction}
Let $\alg$ be Algorithm~\ref{alg:maxsum_reduction}. For all  $\eps > 0, \delta \geq 0, \alpha \in \R^{+}$ and $d,n,T \in \N$, where $T \geq 2dn$,  if  mechanism $\mech$  is $(\eps, \delta)$-DP and $(\alpha,T)$-accurate for $\maxsum{d}$ in the \CR{}, then batch algorithm~$\alg$ 
%instantiated with $\mech$ 
is $(\eps, \delta)$-DP and $(\frac{\alpha}{n},n)$-accurate for $\marginals{d}$.
\end{lemma}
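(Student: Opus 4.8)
The plan is to verify the two claimed properties of $\alg$ separately, both by unwinding the construction in \Cref{alg:maxsum_reduction}.

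\textbf{Privacy.} On input $\yvec$, algorithm $\alg$ runs $\mech$ on the stream $\dstreamvec = \yvec \circ \vec{w}$, where the suffix $\vec{w} = (\vec{e}_1)^n \circ (\overline{\vec{e}_1})^n \circ \dots \circ (\vec{e}_{d-1})^n \circ (\overline{\vec{e}_{d-1}})^n \circ (\vec{e}_d)^n$ depends only on $d$ and $n$, and then applies a fixed deterministic map (reading off the answers $a_{2n}, a_{4n}, \dots, a_{2dn}$ and rescaling) to the output stream of $\mech$. Hence for any pair of neighboring datasets $\yvec, \yvec'\in\X^n$, the corresponding streams $\dstreamvec, \dstreamvec'$ are neighboring: they differ only in the single coordinate where $\yvec$ and $\yvec'$ differ (the appended records are identical). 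By \Cref{def:na-privacy}, the batch algorithm mapping a stream to the full output sequence of $\mech$ is $(\eps,\delta)$-DP, so these two output sequences are $(\eps,\delta)$-indistinguishable; applying post-processing (\Cref{prelim:postprocess}) to the deterministic read-off map gives $\alg(\yvec) \approx_{\eps,\delta} \alg(\yvec')$, as required. (If $T > 2dn$, one pads $\dstreamvec$ with all-zero records to length $T$, which changes nothing.)

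\textbf{Accuracy.} The heart of the proof is the identity $\maxsum{d}(\vec{\dstream}_{[2jn]}) = n\, q_j(\yvec) + jn$ for every $j\in[d]$, together with the fact that coordinate $j$ attains this maximum. To see this, observe that the prefix of $\dstreamvec$ of length $2jn$ is $\yvec$, followed by the $j-1$ complete blocks $(\vec{e}_\ell)^n\circ(\overline{\vec{e}_\ell})^n$ for $\ell<j$, followed by $(\vec{e}_j)^n$. Each complete block $(\vec{e}_\ell)^n\circ(\overline{\vec{e}_\ell})^n$ adds exactly $n$ to \emph{every} coordinate sum: coordinate $\ell$ gains $n$ from $(\vec{e}_\ell)^n$ and $0$ from $(\overline{\vec{e}_\ell})^n$, while every other coordinate gains $0$ from $(\vec{e}_\ell)^n$ and $n$ from $(\overline{\vec{e}_\ell})^n$. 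Thus after $\yvec$ and the $j-1$ blocks, coordinate $k$ has sum $\big(\sum_i y_i[k]\big) + (j-1)n$ for all $k\in[d]$; the trailing block $(\vec{e}_j)^n$ then adds $n$ to coordinate $j$ only. Since $\sum_i y_i[k]\le n$ for every $k$, at time $2jn$ each coordinate $k\ne j$ has sum at most $n + (j-1)n = jn \le \sum_i y_i[j] + jn$, so coordinate $j$ is a maximizer and $\maxsum{d}(\vec{\dstream}_{[2jn]}) = \sum_i y_i[j] + jn = n\, q_j(\yvec) + jn$. The assumption $T\ge 2dn$ ensures all the times $2jn$ lie within the time horizon.

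Now I would conclude by combining this with the accuracy of $\mech$. With probability at least $2/3$ over the coins of $\mech$ we have $\err_{\maxsum{}}(\vec{\dstream}_{[t]}, a_t) = |a_t - \maxsum{d}(\vec{\dstream}_{[t]})| \le \alpha$ for all $t\in[T]$; in particular $|a_{2jn} - (n\,q_j(\yvec)+jn)| \le \alpha$ for every $j$. Since $\alg$ outputs $b_j = a_{2jn}/n - j$, on this event $|b_j - q_j(\yvec)| = \tfrac1n\,|a_{2jn} - jn - n\,q_j(\yvec)| \le \alpha/n$ for all $j$, and hence $\err_{\marginals{}}(\yvec,\alg(\yvec)) = \|\vec{b} - \marginals{d}(\yvec)\|_\infty \le \alpha/n$. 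This is exactly $(\tfrac\alpha n, n)$-accuracy for $\marginals{d}$.

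\textbf{Main obstacle.} There is no serious obstacle: the lemma is a bookkeeping exercise once the stream is written down. The one step that deserves care — and that is the real content of the reduction — is checking that the "padding" records $\overline{\vec{e}_\ell}$ rebalance the coordinates exactly, so that at each query time $2jn$ the mechanism is forced to reveal $\maxsum{d}(\vec{\dstream}_{[2jn]}) = n\,q_j(\yvec)+jn$, a value from which $q_j(\yvec)$ is recovered by subtracting the known offset $jn$ and dividing by $n$. The inequality $\sum_i y_i[k] + (j-1)n \le \sum_i y_i[j] + jn$, which uses only $y_i[k]\in\{0,1\}$, is precisely what guarantees that coordinate $j$ wins the maximum at that moment.
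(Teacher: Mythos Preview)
Your proof is correct and follows essentially the same approach as the paper: privacy via the observation that neighboring inputs yield neighboring streams plus post-processing, and accuracy via the identity $\maxsum{d}(\dstreamvec_{[2jn]}) = n\,q_j(\yvec) + jn$ together with the verification that coordinate $j$ is the maximizer at time $2jn$. Your explanation of why the rebalancing blocks $(\vec{e}_\ell)^n\circ(\overline{\vec{e}_\ell})^n$ add exactly $n$ to every coordinate is a bit more explicit than the paper's, but the argument is the same.
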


\begin{proof}
We start by reasoning about privacy. Fix neighboring datasets $\vec{\dset}$ and $\vec{\dset'}$ that are inputs to algorithm~$\alg$. Let $\vec{\dstream}$ and $\vec{\dstream}'$ be the streams constructed in Step~\ref{step:construct-stream} of $\alg$ when it is run on $\vec{\dset}$ and $\vec{\dset}'$, respectively. By construction, $\vec{\dstream}$ and $\vec{\dstream'}$ are neighbors. Since $\mech$ is $(\eps, \delta)$-DP, and $\alg$ only post-processes the outputs received from $\mech$, Lemma~\ref{prelim:postprocess} implies that $\alg$ is $(\eps, \delta)$-DP.

Now we reason about accuracy. Let $\vec{\dstream} = (\dstream_1,\dots,\dstream_{2dn})$ be the input stream provided to $\mech$ when $\alg$ is run on dataset $\vec{\dset}.$
%= (\dset_1,\dots,\dset_n)$.
By construction of $\dstreamvec$, the marginals $q_j(\yvec)$ for all $j\in[d]$
 and $\maxsum{d}$ are related as follows:
\ifnum\pods=0
\begin{equation}\label{eq:maxsum_obs}
    q_j(\vec{\dset}) = \frac{1}{n}\sum_{i\in[n]} \dset_i[j] 
    = \frac{1}{n} \Big(\sum_{i\in[2jn]} \dstream_i[j] -jn\Big)
    =  \frac 1 n \cdot \maxsum{d}(\dstreamvec_{[2jn]}) - j.
\end{equation}
\else
\begin{align}
     q_j(\vec{\dset}) 
     &= \frac{1}{n}\sum_{i\in[n]} \dset_i[j] 
     = \frac{1}{n} \Big(\sum_{i\in[2jn]} \dstream_i[j] -jn\Big) \nonumber\\
    & =  \frac 1 n \cdot \maxsum{d}(\dstreamvec_{[2jn]}) - j.\label{eq:maxsum_obs}
\end{align}
\fi

The attribute with the largest sum in $\dstreamvec_{[2jn]}$ is $j$ because $(\vec{e}_1)^n \circ (\overline{\vec{e}_1})^n \circ \dots \circ (\vec{e}_{j-1})^n \circ (\overline{\vec{e}_{j-1}})^n \circ (\vec{e}_j)^n$ contributes $jn$ ones to this attribute and $(j-1)n$ ones to each attribute in $[n]/\{j\}$, whereas the maximum sum of any attribute in $\yvec$ is $n$.

Since the transformation from $\mech$ to $\alg$ is deterministic, the coins of $\alg$ are the same as the coins of $\mech$. By~(\ref{eq:maxsum_obs}) and the computation of the estimates for the $\marginals{d}$ in Step~\ref{algline:maxsum_setoutput} of \Cref{alg:maxsum_reduction}, 
%
% \begin{align*}
% \Pr_{{\text{coins of }}\alg}\left[ \err_{\marginals{}}(\vec{\dset}, \alg(\vec{\dset})) \leq \frac \alpha n\right] 
% &=\Pr_{\text{coins of }\alg}\left[\max_{j\in[d]}\abs{q_j(\vec{\dset})-b_j} \leq \frac{\alpha}{n} \right] \\
% &= \Pr_{\text{coins of }\mech}\left[ \max_{t\in\{2n,\dots,2dn\}} \abs{ \maxsum{d}(\xt) - a_t} \leq \alpha \right] \\
% &\geq \Pr_{\text{coins of }\mech}\left[ \max_{t\in[T]} \abs{ \maxsum{d}(\xt) - a_t} \leq \alpha \right] \\
% &=\Pr_{\text{coins of }\mech}\left[ \max_{t\in[T]} \err_{\maxsum{}}(\xt, a_t)  \leq \alpha\right]
%  \geq \frac{2}{3},
% \end{align*}
%
% PODS version
\ifnum\pods=1
\begin{align*}
\Pr_{{\text{coins of }}\alg}&\left[ \err_{\marginals{}}(\vec{\dset}, \alg(\vec{\dset})) \leq \frac \alpha n\right]\\
= &\Pr_{\text{coins of }\alg}\left[\max_{j\in[d]}\abs{q_j(\vec{\dset})-b_j} \leq \frac{\alpha}{n} \right] \\
= &\Pr_{\text{coins of }\mech}\left[ \max_{t\in\{2n,\dots,2dn\}} \abs{ \maxsum{d}(\xt) - a_t} \leq \alpha \right] \\
\geq &\Pr_{\text{coins of }\mech}\left[ \max_{t\in[T]} \abs{ \maxsum{d}(\xt) - a_t} \leq \alpha \right] \\
= &\Pr_{\text{coins of }\mech}\left[ \max_{t\in[T]} \err_{\maxsum{}}(\xt, a_t)  \leq \alpha\right]
 \geq \frac{2}{3},
\end{align*}
\else
\begin{align*}
& \Pr_{{\text{coins of }}\alg}\left[ \err_{\marginals{}}(\vec{\dset}, \alg(\vec{\dset})) \leq \frac \alpha n\right]
&&= \Pr_{\text{coins of }\alg}\left[\max_{j\in[d]}\abs{q_j(\vec{\dset})-b_j} \leq \frac{\alpha}{n} \right] \\
= &\Pr_{\text{coins of }\mech}\left[ \max_{t\in\{2n,\dots,2dn\}} \abs{ \maxsum{d}(\xt) - a_t} \leq \alpha \right] 
&&\geq \Pr_{\text{coins of }\mech}\left[ \max_{t\in[T]} \abs{ \maxsum{d}(\xt) - a_t} \leq \alpha \right] \\
= &\Pr_{\text{coins of }\mech}\left[ \max_{t\in[T]} \err_{\maxsum{}}(\xt, a_t)  \leq \alpha\right]
 &&\geq \frac{2}{3},
\end{align*}
\fi
%
%{\color{red}
%\[\Pr_{\text{coins of }\alg}\left[\max_{j\in[d]}\abs{q_j(\vec{\dset})-\alg(\vec{\dset})_j} \leq \frac{\alpha}{n} \right] = \Pr_{\text{coins of }\mech}\left[ \max_{j\in[d]} \abs{ \maxsum{d}(\vec{\dset}_{[t2jn]}) - a_t} \leq \alpha \right] \geq \frac{2}{3}.\]
%}
where we used that $\mech$ is $(\alpha,T)$-accurate for $\maxsum{d}$. Thus, $\alg$ is $(\frac \alpha n,n)$-accurate for $\marginals{d}$.
\end{proof}

%%%%%%%%%%%%%% Proof of Main LB Theorem %%%%%%%%%%
\ifnum\pods=0
Now, we are ready to prove \Cref{thm:main_maxsum}. 
\fi
\begin{proof}[Proof of \Cref{thm:main_maxsum}]
Observe that the accuracy parameter $\alpha$ is nondecreasing as a function of $d$, since a mechanism $\mech$ for $\maxsum{d}$ can be used to approximate $\maxsum{d'}$ for all $d'<d$ with the same accuracy and privacy guarantees by padding each length-$d'$ input record with $d - d'$ zeroes.

Recall that both  lower bounds on $\alpha$ stated in \Cref{thm:main_maxsum} are the minimum of three terms. To prove them,
it suffices to show that, for all ranges of parameters, one of the terms is a lower bound on $\alpha$.

First, consider the case when $\eps \leq \frac{2}{T}$. We will show that in this case (for both pure and approximate differential privacy), $\alpha>T/9.$ Since $\alpha$ is a nondecreasing function of $d$, it is sufficient to show this for $d=1$. Suppose for the sake of contradiction that $\alpha\leq T/9.$
 Let $\vec{\dstream}=(0)^T$ and $\vec{\dstream'}=(0)^{3T/4} (1)^{T/4}$ be datastreams that differ on $T/4$ records. 
Let $a_T$ and $a'_T$ be the final outputs of $\mech$ on input streams $\vec{x}$ and $\vec{x'}$, respectively. 
By accuracy of $\mech$, we have $\Pr[a_T\leq T/9]\geq 2/3.$
Applying Lemma~\ref{lem:group_privacy} on group privacy with $\eps \leq {2}/{T}$ and $\ell=T/4$, we get 
$\Pr[a'_T > T/9] \leq \sqrt{e}\cdot\Pr[a_t> T/9] + \frac{2\delta}{\eps}
<2/3$
for sufficiently large $T$, since $\delta = o(\eps / T)$. 
But $\maxsum d (\dstreamvec')=T/4$, so
$\mech$ is  not $(T/9,T)$-accurate, a contradiction. Hence, $\alpha = \Omega(T)$.

Now assume $\eps>\frac 2 T,$ i.e., $\eps T>2.$
%\srnote{Where do we use this?} \ssnote{We needed this for two reasons- firstly, because we wanted to split $d$ into two cases depending on $(eps T)^{2/3}$ and we wanted this to be a valid setting of $d$ since we used the lower bound on $d$ at this value, and secondly since we have log(eps T) type terms and we wanted this value to be larger than $1$} 
We start by proving Item~1 (when $\delta = o(\eps/T)$). Let $\alg$ be the algorithm for $\marginals d$ with black-box access to $\mech$, as defined in \Cref{alg:maxsum_reduction}.
%
%For tiny epsilon ($\eps \leq \frac{2}{T}$), a simple group privacy argument shows that $\alpha = \Omega(T)$. For all other values of epsilon, we will use Lemmas~\ref{lem:oneway-marginals} and~\ref{lem:maxsum_reduction} to prove the lower bounds on $\alpha$ for $d \leq f(\eps,T)$ for some value $f(\eps,T)$ that depends on $\eps$ and $T$. For larger number of coordinates $d > f(\eps,T)$, if there exists a mechanism $\mech$ for $\maxsum{d}$ in the \CR, it can be used to solve $\maxsum{f(\eps,T)}$ with the same accuracy and privacy guarantees by padding each $f(\eps,T)$-length input with $d - f(\eps,T)$ zeroes. Therefore, the lower bound on $\alpha$ for $d = f(\eps,T)$ applies here.
%
%\paragraph{Case 1: $\eps > \frac{2}{T}$.}
%
%Consider $\delta = o(\eps/T)$. 
%
%Let $\alg$ be \Cref{alg:maxsum_reduction} with black-box access to an $(\eps,\delta)$-differentially private, $(\alpha,T)$ accurate mechanism $\mech$ for $\maxsum{d}$. 
If $T \geq 2dn$ and $\frac{\alpha}{n}<1$, then by \Cref{lem:maxsum_reduction}, algorithm $\alg$ is $(\epsilon,\delta)$-differentially private and $(\frac{\alpha}{n},n)$-accurate  for $\marginals{d}$. (We require $\frac{\alpha}{n}<1$ for the accuracy guarantee on $\alg$ to be meaningful.) We can then use \Cref{lem:oneway-marginals} to lower bound $\alpha$.

%Consider the case where $d \leq (\eps T)^{2/3}$.
{\bf Case 1: $d \leq (\eps T\log (\eps T))^{2/3}$.}
If there exists a dataset size $n\in(\alpha, \frac{T}{2d}]$, then by Item~1 of \Cref{lem:oneway-marginals}, $n=\Omega \big( \frac{n\sqrt{d}}{\alpha\cdot\epsilon \log d} \big)$, and hence $\alpha = \Omega \big( \frac{\sqrt{d}}{\epsilon \log d}\big)$. If no such $n$ exists, then $\alpha + 1 \geq \frac{T}{2d}$, and hence $\alpha = \Omega(\frac T d) = \Omega\big( \frac{T^{1/3}}{\eps^{2/3}\log^{2/3}( \eps T)} \big)$.
%= \Omega \left( \frac{T^{1/3}}{\epsilon^{2/3} \log(\eps T)} \right)$.
Combining the expressions for the two parameter ranges, we get that $\alpha = \Omega\big( \min\big\{ \frac{T^{1/3}}{\eps^{2/3}\log^{2/3}( \eps T)},  \frac{\sqrt{d}}{\eps \log d}  \big\} \big).$

{\bf Case 2: $d > (\eps T\log(\eps T))^{2/3}$.} Set $d'=\lfloor (\eps T\log(\eps T))^{2/3}\rfloor.$ Observe that $d\geq 1$ because $\eps T>2$. By our previous padding argument, a mechanism for $\maxsum{d}$  can be used to approximate $\maxsum{d'}$ for $d' = (\eps T)^{2/3}$ with the same accuracy and privacy guarantees. Therefore, 
\ifnum\pods=1
\begin{align*}
\alpha & = \Omega\left( \min\left\{ \frac{T^{1/3}}{\eps^{2/3}\log^{2/3}( \eps T)},  \frac{\sqrt{d'}}{\eps \log d'}  \right\} \right) \\& =  \Omega \left( \frac{T^{1/3}}{\epsilon^{2/3} \log^{2/3}(\eps T)} \right).
\end{align*}
\else
$\alpha = \Omega\left( \min\left\{ \frac{T^{1/3}}{\eps^{2/3}\log^{2/3}( \eps T)},  \frac{\sqrt{d'}}{\eps \log d'}  \right\} \right) =  \Omega \left( \frac{T^{1/3}}{\epsilon^{2/3} \log^{2/3}(\eps T)} \right).$
\fi
%$\Omega \left( \frac{\sqrt{d'}}{\eps \log(d')} \right) = \Omega \left( \frac{T^{1/3}}{\epsilon^{2/3} \log(\eps T)} \right).$
\ifnum\pods = 1
\\
\else
\fi
This completes the proof of Item~1.

The proof of Item~2 (with $\delta = 0$) proceeds along the same lines, except that we consider the cases $d \leq \sqrt{\eps T}$ and $d > \sqrt{\eps T}$ and use Item~2 from \Cref{lem:oneway-marginals} instead of Item~1.
If a dataset size $n\in(\alpha,\frac{T}{2d}]$ exists, by Item~2 of \Cref{lem:oneway-marginals}, we get $n=\Omega \left( \frac{nd}{\alpha\eps} \right)$, and hence $\alpha = \Omega \left( \frac{d}{\eps} \right)$.
%To use the above reasoning, pick a dataset size $n$ such that $n\in(\alpha,\frac{T}{2d}]$. 
If no such $n$ exists, then $\alpha + 1 \geq \frac{T}{2d}$, and hence $\alpha = \Omega(T/ d) = \Omega(\sqrt{{T}/{\eps}})$.
If $d>\sqrt{\eps T}$, a padding argument gives that $\alpha = \Omega(\sqrt{T / \eps})$.
\end{proof}

%%%%%%%%%%%%%%%%%%%%%%%%%%%%%%%%%%%%%%%%%%%%%%%%%%%%%%%%%%%%%%%%%%%%%%%
%                       SELECTION LOWER BOUND
%%%%%%%%%%%%%%%%%%%%%%%%%%%%%%%%%%%%%%%%%%%%%%%%%%%%%%%%%%%%%%%%%%%%%%%
\section{Lower Bounds for $\selection{}$}\label{sec:selection-lowerbound}

In this section, we prove \Cref{thm:selection-LB} that provides strong lower bounds on the accuracy parameter $\alpha$ of any $(\alpha,T)$-accurate algorithm $\mech$ for $\selection{d}$ in the \CR{}.  Our lower bounds match the upper bounds from \Cref{sec:upper-bounds} for $\selection{d}$ in the \aCR{} up to logarithmic factors in the time horizon $T$ and the number of coordinates $d$.

\begin{theorem}\label{thm:selection-LB}
For all $\varepsilon \in (0,1], \delta\in[0,1), \alpha > 0,$ $d\in \N$ such that $d > 1$, sufficiently large $T\in\N$, and mechanisms
$\mech$ in the \CR{} that are $(\varepsilon,\delta)$-DP and $(\alpha, T)$-accurate for $\selection{d}$, the following statements hold.
\begin{enumerate}
    \item If $0<\delta = o(\frac{\eps}{T})$, then $\alpha =\tilde \Omega\Bparen{\min\left\{\frac{T^{1/3} \log^{2/3} d}{\eps^{2/3}}, \frac{\sqrt{d}}{\eps}, T\right\} }$.

\item   If $\delta = 0$, then 
 $\alpha = \Omega\Bparen{\min\left\{
\sqrt{\frac{T}{\eps}\log(2 + \frac{d}{\sqrt{\eps T}})}, 
 \frac {d}{\eps},
 T
 \right\}}
\allowbreak = \tilde \Omega \Bparen{\min\left\{\sqrt{\frac{T\log(d)}{\eps} }, \frac d \eps  ,T\right\}}$.
\end{enumerate}
\end{theorem}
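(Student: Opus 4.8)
The plan is to follow the template of the proof of \Cref{thm:main_maxsum}: turn an $(\alpha,T)$-accurate mechanism $\mech$ for $\selection{d}$ in the \CR{} into a batch-model algorithm for an appropriate problem, and then invoke known batch lower bounds. Whereas the target batch problem for $\maxsum{}$ was ``release all $d$ one-way marginals of a single dataset'', the right target for $\selection{}$ is a \emph{$k$-fold selection} problem: partition $[d]$ into $k$ disjoint blocks $S_1,\dots,S_k$, each of size $\ell = d/k$; the batch algorithm receives $k$ independent datasets $\yvec^{(1)},\dots,\yvec^{(k)}$, each of size $n$ over $\ell$ coordinates, and must output, for every $i\in[k]$, an index of $S_i$ whose column sum in $\yvec^{(i)}$ is within additive error $\alpha$ of the largest column sum of $\yvec^{(i)}$. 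For this batch problem I will use the lower bound of Steinke and Ullman~\cite{US17} in the approximate-DP case and a packing argument (generalizing Item~2 of \Cref{lem:oneway-marginals}, in the spirit of \cite{HardtT10}) in the pure-DP case. Up to polylogarithmic factors, these say that an $(\eps,\delta)$-DP batch algorithm solving $k$-fold $\ell$-selection with additive error $\alpha$ on datasets of size $n$ requires $\alpha = \tilde\Omega(\sqrt{k}\,\log(\ell)/\eps)$ when $\delta>0$ (and $\delta$ small) and $\alpha = \Omega(k\log(\ell)/\eps)$ when $\delta=0$, unless $\alpha = \Omega(n)$ already, in which case the instance is trivial.

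The reduction works as follows. Given $(\yvec^{(1)},\dots,\yvec^{(k)})$, the batch algorithm $\alg$ builds a stream for $\mech$: it first streams the $k$ datasets one after another, embedding $\yvec^{(i)}$ into the coordinates $S_i$ (each $\ell$-bit record becomes a $d$-bit record supported on $S_i$), which takes $kn$ records; it then runs $k$ rounds, where in round $i$ it sends $B \defeq n+\alpha+1$ copies of the indicator vector $\vec{e}_{S_i}$ of $S_i$, \emph{records the output $a_{t_i}$ of $\mech$ at that time step}, and then sends $B$ copies of $\overline{\vec{e}_{S_i}} = (1)^d-\vec{e}_{S_i}$; finally it pads with zero records up to length $T$ and outputs the re-indexed $a_{t_i}$ for each $i$. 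Three things need checking. (i) \emph{Privacy}: neighboring tuples $(\yvec^{(i)})_i$ produce streams differing in a single record, so since $\alg$ only post-processes $\mech$'s outputs, \Cref{prelim:postprocess} gives that $\alg$ is $(\eps,\delta)$-DP. (ii) \emph{Accuracy}: after the second half of each round every coordinate has received a uniform extra boost of $B$, so at read-time $t_i$ the coordinates of $S_i$ carry their $\yvec^{(i)}$-column sum plus $iB$ while every other coordinate carries a column sum of at most $n$ plus $(i-1)B$; since $B-n>\alpha$, block $S_i$ is the global argmax by a margin strictly larger than $\alpha$, so $(\alpha,T)$-accuracy of $\mech$ (which holds at all time steps simultaneously with probability $\ge 2/3$) forces $a_{t_i}\in S_i$ and forces its $\yvec^{(i)}$-column sum to be within $\alpha$ of the maximum; hence $\alg$ solves $k$-fold $\ell$-selection with error $\alpha$ with probability $\ge 2/3$. (iii) \emph{Length}: the stream has length $kn+2kB = O(k(n+\alpha)) \le T$. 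The crucial point, exactly as in \Cref{alg:maxsum_reduction}, is that $B$ is a \emph{fixed} quantity (it does not grow with $i$), so the length is only \emph{linear} in $k$.

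Assembling the bound parallels the proof of \Cref{thm:main_maxsum}. One first notes $\alpha$ is nondecreasing in $d$ (pad records with zero coordinates) and disposes of the regime $\eps = O(1/T)$ by group privacy: for $d\ge 2$, comparing the stream with $T/4$ ones in coordinate $1$ against the stream with $T/4$ ones in coordinate $2$ (they differ in $T/4$ records), $(\alpha,T)$-accuracy with $\alpha<T/4$ would force contradictory behavior at time $T$, so $\alpha=\Omega(T)$. When $\eps T$ exceeds a constant, one runs the reduction with $n$ set to the relevant batch threshold and $k$ as large as $O(k(n+\alpha))\le T$ and $k\le d/2$ (so $\ell\ge 2$) permit. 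In the approximate-DP case this gives $\alpha = \tilde\Omega(\sqrt{d}/\eps)$ for $d \lesssim (\eps T)^{2/3}$ (using $\ell=2$) and $\alpha = \tilde\Omega(T^{1/3}\log^{2/3}(d)/\eps^{2/3})$ for larger $d$ (using $\ell = d/k$ with $k \asymp (\eps T/\log\ell)^{2/3}$ and $\log\ell = \Theta(\log d)$); together with $\alpha=\Omega(T)$ this is Item~1. In the pure-DP case the same optimization, now with $\alpha = \Omega(k\log(d/k)/\eps)$ and length constraint $O(k^2\log(d/k)/\eps)\le T$, produces the slightly refined expression $\Omega(\sqrt{(T/\eps)\log(2+d/\sqrt{\eps T})})$ (the $\log(2+\cdot)$ is $\log\ell$ evaluated at the optimal $k$), which with the $\Omega(d/\eps)$ packing bound and $\Omega(T)$ gives Item~2.

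I expect the main obstacle to be designing the stream so that $\mech$ is forced to solve all $k$ sub-instances with an accuracy that translates verbatim into batch accuracy, \emph{while keeping the stream length linear in $k$}, despite counts accumulating monotonically in the \CR{}. The fixed-height ``boost-then-re-equalize'' scheme above (the analogue of the $\vec{e}_j/\overline{\vec{e}_j}$ trick in \Cref{alg:maxsum_reduction}) is what makes this work; using a boost that grows with $i$ would cost a factor $k$ in the length and degrade $T^{1/3}$ to $T^{1/5}$. A secondary point requiring care is tracking the logarithmic factors through the optimization, which depends on the precise shape of the $k$-fold selection lower bound of \cite{US17} --- in particular that the approximate-DP bound scales as $\sqrt{k}\,\log\ell$ (not $\sqrt{k\log\ell}$), which is exactly what yields $\log^{2/3}d$ rather than $\log^{1/3}d$ in Item~1 --- and on stating the pure-DP packing bound for $k$-fold $\ell$-selection in the form needed.
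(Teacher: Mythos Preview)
Your overall plan---reduce to a $k$-fold selection batch problem and invoke \cite{US17} / packing, using a boost-then-re-equalize stream---is the paper's approach. But your formulation of the batch problem is wrong in a way that breaks the lower bounds you want to cite.

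You describe the batch input as ``$k$ independent datasets $\yvec^{(1)},\dots,\yvec^{(k)}$, each of size $n$ over $\ell$ coordinates'', with neighbors differing in a single record of a single $\yvec^{(i)}$, and you stream them sequentially for $kn$ records. In that model the $k$ sub-instances are genuinely decoupled: changing one input record affects only one block, so a batch algorithm may spend its \emph{entire} $\eps$ on each block separately with no composition penalty. The exponential mechanism then achieves error $O(\log\ell/\eps)$ on every block, so neither the $\sqrt{k}\log\ell/\eps$ bound of \cite{US17} nor the $k\log\ell/\eps$ packing bound holds for your problem. (Concretely, in the packing argument, moving between two extreme configurations that differ in all $k$ blocks now requires changing $\Theta(\gamma n k)$ records rather than $\Theta(\gamma n)$; the extra $k$ in the group-privacy exponent exactly cancels the $k$ in $\ell^k$.) With the lower bound gone, optimizing over $k$ yields nothing beyond $\alpha=\Omega(\log d/\eps)$.

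The fix is what the paper does in \Cref{alg:kindsel} and \Cref{lem:kindsel-reduction}: the batch input is a \emph{single} dataset of $n$ records over all $d=k\ell$ coordinates, so that each record contributes to every block and a one-record change perturbs all $k$ sub-instances simultaneously---this coupling is precisely what drives the $\sqrt{k}$ (resp.\ $k$) factor in \Cref{lem:kindsel-mainlb}. You stream those $n$ records \emph{once}, then run your $k$ rounds of boost/re-equalize with height $\Theta(n)$, for total length $\Theta(kn)\le T$. Everything else in your outline (the monotonicity-in-$d$ padding, the group-privacy argument for $\eps\le 2/T$, and the optimization over $k$) then goes through as you wrote.
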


%%%%%%%%%%%%%%%%%%%%%% k independent selection %%%%%%%%%%%%%%%%%%%%%%%%%%

\subsection{$\kindselection{d}$ Problem in the Batch Model}

To prove our lower bounds for $\selection{}$ in the \CR{}, we reduce from the problem called $\kindselection{}$ that solves $k$ disjoint instances of the problem of selecting the index of the largest marginal 
%$\selection{}$ 
in the batch model.

To define the function $\kindselection{}$,
let $n,d,k \in \N$, and $\X = \zo^{kd}$. Let $\vec{\dset}[i:j]$ denote the dataset $\yvec \in \X^n$ with each record restricted to the coordinates between (and including) $i$ and~$j$. The function $\kindselection{d}: \X^{n} \to [d]^k$ corresponds to dividing the dataset into $k$ {\em blocks} $\vec{\dset}[1:d], \vec{\dset}[d+1:2d], \dots, \vec{\dset}[(k-1)d+1:kd]$, with $n$ records each, and applying $\selection{d}$ independently on each block. It maps a dataset $\vec{\dset}$ of size~$n$ to a vector $(h_1(\yvec),\dots,h_k(\yvec)),$ where $h_r$ is defined as the $\selection{d}$ function applied to block $r$:
$$h_r(\vec{\dset}) = \selection{d}\Big(\vec{\dset}\left[(r-1)d+1:rd\right]\Big).$$
%We formally define the function $\kindselection{d}$ as: $$\kindselection{d}(\vec{\dset}) = (q_1(\yvec),\dots,q_k(\yvec)).$$
The accuracy for $\kindselection{}$ is defined as in \Cref{def:batchmodel-accuracy}. To apply it, we define the error $\err_{\kindselection{}}$. Note that the error is scaled differently than for $\selection{}$ because the goal is to select the index of the largest marginal in each block, not of the largest sum.
For $\vec{b} = (b_1, \dots, b_k) \in [d]^k$, define $\err_{\kindselection{}}(\vec{\dset},\vec{b})$
\begin{align*}
      = \max_{r \in [k]} \left( \frac 1n\cdot\err_{\selection{}}(\yvec[r(d-1)+1:rd],b_r) \right).
\end{align*}
%
%Consider an algorithm $\alg: \X^n \to [d]^k$ for $\kindselection{d}$ in the batch model, and an accuracy parameter $\gamma \in [0,1]$. Just like in \Cref{def:na-accuracy}, algorithm $\alg$ is \textbf{$(\gamma,n)$-accurate for $\kindselection{d}$} in the batch model if for all fixed datasets $\yvec \in \X^n$, the maximum error over the outputs of $\alg$ is bounded by $\gamma$ with high probability, that is,
%$$\Pr_{\text{coins of \alg}}\left[ \err_{\kindselection{}}(\vec{\dset},\alg(\vec{\dset})) \leq \gamma \right] \geq \frac{2}{3}. $$
Next, we state lower bounds for $(\eps, \delta)$-differentially private approximation of $\kindselection{}$ in the batch model. 

\begin{lemma}\label{lem:kindsel-mainlb}
For all  $\eps \in (0,1]$, $\delta \in [0,1]$, $\gamma \in [0,\frac{1}{20}]$, $d, k, n \in \N$, and batch algorithms $\alg$ that are $(\eps, \delta)$-differentially private and $(\gamma,n)$-accurate for $\kindselection{d}$, the following statements hold.
\begin{enumerate}
    \item If $\delta > 0$ and $\delta = o(1/n)$, then $n = \Omega(\frac{\sqrt{k} \cdot \log d}{\eps\gamma \log (k+1)})$.
    \item If $\delta = 0$, then $n = \Omega\left( \frac{k\cdot \log d}{\eps\gamma} \right)$.
\end{enumerate}
\end{lemma}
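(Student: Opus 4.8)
The plan is to prove both items by constructing, for each parameter setting, a family of ``hard'' datasets that exploits the block structure of $\kindselection{d}$, and then extracting a sample‑complexity bound — via a packing argument when $\delta=0$, and via a fingerprinting/tracing argument when $\delta>0$. The one structural fact that drives everything is that the $k$ blocks occupy \emph{disjoint} coordinate ranges while sharing the same $n$ records: this lets us plant a fresh ``secret'' index inside every block \emph{simultaneously} by modifying only $O(\gamma n)$ records, so $k$ disjoint selection subproblems are made hard at the Hamming cost of a single one. This is what converts the single‑instance selection bound $\Theta(\log d/(\eps\gamma))$ into a bound that scales with $k$ (pure DP) or $\sqrt k$ (approximate DP).

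For Item~2 ($\delta=0$) I would run a packing argument. Put $\ell=\lfloor\gamma n\rfloor+1$ (so $\ell\le n$ since $\gamma\le\tfrac1{20}$), and for each $\vec j=(j_1,\dots,j_k)\in[d]^k$ let $D_{\vec j}\in(\zo^{kd})^n$ be the dataset whose first $\ell$ records each carry a single $1$, namely in coordinate $j_r$ of block $r$ for every $r\in[k]$, and whose remaining $n-\ell$ records are all zero. In $D_{\vec j}$ block $r$ has attribute sum $\ell$ at index $j_r$ and $0$ at every other index, so any output that errs in even one block incurs $\kindselection{}$‑error at least $\ell/n>\gamma$; hence a $(\gamma,n)$‑accurate $\alg$ must output exactly $\vec j$ on $D_{\vec j}$ with probability $\ge 2/3$. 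Since $D_{\vec j}$ and $D_{\vec j\,'}$ agree outside the first $\ell$ records, they are at Hamming distance $\le\ell$, so applying group privacy (\Cref{lem:group_privacy}) with $\delta=0$ and a reference dataset $D^{\star}=D_{(1,\dots,1)}$ gives $\Pr[\alg(D^{\star})=\vec j]\ge e^{-\eps\ell}\Pr[\alg(D_{\vec j})=\vec j]\ge\tfrac23 e^{-\eps\ell}$ for all $\vec j$. Summing over the $d^{k}$ disjoint events $\{\alg(D^{\star})=\vec j\}$ yields $1\ge\tfrac23 d^{k}e^{-\eps\ell}$, i.e.\ $\eps\ell=\Omega(k\log d)$; together with $\ell=O(\gamma n)$ this gives $n=\Omega\!\big(\tfrac{k\log d}{\eps\gamma}\big)$.

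For Item~1 ($\delta>0$, $\delta=o(1/n)$) group privacy is useless — the $\delta$ term blows up over a group of size $\Theta(\gamma n)$ — and, crucially, one \emph{cannot} simply reduce to the approximate‑DP lower bound for $1$‑way marginals (\Cref{lem:oneway-marginals}(1)): a selection among $d$ options behaves like $\log d$ threshold queries under such a reduction, and $\log d$ approximate‑DP queries cost only $\sqrt{\log d}$, so that route would yield merely $\tilde\Omega(\sqrt{k\log d}/(\eps\gamma))$ rather than the claimed $\tilde\Omega(\sqrt k\,\log d/(\eps\gamma))$. Instead I would invoke the fingerprinting‑based \emph{selection} lower bound of Steinke and Ullman~\cite{US17}, whose strength is exactly that a \emph{single} $(\eps,\delta)$‑DP selection among $d$ options already requires $\tilde\Omega(\log d/(\eps\gamma))$ samples with the $\log d$ un‑rooted, and extend it to the $k$‑fold instance: sample the dataset at random, planting in each block $r$ an independent uniform secret $j_r^{\star}$ and drawing each record's block‑$r$ bits so the sum at $j_r^{\star}$ exceeds every other sum by $\approx\gamma n$ in expectation (here $\gamma\le\tfrac1{20}$ keeps this gap both large enough to force correct selection and small enough for the fingerprinting code to stay robust); a $(\gamma,n)$‑accurate $\alg$ must then report $(j_1^{\star},\dots,j_k^{\star})$ on most such datasets, and a tracing adversary that correlates a candidate record with the output, block by block, distinguishes a member from a fresh record unless $n$ is large — the per‑block correlations contribute additively, so $k$ blocks buy a $\sqrt k$ factor on top of the per‑block $\log d$, for a total of $\tilde\Omega(\sqrt k\,\log d/(\eps\gamma))$, matching the stated $\Omega\!\big(\tfrac{\sqrt k\,\log d}{\eps\gamma\log(k+1)}\big)$ up to the $\log(k+1)$ slack that is intrinsic to the fingerprinting reduction.

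The hard part will be Item~1: correctly composing the $k$ per‑block selection hardnesses so that the bound comes out as $\sqrt k\cdot\log d$ and not $\sqrt{k\log d}$. Concretely this means either quoting a $k$‑fold form of the Steinke–Ullman selection fingerprinting code, or re‑deriving its tracing analysis in a way that tolerates $k$ parallel instances on shared data while preserving the un‑rooted $\log d$ per block; the bookkeeping that keeps the planted gap consistent with the accuracy parameter $\gamma$ across all blocks, and that turns the tracer's total advantage into the $n=\tilde\Omega(\sqrt k\log d/(\eps\gamma))$ bound against the $\delta=o(1/n)$ constraint, is the delicate step.
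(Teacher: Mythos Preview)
Your proposal is correct and, for Item~2, essentially identical to the paper's packing argument (the paper plants $2\gamma n$ copies of the marked record where you plant $\lfloor\gamma n\rfloor+1$, but this is cosmetic).

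For Item~1 the destination is the same but the route differs. You propose to re-derive the fingerprinting/tracing analysis directly for the $k$-fold parallel instance, and you correctly flag as ``the hard part'' the bookkeeping needed to make the bound come out $\sqrt{k}\cdot\log d$ rather than $\sqrt{k\log d}$. The paper sidesteps that work with a black-box reduction: for the specific hard distribution that underlies the Steinke--Ullman top-$k$ lower bound~\cite{US17}, if one partitions the $dk$ coordinates into $k$ equal blocks then, with constant probability, the per-block winners together form a valid approximate top-$k$ solution. Thus an algorithm for $\kindselection{d}$ solves the top-$k$ (out of $dk$) problem on those instances with essentially the same privacy and error, and the existing top-$k$ bound $n=\Omega\bigl(\sqrt{k}\log(dk)/(\eps\gamma\log k)\bigr)$ applies directly --- with the $\log d$ already un-rooted, so the delicate step you highlight never arises. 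The paper also notes a second route via the composition framework of~\cite{bunUV18} (selection among $d>2^m$ options enables reconstruction on $m$ records, composed with the marginals lower bound). Your direct plan would work, but the reduction is shorter and more modular; in either case the paper, like you, treats Item~1 at the proof-sketch level and attributes the details to~\cite{US17,Ullman21pers}.
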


\ifnum\pods=0

Item~1 in Lemma~\ref{lem:kindsel-mainlb} follows from \Cref{lem:lbkindsel} below.
\begin{theorem}[\cite{US17,Ullman21pers}]\label{lem:lbkindsel}
For all $\epsilon \in (0,1], \delta \in (0,1/n]$, $\gamma \in [0,\frac{1}{20}]$, $d,n,k \in \N$, if Algorithm $\alg$ is $(\eps,\delta)$-differentially private and $(\gamma,n)$-accurate for $\kindselection{d}$, then $n = \Omega(\frac{\sqrt{k}\log d}{\gamma \eps \log (k+1)})$.
\end{theorem}
\begin{proof}[Proof Sketch]
We are aware of two proofs of this result, both of which were communicated to us by Jonathan Ullman~\cite{Ullman21pers}. The first uses the top-$k$ selection lower bound of Steinke and Ullman~\cite{US17}. In that problem, there is a single collection of $d$ coordinates and the goal is to return the indices of $k<d$ coordinates whose sums are roughly largest. 

% \as{The specific  distribution over instances that arises in the lower bound of~\cite{US17} has two useful properties: first, the contents of the dataset's $dk$ coordinates ... IN PRGORESS. }
For the specific distribution over instances that arises in the lower bound of~\cite{US17}, if one divides the coordinates into $k$ equal groups, there is a constant probability that the collection of coordinates with the largest sum in each group is a good approximate solution for the top-$k$ selection problem. An algorithm for $\kindselection{d}$ can thus be used to solve the top-$k$ selection (out of $dk$ coordinates) problem for such instances with roughly the same error and privacy parameter. The lower bound of~\cite{US17} on $n$ then applies.

Another approach is to use the composition framework of Bun, Ullman and Vadhan \cite{bunUV18}. One can use a folklore result that selection among $d>2^m$ coordinates can be used to mount a reconstruction attack on an appropriate dataset of size $m$. Composed with the lower bound for 1-way marginals in \cite{bunUV18}, one obtains a lower bound for $\kindselection{d}$.
\end{proof}
%The following lemma is used to add an appropriate dependence on $\eps$ to \Cref{lem:lbkindsel}. This lemma appears to be folklore. We've added a proof of it in the appendix for completeness.
%\begin{lemma}\label{lem:boost-eps}
%\asnote{I think we can remove this lemma entirely.}Let $n,d,k \in \N$. If every $(1,o(\frac{1}{ndk})$-DP and $(\frac{n}{20},n)$-accurate algorithm for $\kindselection{d}$ in the batch model requires $n > g(k,d)$ records for some function $g$, then for all $\eps \in (0,1]$, every $\left(\eps,o(\frac{1}{ndk})\right)$-DP and $(\frac{n}{20},n)$-accurate algorithm for $\kindselection{d}$ in the batch model requires $n >  \frac{g(k,d)}{\eps}$ records.
%Let $n,d,k \in \N, \epsilon \in (0,1], \delta \in (0,1/ndk]$. Let $\alg$ be and $(\eps,\delta)$-differentially private, $(\frac{kn}{20},n)$-accurate algorithm for $\kindselection{dk}$ then $n = \Omega(\frac{\sqrt{k}\log d}{\eps})$
%\end{lemma}

To complete the proof of \Cref{lem:kindsel-mainlb}, we prove Item $2$ via a standard packing argument. 

%\begin{lemma}\label{thm:purekindselLB}
%Let $n,d,k \in \N, \epsilon \in (0,1]$, $0 < \gamma < \frac{n}{3}$. If $\alg$ is an $(\eps,0)$-differentially private and $(\gamma,n)$-accurate algorithm for $\kindselection{d}$, then $n = \Omega(\frac{k\log d}{\eps})$.
%\end{lemma}

\begin{proof}[Proof of Item~2 in \Cref{lem:kindsel-mainlb}.]
For $\vec{u}\in [d]^k,$ define $\dset^*_{\vec{u}} \in \zo^{dk}$ to be the record where each block $r\in[k]$ of $d$ coordinates has a $1$ in coordinate $u_r$ and all zeros everywhere else. Let $\vec{\dset_u}$ be the dataset that consists of $2\gamma n$ copies of~$\dset^*_{\vec{u}}$ and $(1-2\gamma)n$ copies of the all-zero record (assuming, for simplicity, that $2\gamma n$ is an integer). Since $\alg$ is $(\gamma,n)$-accurate, $\Pr_{\text{coins of \alg}}\left[ \err_{\kindselection{}}(\vec{\dset_u},\alg(\vec{\dset_u})) \leq \gamma \right] \geq \frac{2}{3}$ for all $\vec{u}\in[d]^k$. This means that for all $\vec{u}\in[d]^k$,
\[\Pr_{\text{coins of \alg}}\left[\alg(\vec{\dset_u}) = \vec{u}\right] \geq \frac{2}{3}.\]
For all $\vec{u},\vec{u}' \in [d]$, by group privacy,
$\alg(\vec{\dset_u}) \approx_{(\gamma\eps n,0)} \alg(\vec{\dset_{u'}})$, which implies that
\begin{align}\label{eq:groupprivlb}
\Pr\left[\alg(\vec{\dset_{u}}) = \vec{u}'\right] &\geq e^{-\gamma\eps n}\Pr\left[\alg(\vec{\dset_{u'}}) = \vec{u}'\right] \geq \frac{2}{3}e^{-\gamma\eps n}.
\end{align}
Since the probability of any event is at most 1,
\ifnum\pods=0
\[
1\geq \Pr_{\text{coins of \alg}}\left[ \alg(\vec{\dset_u}) \neq \vec{u} \right] = \sum_{\vec{u}'\neq \vec{u}}\Pr\left[\alg(\vec{\dset_u}) = \vec{u}'\right] \geq \frac{2}{3}e^{-\gamma\eps n}(d^k-1),
\]
\else
\begin{align*}
& 1\geq \Pr_{\text{coins of \alg}}\left[ \alg(\vec{\dset_u}) \neq \vec{u} \right] = \\
& \sum_{\vec{u}'\neq \vec{u}}\Pr\left[\alg(\vec{\dset_u}) = \vec{u}'\right] \geq \frac{2}{3}e^{-\gamma\eps n}(d^k-1),
\end{align*}
\fi

where the last inequality holds by (\ref{eq:groupprivlb}). We get that
$e^{\gamma\eps n} \geq \frac{d^k-1}{2}\cdot\frac{2}{3}$, and thus $n = \Omega\left(\frac{k\log d}{\gamma\eps}\right).$
%The probability that $\alg$ outputs the wrong answer on input $\vec{\dset_u}$ is the following, where the last inequality follows by (\ref{eq:groupprivlb}):
%\[
%\Pr_{\text{coins of \alg}}\left[ \err_{\kindselection{}}(\vec{\dset_u},\alg(\vec{\dset_u})) \geq \gamma \right] = %\sum_{u'\neq u}\Pr\left[\alg(\vec{\dset_u}) = u'\right] \geq \frac{2}{3}e^{-\eps n}(d^k-1).
%\]
%Using the fact that any probability is trivially upper bounded by $1$,
%\begin{align*}
%   1 \geq \frac{2}{3}e^{-\eps n}(d^k-1) \implies e^{\eps n} \geq \frac{d^k}{2}\cdot\frac{2}{3} \implies n = %\Omega\left(\frac{k\log d}{\eps}\right).
%\end{align*}
\end{proof}
\else
\fi
%%%%%%%%%%%%%%%%%%%%%%%% Proof of Selection Lowerbound %%%%%%%%%%%%%%%%%%%%%%%%%%

\subsection{Proof of \Cref{thm:selection-LB}}
Let $\mech$ be an $(\eps,\delta)$-DP and $(\alpha, T)$-accurate mechanism for $\selection{}$ in the \CR. We use $\mech$ to construct an $(\eps, \delta)$-DP algorithm $\alg$ that is $(\frac{\alpha}{n},n)$-accurate  for $\kindselection{d}$ in the batch model.
%We start by informally describing the reduction. 
We motivate our approach by first discussing an idea that doesn't quite work. Let $\mech$ be an accurate mechanism for \selection{d} in the \CR{} and  $\vec{\dset}$ be a dataset with $n$ records from $\zo^{dk}$. A naive approach to solving \kindselection{d}{} %using $\alg$ 
in the batch model is to run $k$ instantiations of $\mech$ for $n$ time steps each, one on each block of $d$ coordinates, to select the coordinate with the maximum sum in that block. However, running $k$ instantiations of $\mech$, as described, would result in a significant degradation of privacy, because every datapoint is used $k$ times, once for each instantiation of $\mech$. We instead reduce to $\selection{dk}$ and run a single instantiation of $\mech$ for about $nk$ time steps, where each datapoint in $\vec{\dset}$ is sent to $\mech$ only once. This approach doesn't suffer from privacy degradation. 

Algorithm $\alg$ proceeds in $k$ stages; the $r^{th}$ stage is dedicated to selecting the coordinate with the maximum sum in the $r^{th}$ block.
In the first stage, $\alg$ streams $\vec{\dset}$ to $\mech$. %record by record. 
In order to select the coordinate with the maximum sum from the first block, $\alg$ then sends $2n$ records of the form $(1^d0^d\dots0^d)$ to $\mech$. Then the sums of the coordinates in the first block of $\vec{\dset}$ become much larger than the sums in the other blocks. This ensures that at the end of the first stage, $\mech$ selects the coordinate with the maximum sum in the first block. In the second stage, $\alg$ sends $2n$ records of the form $(0^d1^d\dots1^d)$ to $\mech$ in order to balance out the number of extraneous 1's for each coordinate.
%increase in the sums of coordinates of the first block resulting from sending $(1^d0^d\dots0^d)$ for  previous $2n$ time steps. 
In order to select the coordinate with the maximum sum from the second block, $\alg$  sends $2n$ records of the form $(0^d1^d 0^d \dots0^d)$ to $\mech$. At the end of the second stage, $\mech$ selects the coordinate with the maximum sum in the second block.  Algorithm $\alg$ proceeds similarly for every block.

%Recall that the records $\dset_i$ for $\kindselection{d}$ are from an input space $\X = \zo^{d\cdot k}$. For each $r\in[k]$ we need to select the coordinate which maximizes the sum over $\{\dset^r_1,\dots,\dset^r_n\}$ and therefore provide an output from the space $[d]^k$.

\ifnum\pods=0 
The details of the algorithm appear in \Cref{alg:kindsel}.
\else
See \Cref{alg:kindsel} for details.
\fi
For ease of indexing, $\alg$ sends all-zero records in time steps $n+1$ to $2n$ in \Cref{line:allz_nonad} of \Cref{alg:kindsel}, to ensure that all stages have $4n$ time steps.

\begin{algorithm}[h]
    \caption{Batch algorithm $\alg$ for $\kindselection{}$}
    \label{alg:kindsel}
    
    \begin{algorithmic}[1]
        \Statex\textbf{Input:} $k$, $\vec{\dset} = (\dset_1,\dots,\dset_n) \in \X^n$, where $\X = \zo^{dk}$, and black-box access to mechanism $\mech$.
        \Statex\textbf{Output:}  $\vec{b} = (b_1,\dots,b_k) \in [d]^k$.
        \State Let $\vec{v}_j$ be a vector of length $dk$ with $d$ ones in coordinates $[dj]\setminus[d(j-1)]$
      %  $d(j-1) + 1,\dots,dj$ 
      and $0$ everywhere else; let $\overline{\vec{v}_j} \gets 1^{dk} - \vec{v}_j$.
        \State Construct a stream $\vec{\dstream} \gets \vec{\dset} \circ (0^{dk})^n \circ (\vec{v}_1)^{2n} \circ (\overline{\vec{v}_1})^{2n} \circ \dots \circ (\vec{v}_{k-1})^{2n} \circ (\overline{\vec{v}_{k-1}})^{2n} \circ (\vec{v}_k)^{2n}$ with $4kn$ records.\label{algline:construct-stream}
        \For{$t\in [T]$} 
        \State Send the record $\dstream_t$ to $\mech$ and get the 
        \ifnum\pods=0 
        corresponding
        \fi
        output $a_t$.
        \EndFor
        \For{$r\in[k]$}
        \State $b_r \gets a_{4rn} - d(r-1)$. If $b_r \not\in [d]$, then $b_r \gets 1.$
        \label{algline:kindsel_setoutput}
        \EndFor
        \State Output $\vec{b} \gets (b_1,\dots,b_k)$. 
    \end{algorithmic}
\end{algorithm}

\begin{lemma}\label{lem:kindsel-reduction}
Let $\alg$ be \Cref{alg:kindsel}. For all  $\eps > 0, \delta \geq 0$, $\alpha \in \R^{+}$, and $T,d,k,n \in \N$, where $T \geq 4kn$, if mechanism $\mech$ is $(\eps, \delta)$-differentially private and $(\alpha,T)$-accurate for $\selection{dk}$ in the \CR{}, then batch algorithm $\alg$ is $(\eps, \delta)$-differentially private and $(\frac \alpha n,n)$-accurate for $\kindselection{d}$.
\end{lemma}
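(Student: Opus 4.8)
The plan is to establish privacy and accuracy separately, exactly mirroring the structure of the proof of \Cref{lem:maxsum_reduction}. For privacy, I would fix neighboring datasets $\vec{\dset}, \vec{\dset}' \in \X^n$ and let $\vec{\dstream}, \vec{\dstream}'$ be the streams constructed in \Cref{algline:construct-stream} of \Cref{alg:kindsel} on these inputs. Since the only part of the stream that depends on the input is the prefix $\vec{\dset}$ (the remaining $4kn - n$ records are fixed regardless of the input), $\vec{\dstream}$ and $\vec{\dstream}'$ differ in exactly one record, i.e.\ they are neighboring streams. Because $\mech$ is $(\eps,\delta)$-DP in the \CR{} and $\alg$ merely post-processes the output stream $\avec$ of $\mech$ (the final \texttt{for} loop in \Cref{alg:kindsel} is a deterministic function of $\avec$), \Cref{prelim:postprocess} gives that $\alg$ is $(\eps,\delta)$-DP.

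\textbf{Accuracy.} The heart of the argument is a claim analogous to \eqref{eq:maxsum_obs}: for the stream $\vec{\dstream}$ constructed from $\vec{\dset}$, the coordinate index selected ``correctly'' by $\maxsum{dk}$ at time step $4rn$ recovers $h_r(\vec{\dset})$ after the shift by $d(r-1)$. Concretely, I would show that in the prefix $\xt$ with $t = 4rn$, the coordinates in block $r$ (i.e.\ coordinates $d(r-1)+1,\dots,dr$) all have sums that exceed the sums of every coordinate outside block $r$, and that within block $r$ the relative order of the coordinate sums is exactly that of the block-$r$ marginals of $\vec{\dset}$. The reason is bookkeeping on the padding: by construction, each of the vectors $\vec{v}_1,\overline{\vec{v}_1},\dots,\vec{v}_{r-1},\overline{\vec{v}_{r-1}},\vec{v}_r$ contributes either $2n$ ones or $0$ ones to each coordinate; summing up, every coordinate in block $r$ receives $2rn$ extraneous ones while every coordinate outside block $r$ receives $2(r-1)n$ extraneous ones — a gap of $2n$, which strictly dominates the at most $n$ ones any coordinate can get from $\vec{\dset}$. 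Hence $\selection{dk}(\xt) = d(r-1) + h_r(\vec{\dset})$ when ties are broken toward the smallest index (consistent with the tie-breaking in \Cref{def:maxsum}), and more importantly, for \emph{any} index $a_t$ in block $r$, the deficit $\err_{\selection{}}(\xt,a_t)$ equals (up to the common $2rn$ offset, which cancels in the difference) exactly $\err_{\selection{}}(\vec{\dset}[(r-1)d+1:rd],\,a_t - d(r-1))$; this is the key identity relating the two error measures, and accounts for why the error is scaled by $1/n$ in the definition of $\err_{\kindselection{}}$.

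\textbf{Stitching it together.} Given the identity above, I would run the same chain of inequalities as at the end of the proof of \Cref{lem:maxsum_reduction}: the event that $\alg$ is $\frac{\alpha}{n}$-accurate for $\kindselection{d}$ on $\vec{\dset}$ — namely $\max_{r\in[k]} \frac 1n \err_{\selection{}}(\vec{\dset}[(r-1)d+1:rd], b_r) \le \frac{\alpha}{n}$ — is implied by the event that $\mech$ is $\alpha$-accurate for $\selection{dk}$ on $\vec{\dstream}$ at the time steps $\{4n, 8n, \dots, 4kn\}$, which is in turn implied by the full accuracy event $\max_{t\in[T]}\err_{\selection{}}(\xt,a_t)\le\alpha$ that holds with probability at least $2/3$. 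One subtlety to handle carefully: in \Cref{algline:kindsel_setoutput}, if $b_r = a_{4rn} - d(r-1) \notin [d]$ it is reset to $1$; I need to check this reset never hurts when $\mech$ is accurate, because accuracy at time $4rn$ forces $a_{4rn}$ to point at a near-optimal (hence in-block, since out-of-block coordinates are off by $\ge 2n \gg \alpha$) coordinate, so $b_r \in [d]$ automatically on the accuracy event. The main obstacle I anticipate is getting the index arithmetic and the error-rescaling exactly right — in particular matching the somewhat unusual indexing $\yvec[r(d-1)+1:rd]$ appearing in the definition of $\err_{\kindselection{}}$ with the block decomposition $\yvec[(r-1)d+1:rd]$ used for $h_r$, and confirming the offsets in the $\maxsum{dk}$ sums cancel cleanly when passing to the deficit; none of this is deep, but it is where an error would most easily creep in.
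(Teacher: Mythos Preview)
Your proposal is correct and follows essentially the same approach as the paper's proof: privacy via post-processing of $\mech$'s output stream, and accuracy via the bookkeeping that at time $4rn$ the block-$r$ coordinates dominate, so that the in-block $\selection{dk}$ deficit equals the block's own $\selection{d}$ deficit (your key identity is exactly what the paper uses). One small correction: the out-of-block deficit is only guaranteed to be $\geq n$, not $\geq 2n$, since a block-$r$ coordinate may receive zero ones from $\vec{\dset}$ while an out-of-block coordinate may receive up to $n$; accordingly you should, as the paper does, explicitly dispose of the case $\alpha \geq n$ (where the $(\frac{\alpha}{n},n)$-accuracy claim is vacuous) before arguing that accuracy forces $a_{4rn}$ into block~$r$.
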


\begin{proof}
We start by reasoning about privacy. Fix neighboring datasets $\vec{\dset}$ and $\vec{\dset'}$ that are inputs to algorithm~$\alg$. Let $\vec{\dstream}$ and $\vec{\dstream}'$ be the streams constructed in Step~\ref{algline:construct-stream} of $\alg$ when it is run on $\vec{\dset}$ and $\vec{\dset}'$, respectively. By construction, $\vec{\dstream}$ and $\vec{\dstream'}$ are neighboring streams. Since $\mech$ is $(\eps, \delta)$-DP, and $\alg$ only post-processes the outputs received from $\mech$,  Lemma~\ref{prelim:postprocess} implies that $\alg$ is $(\eps, \delta)$-DP.

Next, we reason about accuracy. Fix a dataset $\vec{\dset}$
%= (\dset_1,\dots,\dset_n)$ with records $\dset_i \in \zo^{\frac{d}{k}\times k}$ 
and the corresponding data stream $\vec{\dstream}$ sent to  $\mech$. Consider a setting $\tau$ of the random coins of $\alg$. Since the transformation from $\mech$ to $\alg$ is deterministic, they correspond to coins used by $\mech$ when $\alg$ runs it as a subroutine.
%on dataset $\vec{\dset}$, corresponding to a run of $\mech$ as a subroutine. 
Let $\alpha_{\tau}$ be the realized error of $\mech$ with coins $\tau$, that is,
    \begin{equation*}\label{eq:acctopsel1}
    \alpha_{\tau} = \max_{t \in [4kn]} \left( \err_{\selection{dk}}(\xt,a_t)\right),
    %\alpha_{\tau} = \max_{t \in [4n\cdot k]} \left( \maxsum{d}(\xt) -\sum_{i\in[t]} \dstream_i[a_t]\right)
    \end{equation*}
where  $a_t$ are the answers with coins $\tau$. Similarly, let $\gamma_{\tau}$ be the realized error of $\alg$ with coins $\tau$, that is,
 \begin{align*}\label{eq:acctopsel2}
    \gamma_{\tau} &= \err_{\kindselection{dk}}(\yvec,\vec{b}) \\
    &= \frac 1 n \cdot \max_{r \in [k]} \left( \err_{\selection{d}}(\yvec[(r-1)d+1:rd],b_r)\right),
    %\gamma_{\tau} = \max_{r\in[k]}\left(\maxsum{d/k}(\dset^r) - \sum_{i\in[n]}\dset_i^r[b_r]\right)
    \end{align*}
    where $\vec{b} = (b_1,\dots,b_k)$ is the output of $\alg$ run with coins $\tau$.
    
The main observation in the accuracy analysis is that if $\alpha_{\tau}$ is small, so is $\gamma_{\tau}$. Note that if $\alpha\geq n$, the accuracy guarantee for $\alg$ is vacuous. Now assume $\alpha <n$.  For all blocks $r\in[k]$,
the sums in $\dstreamvec_{[4rn]}
=
\vec{\dset} \circ (0^{dk})^n \circ (\vec{v}_1)^{2n} \circ (\overline{\vec{v}_1})^{2n} \circ \dots \circ (\vec{v}_{r-1})^{2n} \circ (\overline{\vec{v}_{r-1}})^{2n} \circ (\vec{v}_r)^{2n} $
of all coordinates not in block $r$ are smaller than the sums of coordinates in block $r$ by at least $n$.
Consider coins $\tau$ with $\alpha_{\tau} \leq \alpha$. Since $\alpha_\tau<n,$ the index $a_{4rn}$ returned by $\mech$ is in block $r$ for all $r\in[k].$ Moreover, the error for each block is at most $\frac{\alpha_\tau}n$.
Therefore, $\gamma_\tau\leq\frac{\alpha_\tau}n\leq \frac \alpha n.$ Considering the probability of this event over all coins $\tau,$ we get
$$\Pr_{\text{coins $\tau$ of $\alg$}}\Big[\gamma_{\tau} \leq \frac \alpha n\Big] \geq
\Pr_{\text{coins $\tau$ of $\mech$}}[\gamma_{\tau} \leq \alpha] \geq \frac{2}{3},$$
where the last inequality holds because $\mech$ is $(\alpha,T)$-accurate.
We conclude that $\alg$ is $(\frac \alpha n,n)$-accurate.
\end{proof}

%%%%%%%%%%%%%%%%%%%%%%%%%%%%% Proof of Main Lowerbound Theorem %%%%%%%%%%%%%%%%%%%%%%%%%%%%%%%
\ifnum\pods=0
Finally, we prove Theorem~\ref{thm:selection-LB}.
\fi
\begin{proof}[Proof of Theorem~\ref{thm:selection-LB}]

%tiny epsilon
This proof's structure resembles that of Theorem~\ref{thm:main_maxsum}. First, for the case of ${\eps \leq \frac{2}{T}}$, we 
prove that $\alpha = \Omega(T)$.  Let $\vec{e}_j$ be a record of length $d$ with 
1 in coordinate $j$ and 0 everywhere else. Let $\vec{\dstream}=(\vec{e}_1)^{T/4}\circ (0^d)^{3T/4}$ and $\vec{\dstream}' =(\vec{e}_2)^{T/4}\circ (0^d)^{3T/4}$. Proceeding as in the proof of \Cref{thm:main_maxsum} (using group privacy and the error associated with selection) yields $\alpha = \Omega(T)$.

For all other values of $\eps$, we reduce from $\kindselection{}$, relying on the lower bounds for $\kindselection{}$ from \Cref{lem:kindsel-mainlb}.
Fix $T, d, \eps$. Given an integer $k$, the reduction of  \Cref{lem:kindsel-reduction} maps a batch instance of $\kindselection{d'}$ of size $n$ to an instance of $\selection{d}$ with $d = d'k$ and $T = 4nk$. The reduction applies as long as $d' = \frac{d}{k}\geq 2$ and $n= \frac{T}{4k}\geq 1$ are integers. We will ignore the integrality requirement (which can be addressed by appropriate padding) and allow any $k$ between $1$ and $\min(\frac d 2 , \frac T 4)$. 

% Given those, we can choose any value of $k$ between 1 and $\min(\frac{d}{2},\frac{T}{4})$, we can apply the reduction of  \Cref{lem:kindsel-reduction}
% to reduce from $\kindselection{d'}$ to $\selection{d}$ where $d' = \frac d k$.
% (When $d$ is not divisible by $k$, one construct instances with dimension $\lfloor d/k \rfloor$ instead of $d/k$; we ignore the effect of rounding for simplicity.)  The value of $k$ must be at most $d/2$ because $\kindselection{d'}$ is only interesting for $d'\geq 2$. The value of $k$ must be at most $T/4$ because the reduction requires $T = 4d'k$.

%
When $\delta >0$, the reduction leads to a lower bound on the error of $\min\bparen{\Omega\bparen{\frac{\sqrt{k} \log d' }{\eps \log k}}, n} $ when $k$ and $d'$ are sufficiently large constants. In our setting, this translates to a lower bound of 
 $\Omega(\alpha_k)$ for  $\alpha_k = {\min\bparen{ \frac{\sqrt{k} \log(2 + d/k)}{\eps \log (2 + k)}, \frac{T}{k}}}$. 
(We add $2$ inside the logarithms to avoid 0 or subconstant log terms; this does not change the asymptotics.) Our goal is to select the value of $k \in [1, \min(\frac{d}{2},\frac{T}{4})]$ that maximizes $\alpha_k$.
For fixed $T, d, \eps$, let $k^* = k^*(T,d,\eps)= \max(1,k')$ where $k'$ denotes the largest value of $k$ where the two terms defining $\alpha_k$ equalize (that is, $k'$ satisfies $k'\sqrt{k'} \log(2 + d/k') / \log(2+ k') = \eps T$). 
We use two basic facts about $\alpha_k$: first, for $d> 15,000$, the function $\alpha_k$ is increasing on $[1,k^*)$ and decreasing on $(k^*, \infty)$. 
Second, its maximum value $\alpha_{k^*}$ is $\tilde \Omega(\frac{T^{1/3} \log^{2/3} d}{\eps ^{2/3}})$.

We consider four regimes for the triple $(T,d,\eps)$: 
\begin{enumerate}[label=(\emph{\alph*})]
    \item $k^*(T,d,\eps)=1$: In this case, $\alpha_k$ is maximized at $k=1$ and we obtain a lower bound of $\Omega(T/k) = \Omega(T)$.
    \item $k^*(T,d,\eps)>\min(\frac{d}{2},\frac{T}{4})$ and $2d \leq T$: In this case, we set $k=d/2$ and get a lower bound of $\alpha_k = \frac{\sqrt{k} \log(2 + d/k)}{\eps \log (2 + k)}$ (since $k \leq k^*$), which is $\Omega(\frac{\sqrt{d}}{\eps \log(2+d)}) = \tilde \Omega(\frac{\sqrt{d}}{\eps})$.
    \item  $k^*(T,d,\eps)>\min(\frac{d}{2},\frac{T}{4})$ and $2d > T$: This case is not possible for large $T$. For it to occur, we must have $k^* > T/4$, which implies that $\alpha_{k^*} < 4$. Since $\alpha_{k^*} = \tilde \Omega(\frac{T^{1/3} \log^{2/3} d}{\eps ^{2/3}})$, we get that $\eps>1$ (for sufficiently large $T$), contradicting our assumptions.
    \item $k^*(T,d,\eps) \in [1,\min(\frac{d}{2},\frac{T}{4})]$: In this case, we set $k=k^*$ and obtain a lower bound of $\alpha_{k^*} = \tilde \Omega(\frac{T^{1/3} \log^{2/3} d}{\eps ^{2/3}})$.
\end{enumerate}
Thus, for all possible relationships between $T,d$ and $\eps$, we obtain a lower bound that is one of three terms in the theorem statement.

The setting in which $\delta=0$ is similar. For a given $k \in [1,\min(\frac{d}{2},\frac{T}{4})]$, we obtain a lower bound of $\Omega(\alpha_k)$ for $\alpha_k = \min\big(\frac{k \log(2 + \frac d k)}{\eps}, \frac {T}{k} \big)$. The remaining calculations parallel the case where $\delta>0$, except that now $\alpha_{k^*}=\Theta\big(\sqrt{\frac{T\log d}{\eps}}\big)$.
\end{proof}

\section{Adaptive Upper Bounds}\label{sec:upper-bounds}
In this section, we define the adaptive continual release model and describe differentially private mechanisms for two types of problems in this model: $\selection{d}$ and approximating functions with bounded sensitivity ($\ell_2$ sensitivity in the case of approximate differential privacy and $\ell_1$ sensitivity in the case of pure DP).  
% $f$ with inputs from ${\zo^*}$ that have $\ell_2$ sensitivity $\Delta_f = 1$ \sstext{(in the case of approximate DP; we require $\ell_1$ sensitivity $\Delta_f = 1$ in the case of pure DP)}. 
Our mechanisms are $(\alpha,T)$-accurate, where the upper bounds for $\alpha$ match the lower bounds obtained in previous sections in the \CR{} up to logarithmic factors in the time horizon $T$, the number of coordinates $d$, and the inverse of the privacy parameter $\frac{1}{\delta}$.

\ifnum\pods=1
\begin{table*}[t]
\begin{tabular}{|c||c|c|c|c|}
\hline  
   & \multicolumn{2}{c|}{$\rho$-zCDP}  & \multicolumn{2}{c|}{$(\eps,0)$-DP}  \\
   \cline{2-5}
     Problem      & Tree & Recomputation & Tree  & Recomputation  \\
     \hline \hline 
     \maxsum{} 
     & \( \frac{\sqrt{d} \log T \sqrt{\log(dT)}}{\sqrt{\rho}} \)
     & \( \sqrt[3]{\frac{T\log T}{\rho}} \)
     & \( \frac{d (\log d) \log^3 T}{\eps} \)
     & \(\sqrt{\frac{T\log T}{\eps}} \)
     \\
     \hline 
     \selection{} 
     & \( \frac{\sqrt{d} \log T \sqrt{\log(dT)}}{\sqrt{\rho}} \)
     & \( \frac{T^{1/3}\log^{2/3}(d T)  }{\rho^{1/3}} \)
     & \(  \frac{d (\log d) \log^3 T}{\eps} \)
     & \(\sqrt{\frac{T\log (dT)}{\eps}} \)
     \\
     \hline 
\end{tabular}
\caption{Our (asymptotic) upper bounds on the error of DP mechanisms in the adaptive continual release model.}    
\label{tab:algorithmic-results}
\end{table*}
\else
\fi
% \begin{table}[t]
% \begin{tabular}{|c||c|c|c|}
% \hline 
% & & \maxsum{} & \selection{}\\
% \hline \hline 
%  & Tree &  \( \frac{\sqrt{d} \log T \sqrt{\log(dT)}}{\sqrt{\rho}} \) & \( \frac{\sqrt{d} \log T \sqrt{\log(dT)}}{\sqrt{\rho}} \)\\
% \cline{2-4}
% {$\rho$-zCDP} & Recomputation & \(\frac{T^{1/3} \sqrt{\log T}}{\rho^{1/3}} \) & \( \frac{T^{1/3}\log^{2/3}(d T)  }{\rho^{1/3}} \)  \\
% \hline 
% & Tree & \( \frac{d (\log d) \log^3 T}{\eps} \) & \( \frac{d (\log d) \log^3 T}{\eps} \) \\
% \cline{2-4}
% {$(\eps,0)$-DP} & Recomputation & \(\sqrt{\frac{T\log T}{\eps}} \) & \(\sqrt{\frac{T\log (dT)}{\eps}} \)\\
% \hline 
% \end{tabular}
%\caption{Our (asymptotic) upper bounds on the error of DP mechanisms in the adaptive continual release model.}    
%\label{tab:algorithmic-results}
% \end{table}

%%%%%%%%%%%%%%%%%%% ADAPTIVE MODEL DEFINITION %%%%%%%%%%%%%%%%

\subsection{Adaptive Continual Release}
\label{sec:defs-adaptive}
%https://arxiv.org/abs/2003.14265
%https://arxiv.org/abs/2004.05975
%https://arxiv.org/abs/2101.10836
%Defining privacy for the \aCR{} is more complex than for its nonadaptive variant. 

In the \aCR{}, the input stream given to a mechanism $\mech$ is chosen adversarially. That is, $\mech$ interacts with a randomized adversarial process $\adv$ that runs for $T$ timesteps; at timestep $t \in [T]$, the process $\adv$ receives $a_t$ from $\mech$, updates its internal state, and produces input record $\random{\dstream}_{t+1}$ that is sent to $\mech$ at timestep $t+1$. Process $\adv$ can choose $\dstream_{t+1}$ based on the previous input records $\xt$ and $\mech$'s previous outputs $\vec{a}_{[t]}$. We make no assumptions on $\adv$ regarding running time or complexity; its only limitation is that it does not see the internal coins of $\mech$.

\begin{definition}
A mechanism $\mech$ is  \emph{($\alpha$,T)-accurate} for a function $f$ in the \aCR{} if for all processes $\adv$, the error of $\mech$ with respect to $\adv$ is at most $\alpha$ with high probability, that is,
$$\Pr_{\text{coins of } \mech,\adv} \left[\max_{t\in [T]} \err_f(a_t; \xt) \leq \alpha\right] \geq \frac{2}{3}.$$
\end{definition}

A similar notion of accuracy was considered in work on adversarial streaming \cite{EliezerJWY20, HassidimKMMS20, KaplanMNS21}, though those articles do not directly address privacy.

Next, we define \emph{(event-level) privacy in the \aCR{}} \ifnum\pods=0
, which is trickier than in the \CR{}
\else
\fi
. This concept is implicit in \cite{SmithT13}, but to our knowledge has not been previously defined. Privacy is defined with respect to the game $\Pi_{\mech,\adv}$, described in \Cref{alg:privacy_game}, between mechanism $\mech$ and an adversary $\adv$. In all timesteps except one, $\adv$ outputs a single input record which $\Pi_{\mech,\adv}$ simply forwards to $\mech$. However, there is a special {\em challenge timestep} $t^* \in [T]$, selected by $\adv$, in which $\adv$  provides two records $x_{t^*}^{(L)}$ and $x_{t^*}^{(R)}$. 
The game comes in two versions, specified by its input parameter $\side \in \{L,R\}$ which is not known to $\adv$ or $\mech$: in one version, the record $x_{t^*}^{(L)}$ is handed to $\mech$ at timestep $t^*$; in the other, the record $x_{t^*}^{(R)}$ is handed to $\mech$ instead. The mechanism is private if the distributions on the adversary's view, which consists of its internal randomness and the transcript of messages it sends and receives, 
are close in the two versions of the game.

When the adversary decides in advance on all $T+1$ records that it outputs over the course of the game, the resulting definition is equivalent to the nonadaptive version (Definition~\ref{def:na-privacy}). The version we give here captures a richer class of settings. 

Intuitively, we may think of $x_{t^*}^{(L)}$ as the data of person $t^*$, and of $x_{t^*}^{(R)}$ as a dummy value (say, all 0's). The parameter $\side$ then controls whether the data of person $t^*$ is included in the computation or not. The privacy requirement is that an outside attacker cannot tell whether $t^*$'s data was used, even if the attacker has full knowledge of the process generating the data stream. The adversary $\adv$ combines the data generation process and the attack itself in one entity, so that our model allows for an arbitrary relationship between them.

\newcommand{\rdtype}{\text{\sf type}}
\newcommand{\chall}{\text{\sf challenge}}
\newcommand{\reg}{\text{\sf regular}}

\begin{algorithm}[ht]
\caption{Privacy game $\Pi_{\mech,\adv}$ 
\ifnum\pods=0
for the \aCR{}
\fi
}
\label{alg:privacy_game}
    \begin{algorithmic}[1]
        \Statex \textbf{Input:}  time horizon $T \in \N$, $\side \in \{L, R\}$ (not known to $\adv$).
        \For{\text{$t=1$ to $T$}}
        \State $\adv$ outputs $\rdtype_t \in \{\chall, \reg\}$, where $\chall$ is chosen once during the game.
        % $\adv$ decides if $t$ is the challenge timestep.
            \If{$\rdtype_t = \reg$}
                \State $\adv$ outputs $x_t \in \X$ which is sent to $\mech$.
            \EndIf
            \If{$\rdtype_t = \chall$}
                \State $t^* \gets t$. 
                \State $\adv$ outputs $(x_t^{(L)},x_t^{(R)}) \in \X^2$.
                \State $x_t^{(\side)}$ is sent to $\mech$.  
            \EndIf
            \State $\mech$ outputs $a_t$ which is given to $\adv$.
        \EndFor
    \end{algorithmic}
\end{algorithm}

\begin{definition}
The \emph{view of $\adv$} in privacy game $\Pi_{\mech,\adv}$  consists of $\adv$'s internal randomness and the transcript of messages it sends and receives. Let \emph{$V_{\mech,\adv}^{(\side)}$} denote $\adv$'s  view  at the end of the game run with input $\side \in \{L,R\}$.
% Let \emph{$V_{\mech,\adv}^{(\side)}$} denote \emph{$\adv$'s view} in privacy game $\Pi_{\mech,\adv}$ when it is run with input $\side \in \{L,R\}$. This view consists of $\adv$'s internal randomness and the transcript of messages it sends and receives. 
%its interaction with $\mech$ 
%in privacy game $\Pi_{\mech,\adv}$.
\end{definition}

One could also define the adversary's view as its internal state at the end of the game. The version we define contains enough information to compute that internal state, but is simpler to work with. 
% Note  that this definition of view requires privacy to apply for all possible internal coins of the adversary\srnote{What does it mean that the privacy applies for all possible internal coins of the adversary? First, privacy in the adaptive model has not been defined yet. Second, it is defined as closeness of random variables.}. 

%\ssnote{zCDP undefined, so rho closeness in definition below is undefined without the text I've suggested.} 
In addition to $(\eps,\delta)$-DP, we consider a related notion, called zCDP \cite{BunS16}. See Appendix~\ref{sec:zCDP} for background on zCDP and the notion of $\rho$-closeness of random variables ($\simeq_{\rho})$.

\begin{definition}
A mechanism $\mech$ is \emph{$(\epsilon,\delta)$-DP in the \aCR{}} if, for all adversaries~$\adv$, 
$$V_{\mech,\adv}^{(L)} \approx_{\eps,\delta} V_{\mech,\adv}^{(R)}. $$

\noindent A mechanism $\mech$ is \emph{$\rho$-zCDP in the \aCR{}} if for all adversaries~$\adv$,
\begin{equation*} 
V_{\mech,\adv}^{(L)}\simeq_{\rho} V_{\mech,\adv}^{(R)}. 
\end{equation*}
\sstext{The symbol $\simeq_{\rho}$ denotes $\rho$-closeness (\Cref{def:rho-indistinguishable})}.
\end{definition}

%%%%%%%%%%%%%%%%%%%%% THEOREM STATEMENTS %%%%%%%%%%%%%%%%%%%

\ifnum\pods=1
\subsection{Summary of Adaptive Upper Bounds}\label{sec:adaptive-up-summary}
Our upper bounds on the error of differentially private algorithms for $\maxsum{d}$ and $\selection{d}$ in the adaptive continual release model are summarized in Table~\ref{tab:algorithmic-results}. The corresponding theorems are stated in Section~\ref{sec:adaptive-up-statments}. The upper bounds in the table are attained by two simple mechanisms: one uses the binary tree mechanism and the other recomputes the target function at regular intervals. The bounds stated for $\maxsum{d}$ and obtained via recomputing periodically apply more generally: to all sensitivity-1 functions.

Unlike the proofs of privacy in previous work, the privacy proofs for our adaptive upper bounds (in \Cref{sec:details_ub}) use a simulator that constructs inputs to an underlying mechanism online,  instead of considering only neighboring datasets defined ahead of time.
This allows us to reason about input streams that look very different in different executions of the game (for example, streams whose long-term behavior depends on the mechanism's response at the challenge timestep). 
The accuracy analyses need to account for adaptivity as well.

\else
\fi
%In this subsection, we state theorems that summarize the performance guarantees of our algorithms . 

\ifnum \pods=1
\addcontentsline{toc}{section}{References}
\bibliographystyle{plain}
\bibliography{bibliography}
\appendix
\else
\fi

\ifnum\pods=0
\subsection{Statements of Adaptive Upper Bounds}\label{sec:adaptive-up-statments}
\else
\section{Additional Preliminaries}\label{app:preliminaries}

\subsection{Facts on Differential Privacy}
Differential privacy protects groups of individuals.

\begin{lemma}[Group Privacy~\cite{dwork2006calibrating}]\label{lem:group_privacy} Every $(\eps, \delta)$-DP algorithm \alg is $\left(\ell \eps, \delta' \right)$-DP for groups of size $\ell$, where $\delta' = \delta\frac{e^{\ell \eps} -1}{e^\eps-1}$; that is, for all datasets $\vec{\dstream}, \vec{\dstream}'$ such that $\|\vec{\dstream} - \vec{\dstream}' \|_0 \leq \ell$,
\begin{equation*}
    \alg(\vec{\dstream}) \approx_{\ell \eps, \delta'}   \alg(\vec{\dstream}').
\end{equation*}
\end{lemma}

Differential privacy is closed under post-processing.
\begin{lemma}[Post-Processing~\cite{dwork2006calibrating,BunS16}]\label{prelim:postprocess} If $\alg$  is an $(\eps, \delta)$-DP algorithm with output space $\mathcal{Y}$ and $\mathcal{B}$  is a randomized map from $\mathcal{Y}$ to $\mathcal{Z}$, then the algorithm $\mathcal{B} \circ \alg$ is $(\eps, \delta)$-DP.
\end{lemma}

\begin{definition}[Sensitivity] Let $f: \X^n \rightarrow \R^m$ be a function. Its $\ell_1$-sensitivity is
\begin{equation*}
    \sup_{\text{neighbors } \vec{\dstream}, \vec{\dstream}' \in \X^n} \|f(\vec{\dstream}) - f(\vec{\dstream}')\|_1.
\end{equation*}
To define $\ell_2$-sensitivity, we replace the $\ell_1$ norm with the $\ell_2$ norm.
\end{definition}

We use the standard Laplace and exponential mechanisms to ensure differential privacy. 

\begin{definition}[Laplace Distribution] The Laplace distribution with parameter $b$ and mean $0$, denoted  $\Lap(b)$,
has probability density
\begin{equation*}
    h(r) = \frac{1}{2b}e^{-\frac{|r|}{b}} \text{ for all $r \in \mathbb{R}$}.
\end{equation*}
\end{definition}

\begin{lemma}[Laplace Mechanism, \cite{dwork2006calibrating}]\label{prelim:laplace_dp} Let $f : \X^n \rightarrow \mathbb{R}^m$ be a function with $\ell_1$-sensitivity at most $\Delta_1$. Then the Laplace mechanism is algorithm
\begin{equation*}
    \alg_f(\vec{\dstream}) = f(\vec{\dstream}) + (Z_1, \ldots, Z_m),
\end{equation*}
where $Z_i \sim \Lap\left(\frac{\Delta_1}{\eps}\right)$. Algorithm $\alg_f$ is $(\eps, 0)$-DP.
\end{lemma}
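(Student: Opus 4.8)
The plan is the standard two-step argument: establish a uniform pointwise bound on the ratio of output densities for neighboring inputs, and then integrate it to obtain the set-based guarantee demanded by Definition~\ref{def:differentially private} (with $\delta=0$).

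First I would fix an arbitrary pair of neighboring datasets $\vec{\dstream}, \vec{\dstream}' \in \X^n$ and write out the density of $\alg_f(\vec{\dstream})$ on $\R^m$. Since $Z_1, \dots, Z_m$ are independent with $Z_i \sim \Lap(\Delta_1/\eps)$, the random vector $\alg_f(\vec{\dstream}) = f(\vec{\dstream}) + (Z_1, \dots, Z_m)$ has density $p_{\vec{\dstream}}(\vec{z}) = \prod_{i=1}^{m} \frac{\eps}{2\Delta_1}\exp\!\big(-\tfrac{\eps}{\Delta_1}\,|f(\vec{\dstream})_i - z_i|\big)$ at $\vec{z} = (z_1,\dots,z_m)$, and likewise for $\vec{\dstream}'$, where $f(\vec{\dstream})_i$ denotes the $i$-th coordinate of $f(\vec{\dstream})$.

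Next I would bound the ratio $p_{\vec{\dstream}}(\vec{z})/p_{\vec{\dstream}'}(\vec{z})$ uniformly in $\vec{z} \in \R^m$. The normalizing constants cancel, and the ratio equals $\exp\!\big(\tfrac{\eps}{\Delta_1}\sum_{i=1}^m (|f(\vec{\dstream}')_i - z_i| - |f(\vec{\dstream})_i - z_i|)\big)$. Applying the triangle inequality coordinatewise gives $|f(\vec{\dstream}')_i - z_i| - |f(\vec{\dstream})_i - z_i| \le |f(\vec{\dstream})_i - f(\vec{\dstream}')_i|$, so the exponent is at most $\tfrac{\eps}{\Delta_1}\,\|f(\vec{\dstream}) - f(\vec{\dstream}')\|_1$, which is $\le \eps$ because $\vec{\dstream}$ and $\vec{\dstream}'$ are neighbors and $f$ has $\ell_1$-sensitivity at most $\Delta_1$. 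Hence $p_{\vec{\dstream}}(\vec{z}) \le e^{\eps}\, p_{\vec{\dstream}'}(\vec{z})$ for every $\vec{z}$.

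Finally I would upgrade the pointwise bound to the event-level statement: for any measurable $S \subseteq \R^m$, integrating the density inequality over $S$ yields $\Pr[\alg_f(\vec{\dstream}) \in S] = \int_S p_{\vec{\dstream}}(\vec{z})\,d\vec{z} \le e^{\eps}\int_S p_{\vec{\dstream}'}(\vec{z})\,d\vec{z} = e^{\eps}\Pr[\alg_f(\vec{\dstream}') \in S]$, which is one of the two inequalities defining $\approx_{\eps,0}$. Swapping the roles of $\vec{\dstream}$ and $\vec{\dstream}'$ (the sensitivity bound is symmetric) gives the reverse inequality, so $\alg_f(\vec{\dstream}) \approx_{\eps,0} \alg_f(\vec{\dstream}')$; since the neighbors were arbitrary, $\alg_f$ is $(\eps,0)$-DP. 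There is no real obstacle here: the only points requiring a little care are applying the triangle inequality in the correct direction and observing that the $p_{\vec{\dstream}}$ are genuine probability densities, which legitimizes passing from the pointwise ratio bound to the integral inequality over $S$.
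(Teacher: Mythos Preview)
Your proof is correct and is exactly the standard argument for the Laplace mechanism. The paper does not actually prove this lemma; it is stated as a preliminary result with a citation to \cite{dwork2006calibrating}, so there is nothing to compare against beyond noting that your argument is the one given in that original source.
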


\begin{lemma}[Exponential Mechanism \cite{McTalwar}]\label{lem:expmech}
Let $L$ be a set of outputs and $g: L \times \X^n \to \mathbb{R}$ be a function that measures the quality of each output on a dataset. Assume that for every $m \in L$, the function $g(m,.)$ has $\ell_1$-sensitivity at most $\Delta$. Then, for all $\eps,n > 0$ and  all datasets $\dset \in \X^n$, there exists an $(\eps, 0)$-DP mechanism that outputs an element $m\in L$ such that, for all $a>0$,
\begin{equation*}
    \Pr[\max_{i \in [L]} g(i,\dset) -  g(m,\dset) \geq 2\Delta \frac{(\ln |L| + a)}{\eps}] \leq e^{-a}. 
\end{equation*}
\end{lemma}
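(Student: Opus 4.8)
The plan is to instantiate the classical construction of McSherry and Talwar: on input $\dset$, the mechanism $\mech$ outputs each candidate $m \in L$ with probability proportional to $\exp\left(\frac{\eps}{2\Delta} g(m,\dset)\right)$. I would establish the two required properties separately, each by a short direct calculation, after noting that the mechanism is well defined whenever $L$ is finite (which suffices for every use in this paper).

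\textbf{Privacy.} Fix neighboring datasets $\dset,\dset' \in \X^n$ and write $Z(\dset) = \sum_{m' \in L} \exp\left(\frac{\eps}{2\Delta} g(m',\dset)\right)$ for the normalizing constant. For any fixed outcome $m$, the ratio $\Pr[\mech(\dset)=m]/\Pr[\mech(\dset')=m]$ factors as $\exp\left(\frac{\eps}{2\Delta}\left(g(m,\dset)-g(m,\dset')\right)\right) \cdot \frac{Z(\dset')}{Z(\dset)}$. By the sensitivity hypothesis $|g(m,\dset)-g(m,\dset')| \le \Delta$, so the first factor is at most $e^{\eps/2}$; applying the same inequality termwise to $Z(\dset')$ gives $Z(\dset') \le e^{\eps/2} Z(\dset)$, so the second factor is also at most $e^{\eps/2}$. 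Multiplying and then summing over outcomes in an arbitrary set $S$ yields $(\eps,0)$-DP. This is exactly where the constant $2$ in the exponent $\frac{\eps}{2\Delta}$ is used: it lets one charge a factor $e^{\eps/2}$ to the numerator and a factor $e^{\eps/2}$ to the normalizer.

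\textbf{Accuracy.} Let $\mathrm{OPT} = \max_{i}g(i,\dset)$ and set $t = \frac{2\Delta}{\eps}(\ln|L| + a)$. The probability that the returned $m$ satisfies $g(m,\dset) \le \mathrm{OPT} - t$ equals $\frac{1}{Z(\dset)}\sum_{m : g(m,\dset) \le \mathrm{OPT}-t}\exp\left(\frac{\eps}{2\Delta}g(m,\dset)\right)$. I would upper bound the numerator by $|L|\exp\left(\frac{\eps}{2\Delta}(\mathrm{OPT}-t)\right)$ and lower bound the denominator by the single term $\exp\left(\frac{\eps}{2\Delta}\mathrm{OPT}\right)$ coming from a maximizer, giving a bound of $|L|\exp\left(-\frac{\eps t}{2\Delta}\right)$; substituting the choice of $t$ collapses this to $|L| \cdot e^{-\ln|L| - a} = e^{-a}$, as claimed.

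Neither step presents a genuine obstacle. If forced to name the delicate points, they are (i) keeping the two $e^{\eps/2}$ contributions to the privacy ratio separate rather than conflating them into a single $e^{\eps}$ bound applied in the wrong place, and (ii) lower-bounding $Z(\dset)$ by its single largest term instead of discarding it. Measurability is immediate since $L$ is discrete, so no further care is needed there.
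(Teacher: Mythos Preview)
Your proposal is correct and is exactly the standard McSherry--Talwar argument. The paper does not actually prove this lemma; it merely states it as a cited preliminary from \cite{McTalwar}, so there is nothing to compare against beyond noting that your write-up is the canonical proof.
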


%%%%%%%%%%%%%%%%%%% rho zCDP prelims %%%%%%%%%%%%%%%%
\subsection{Facts on $\rho$-zCDP}\label{sec:zCDP}

This section contains facts about {\em zero-concentrated differential privacy} (zCDP). It differs from $(\eps,\delta)$-differential privacy in that zCDP requires output distributions on all pairs of neighboring datasets to be $\rho$-close (\Cref{def:rho-indistinguishable}) instead of $(\eps,\delta)$-indistinguishable.
In \Cref{sec:upper-bounds}, we prove that our mechanisms are zCDP and then use conversion to $(\eps,\delta)$-differential privacy (\Cref{prop:CDPtoDP}) to compare our upper and lower bounds.

\begin{definition}[R\'enyi Divergence \cite{Renyi61}]
Let $Q$ and $Q'$ be distributions on $\mathcal{Y}$. For $\xi \in (1,\infty)$, the R\'enyi divergence of order $\xi$ between $Q$ and $Q'$(also called the $\xi$-R\'enyi Divergence) is defined as
\begin{align}
    D_{\xi}(Q \| Q') = \frac{1}{\xi-1} \log\left( \E_{r \sim Q'} \left[ \left(\frac{Q(r)}{Q'(r)}\right)^{\xi-1} \right]  \right).
\end{align}
Here $Q(\cdot)$ and $Q'(\cdot)$ denote either probability masses (in the discrete case) or probability densities (when they exist). More generally, one can replace  $\frac{Q(.)}{Q'(.)}$ with the the Radon-Nikodym derivative of $Q$ with respect to $Q'$.
\end{definition}

\begin{definition}[$\rho$-Closeness]\label{def:rho-indistinguishable}
Random variables $R_1$ and $R_2$ over the same outcome space $\mathcal{Y}$ are  {\em $\rho$-close} (denoted $R_1 \simeq_{\rho} R_2$) if for all $\xi \in (1,\infty)$, 
\begin{align*}
D_{\xi}(R_1\|R_2) \leq \xi\rho \text{ and }  D_{\xi}(R_2\|R_1) \leq \xi\rho,
\end{align*}
where $D_{\xi}(R_1\|R_2)$ is the $\xi$-R\'enyi divergence  between the distributions of $R_1$ and $R_2$.
\end{definition}

\begin{definition}[zCDP in Batch Model~\cite{BunS16}]
A randomized batch algorithm $\alg : \X^n \to \mathcal{Y}$ is $\rho$-zero-concentrated differentially private ($\rho$-zCDP), if, for all neighboring datasets $\vec{\dset},\vec{\dset}' \in \mathcal{X}^n$,
$$\alg(\vec{\dset}) \simeq_{\rho} \alg(\vec{\dset'}).$$
\end{definition}

One major benefit of using zCDP is that this definition of privacy admits a clean composition result. 
%We use it when analyzing the privacy of the algorithms in \Cref{sec:upper-bounds}.

\begin{lemma}[Composition \cite{BunS16}] \label{lem:cdp_composition}
Let $\alg : \mathcal{X}^n \to \mathcal{Y}$ and $\alg' : \mathcal{X}^n \times \mathcal{Y} \to \mathcal{Z}$ be batch algorithms. Suppose $\alg$ is $\rho$-zCDP and $\alg'$ is $\rho'$-zCDP. Define batch algorithm $\alg'' : \mathcal{X}^n \to \mathcal{Y} \times \mathcal{Z}$ by $\alg''(\vec{\dset}) = \alg'(\yvec,\alg(\yvec))$. Then $\alg''$ is $(\rho+\rho')$-zCDP.
\end{lemma}

The \emph{Gaussian mechanism}, defined next,
%is used in \Cref{sec:upper-bounds}. It 
privately estimates a real-valued function by adding Gaussian noise to its value. %The following lemma shows that the Gaussian mechanism is zCDP.

\begin{definition}[Gaussian Distribution] The Gaussian distribution with parameter $\sigma$ and mean 0, denoted $\Gauss(0,\sigma^2)$, has probability density
\begin{equation*}
    h(r) = \frac{1}{\sigma \sqrt{2\pi}} e^{-\frac{r^2}{2\sigma^2}}  \text{ for all $r \in \mathbb{R}$}.
\end{equation*}
\end{definition}

\begin{lemma}[Gaussian Mechanism \cite{BunS16}] \label{prop:gaussian-mech}
Let $f : \X^n \to \mathbb{R}$ be a function with $\ell_2$-sensitivity at most $\Delta_2$. Let $\alg$ be the batch algorithm 
%$\alg : \mathcal{X}^n \to \mathbb{R}$ 
that, on input $\yvec$, releases a sample from $\mathcal{N}(f(\yvec), \sigma^2)$. Then $\alg$ is $(\Delta_2^2/2\sigma^2)$-zCDP.
\end{lemma}

The final lemma in this section relates zero-concentrated differential privacy to $(\eps,\delta)$-differential privacy.

\begin{lemma}[Conversion from zCDP to DP \cite{BunS16}]\label{prop:CDPtoDP}
For all $\rho,\delta > 0$, if batch algorithm $\alg$ is $\rho$-zCDP, then $\alg$ is $(\rho+2\sqrt{\rho \log(1/\delta)},\delta)$-DP.
\end{lemma}

\subsection{Useful Concentration Inequalities}
\begin{lemma}\label{lem:gauss_conc}
For all random variables $R \sim \gauss(0,\sigma^2)$,
\begin{equation*}
    \Pr[|R| > \ell] \leq 2e^{-\frac{\ell^2}{2 \sigma^2}}.
\end{equation*}
\end{lemma}
\begin{lemma}\label{lem:gauss_max}
Consider $m$ random variables $R_1,\dots,R_m \sim \gauss(0,\sigma^2)$. Then
\begin{equation*}
    \Pr[\max_{j \in [m]} |R_j| > \ell] \leq 2 m e^{-\frac{\ell^2}{2 \sigma^2}}.
\end{equation*}
\end{lemma}
\begin{proof}
By a union bound and Lemma~\ref{lem:gauss_conc},
\begin{align*}
    \Pr[\max_{i \in [m]} |R_i| > \ell] 
      = \Pr( \exists i \in [m] \text{ such that } |R_i| > \ell) \\
     \leq \sum_{i=1}^m \Pr( |R_i| > \ell) \leq
    \sum_{i=1}^m  2e^{-\frac{\ell^2}{2 \sigma^2}}  =  2m e^{-\frac{\ell^2}{2 \sigma^2}}.
\end{align*}
\end{proof}
A similar union bound argument yields the following concentration inequality on the maximum of the absolute values of i.i.d.\ Laplace random variables.
\begin{lemma}\label{lem:lapmax}
Fix $m \in \N$, $\lambda > 0$. Consider $m$ random variables $R_1,\dots,R_m \sim Lap(\lambda)$. Then for all $a>0$,
$$\Pr(\max_{i \in [m]} |R_i| > \lambda (\log m + \log a) ) \leq e^{-a}.$$
\end{lemma}

\section{Proofs  Omitted from Section~\ref{sec:selection-lowerbound}}\label{app:sec4proofs}

Item~1 in Lemma~\ref{lem:kindsel-mainlb} follows from \Cref{lem:lbkindsel} below.
\begin{theorem}[\cite{US17,Ullman21pers}]\label{lem:lbkindsel}
For all $\epsilon \in (0,1], \delta \in (0,1/n]$, $\gamma \in [0,\frac{1}{20}]$, $d,n,k \in \N$, if Algorithm $\alg$ is $(\eps,\delta)$-differentially private and $(\gamma,n)$-accurate for $\kindselection{d}$, then $n = \Omega(\frac{\sqrt{k}\log d}{\gamma \eps \log (k+1)})$.
\end{theorem}
\begin{proof}[Proof Sketch]
We are aware of two proofs of this result, both of which were communicated to us by Jonathan Ullman~\cite{Ullman21pers}. The first uses the top-$k$ selection lower bound of Steinke and Ullman~\cite{US17}. In that problem, there is a single collection of $d$ coordinates and the goal is to return the indices of $k<d$ coordinates whose sums are roughly largest. 

% \as{The specific  distribution over instances that arises in the lower bound of~\cite{US17} has two useful properties: first, the contents of the dataset's $dk$ coordinates ... IN PRGORESS. }
For the specific distribution over instances that arises in the lower bound of~\cite{US17}, if one divides the coordinates into $k$ equal groups, there is a constant probability that the collection of coordinates with the largest sum in each group is a good approximate solution for the top-$k$ selection problem. An algorithm for $\kindselection{d}$ can thus be used to solve the top-$k$ selection (out of $dk$ coordinates) problem for such instances with roughly the same error and privacy parameter. The lower bound of~\cite{US17} on $n$ then applies.

Another approach is to use the composition framework of Bun, Ullman and Vadhan \cite{bunUV18}. One can use a folklore result that selection among $d>2^m$ coordinates can be used to mount a reconstruction attack on an appropriate dataset of size $m$. Composed with the lower bound for 1-way marginals in \cite{bunUV18}, one obtains a lower bound for $\kindselection{d}$.
\end{proof}
%The following lemma is used to add an appropriate dependence on $\eps$ to \Cref{lem:lbkindsel}. This lemma appears to be folklore. We've added a proof of it in the appendix for completeness.
%\begin{lemma}\label{lem:boost-eps}
%\asnote{I think we can remove this lemma entirely.}Let $n,d,k \in \N$. If every $(1,o(\frac{1}{ndk})$-DP and $(\frac{n}{20},n)$-accurate algorithm for $\kindselection{d}$ in the batch model requires $n > g(k,d)$ records for some function $g$, then for all $\eps \in (0,1]$, every $\left(\eps,o(\frac{1}{ndk})\right)$-DP and $(\frac{n}{20},n)$-accurate algorithm for $\kindselection{d}$ in the batch model requires $n >  \frac{g(k,d)}{\eps}$ records.
%Let $n,d,k \in \N, \epsilon \in (0,1], \delta \in (0,1/ndk]$. Let $\alg$ be and $(\eps,\delta)$-differentially private, $(\frac{kn}{20},n)$-accurate algorithm for $\kindselection{dk}$ then $n = \Omega(\frac{\sqrt{k}\log d}{\eps})$
%\end{lemma}
To complete the proof of \Cref{lem:kindsel-mainlb}, we prove Item $2$ via a standard packing argument. 
%\begin{lemma}\label{thm:purekindselLB}
%Let $n,d,k \in \N, \epsilon \in (0,1]$, $0 < \gamma < \frac{n}{3}$. If $\alg$ is an $(\eps,0)$-differentially private and $(\gamma,n)$-accurate algorithm for $\kindselection{d}$, then $n = \Omega(\frac{k\log d}{\eps})$.
%\end{lemma}
\begin{proof}[Proof of \Cref{lem:kindsel-mainlb},  Item~2.]
For $\vec{u}\in [d]^k,$ let $\dset^*_{\vec{u}} \in \zo^{dk}$ be the record where each block $r\in[k]$ of $d$ coordinates has a $1$ in coordinate $u_r$ and zeros everywhere else. Let $\vec{\dset_u}$ be the dataset that consists of $2\gamma n$ copies of~$\dset^*_{\vec{u}}$ and $(1-2\gamma)n$ copies of the all-zero record (assume, for simplicity, that $2\gamma n$ is an integer). Since $\alg$ is $(\gamma,n)$-accurate,\\
$\Pr_{\text{coins of \alg}}\left[ \err_{\kindselection{}}(\vec{\dset_u},\alg(\vec{\dset_u})) \leq \gamma \right] \geq \frac{2}{3}$ for all $\vec{u}\in[d]^k$. This means that for all $\vec{u}\in[d]^k$,
\[\Pr_{\text{coins of \alg}}\left[\alg(\vec{\dset_u}) = \vec{u}\right] \geq \frac{2}{3}.\]
For all $\vec{u},\vec{u}' \in [d]^k$, by group privacy,
$\alg(\vec{\dset_u}) \approx_{(\gamma\eps n,0)} \alg(\vec{\dset_{u'}})$, which implies that
\begin{align}\label{eq:groupprivlb}
\Pr\left[\alg(\vec{\dset_{u}}) = \vec{u}'\right] &\geq e^{-\gamma\eps n}\Pr\left[\alg(\vec{\dset_{u'}}) = \vec{u}'\right] \geq \frac{2}{3}e^{-\gamma\eps n}.
\end{align}
Since the probability of any event is at most 1,
\begin{align*}
& 1\geq \Pr_{\text{coins of \alg}}\left[ \alg(\vec{\dset_u}) \neq \vec{u} \right] = \\
& \sum_{\vec{u}'\neq \vec{u}}\Pr\left[\alg(\vec{\dset_u}) = \vec{u}'\right] \geq \frac{2}{3}e^{-\gamma\eps n}(d^k-1),
\end{align*}
where the last inequality holds by (\ref{eq:groupprivlb}). We get that
$e^{\gamma\eps n} \geq \frac{d^k-1}{2}\cdot\frac{2}{3}$, and thus $n = \Omega\left(\frac{k\log d}{\gamma \cdot \eps}\right).$
%The probability that $\alg$ outputs the wrong answer on input $\vec{\dset_u}$ is the following, where the last inequality follows by (\ref{eq:groupprivlb}):
%\[
%\Pr_{\text{coins of \alg}}\left[ \err_{\kindselection{}}(\vec{\dset_u},\alg(\vec{\dset_u})) \geq \gamma \right] = %\sum_{u'\neq u}\Pr\left[\alg(\vec{\dset_u}) = u'\right] \geq \frac{2}{3}e^{-\eps n}(d^k-1).
%\]
%Using the fact that any probability is trivially upper bounded by $1$,
%\begin{align*}
%   1 \geq \frac{2}{3}e^{-\eps n}(d^k-1) \implies e^{\eps n} \geq \frac{d^k}{2}\cdot\frac{2}{3} \implies n = %\Omega\left(\frac{k\log d}{\eps}\right).
%\end{align*}
\end{proof}

\section{Details Omitted from Section~\ref{sec:upper-bounds}}\label{sec:details_ub}
\subsection{Formal Statements}\label{sec:adaptive-up-statments}
\fi
%In this section do both the accuracy proofs first and then add the privacy proofs after because the accuracy doesn't need the simulator game
In this subsection, we state theorems that summarize the performance guarantees of our mechanisms for $\maxsum{d}$ and $\selection{d}$. We prove these theorems in the following subsections. The upper bounds in these theorems are attained by two simple mechanisms: one uses the binary tree mechanism and the other recomputes the target function at regular intervals. We first state results for the binary-tree-based approach.

\begin{theorem}[zCDP, Binary-Tree-Based Mechanisms]\label{thm:bintree_rho}
For all $\rho \in (0,1]$, $d \in \N$, and sufficiently large $T > 0$, there exist $\rho$-zCDP mechanisms $\mech, \mech'$ in the \aCR{} such that $\mech$ is $(\alpha,T)$-accurate for $\maxsum{d}$ and $\mech'$ is $(\alpha,T)$-accurate for $\selection{d}$, where $\alpha = \BigO{\frac{\sqrt{d} \log T \sqrt{\log(dT)}}{\sqrt{\rho}}}$.
\end{theorem}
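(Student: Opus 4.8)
The plan is to build both mechanisms on top of the binary tree mechanism of \cite{DworkNPR10,ChanSS10}, instantiated with Gaussian noise so that the privacy analysis goes through zCDP composition. First I would run $d$ independent copies of the binary tree mechanism, one per coordinate $j \in [d]$, so that at each time step $t$ the mechanism maintains a noisy running sum $\widetilde{S}_t[j]$ of the $j^{th}$ attribute. Recall that the binary tree mechanism on a single stream of length $T$ expresses each prefix sum as a sum of at most $\lceil \log T \rceil$ partial (dyadic-interval) sums, each of which is released once with fresh noise; with Gaussian noise $\mathcal{N}(0,\sigma_0^2)$ on each of the $O(T)$ node values, each individual prefix-sum estimate is a sum of at most $\log T$ such noises, hence distributed as $\mathcal{N}(0, \sigma^2)$ with $\sigma^2 = O(\sigma_0^2 \log T)$. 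For $\maxsum{d}$, the mechanism $\mech$ outputs $a_t = \max_{j \in [d]} \widetilde{S}_t[j]$; for $\selection{d}$, the mechanism $\mech'$ outputs $a_t = \argmax_{j \in [d]} \widetilde{S}_t[j]$. Both are post-processings of the $d$ noisy-tree outputs.

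For privacy, I would argue that changing a single record $x_{t^*}$ (equivalently, the challenge record in the game $\Pi_{\mech,\adv}$) changes, for each coordinate $j$, the value of exactly one node on each of the at most $\lceil \log T\rceil$ levels of that coordinate's tree that contains time $t^*$ — so across all $d$ trees, $O(d \log T)$ node values change, each by at most $1$ in $\ell_2$. By the Gaussian mechanism (Lemma~\ref{prop:gaussian-mech}) each perturbed node release is $(1/2\sigma_0^2)$-zCDP, and by zCDP composition (Lemma~\ref{lem:cdp_composition}) the whole release is $O(d \log T / \sigma_0^2)$-zCDP; setting $\sigma_0^2 = \Theta(d \log T / \rho)$ gives $\rho$-zCDP, whence $\sigma^2 = O(\sigma_0^2 \log T) = O(d \log^2 T / \rho)$. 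The one subtlety here — and the step I expect to be the main obstacle — is that this is the \emph{adaptive} continual release model: the adversary chooses each record based on past outputs, so I cannot simply invoke batch-model composition on a fixed neighboring pair. The fix is to phrase the argument via a simulator (as the paper signals in Section~\ref{sec:details_ub}): the noisy node values are determined by the dyadic partial sums of whatever stream is fed in, and conditioned on all of $\adv$'s coins the only difference between the $\side=L$ and $\side=R$ executions is the $O(d\log T)$ affected node values, which differ by bounded $\ell_2$; a coupling/simulation argument then reduces the adaptive claim to the per-node Gaussian bound and its composition. (Because $\adv$'s records can themselves depend on past outputs, one checks that the \emph{set} of affected nodes is the same in both executions — it depends only on $t^*$, not on $\side$ — so the divergence bound is clean.)

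For accuracy, I would use a standard Gaussian maximal inequality: for each fixed $t$ and $j$, $\widetilde{S}_t[j] - S_t[j] \sim \mathcal{N}(0,\sigma^2)$, and a union bound over all $dT$ pairs $(t,j)$ (Lemma~\ref{lem:gauss_max}) gives that $\max_{t,j} |\widetilde{S}_t[j] - S_t[j]| = O(\sigma \sqrt{\log(dT)})$ with probability at least $2/3$. On this event: for $\maxsum{d}$, $|\max_j \widetilde S_t[j] - \max_j S_t[j]| \le \max_j |\widetilde S_t[j] - S_t[j]|$, so $\mech$ is $(\alpha,T)$-accurate with $\alpha = O(\sigma\sqrt{\log(dT)}) = O\!\big(\sqrt{d}\,\log T\,\sqrt{\log(dT)}/\sqrt{\rho}\big)$, matching the stated bound. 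For $\selection{d}$, if $j^\star = \argmax_j \widetilde S_t[j]$ is the index returned, then $S_t[j^\star] \ge \widetilde S_t[j^\star] - \beta \ge \widetilde S_t[j_{\mathrm{opt}}] - \beta \ge S_t[j_{\mathrm{opt}}] - 2\beta$ where $\beta = \max_{t,j}|\widetilde S_t[j]-S_t[j]|$ and $j_{\mathrm{opt}}$ is the true argmax; hence $\err_{\selection{}}(\xt,a_t) = \maxsum{d}(\xt) - S_t[j^\star] \le 2\beta = O(\sigma\sqrt{\log(dT)})$, giving the same $\alpha$. Finally I note that nothing in the accuracy argument uses that the stream is fixed in advance — the per-coordinate, per-time-step Gaussian tail bound holds conditionally on the realized stream regardless of how the adversary chose it — so accuracy holds in the adaptive model as claimed. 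The "sufficiently large $T$" hypothesis is just to absorb floors/ceilings in $\log T$ and to ensure the binary tree has a non-degenerate structure.
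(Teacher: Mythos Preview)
Your proposal is correct and follows essentially the same approach as the paper: a Gaussian-noise binary tree mechanism over the $d$ coordinates, with privacy via zCDP composition over the $O(\log T)$ nodes containing the challenge time (formalized in the paper through the simulator $\Sim$ and ideal mechanism $\idealgauss$ in Algorithms~\ref{alg:binary_sim} and~\ref{alg:ideal_gauss}, which make precise the reduction you sketch), and accuracy via a Gaussian maximal bound union-bounded over all $dT$ (time, coordinate) pairs. The only cosmetic difference is that the paper treats the tree as holding $d$-dimensional vectors (so $\log T+1$ releases with $\ell_2$-sensitivity $\sqrt d$) rather than $d$ scalar trees; this yields the same $\sigma_0^2=\Theta(d\log T/\rho)$ and the same final $\alpha$.
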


The next theorem uses the idea of recomputing at regular intervals, which applies quite generally. Item 1 of Theorem~\ref{thm:recompute_rho} applies for general sensitivity-1 functions (which include $\maxsum{d}$); a similar result holds for bounded-sensitivity functions with output space $\R^d$. 

\begin{theorem}[zCDP, Mechanisms via Recomputing at Regular Intervals]\label{thm:recompute_rho}
For all $\rho \in (0,1]$, $d \in \N$, sufficiently large $T > 0$, and all functions $f:\X^* \to \R$ with $\ell_2$-sensitivity at most $1$, there exist $\rho$-zCDP mechanisms $\mech$ and $\mech'$ in the \aCR{} such that

\begin{enumerate}
    \item Mechanism $\mech$ is $(\alpha,T)$-accurate for $f$ for \ifnum \pods=1 \\ \fi $\alpha \allowbreak = \BigO{\min\left\{ \sqrt[3]{\frac{T\log T}{\rho}},T \right\}};$
    \item Mechanism $\mech'$ is $(\alpha,T)$-accurate for $\selection{d}$ for $\alpha = \BigO{\min\left\{ \frac{T^{1/3}\log^{2/3}(d T)  }{\rho^{1/3}}, T \right\} }$.
\end{enumerate} 
\end{theorem}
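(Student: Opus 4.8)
\textbf{The plan.} Both mechanisms use the same template: partition the time horizon $[T]$ into $B$ consecutive \emph{epochs} of length $T/B$, and at the start of each epoch recompute the target statistic on the prefix seen so far, release that noisy answer, and repeat the same answer until the next epoch boundary. Privacy follows because each input record affects exactly one recomputation (the one at the end of its epoch, plus all later recomputations --- but here is the subtlety: a record at time $t$ in epoch $j$ affects the recomputed statistic for \emph{every} epoch $j' \ge j$), so we must be careful: in fact a single record changes the prefix used in up to $B$ recomputations, so the composed privacy cost over all $B$ releases must be controlled. Using zCDP composition (\Cref{lem:cdp_composition}), if each recomputation is $\rho/B$-zCDP then the whole mechanism is $\rho$-zCDP; for a sensitivity-$1$ function this means the Gaussian mechanism (\Cref{prop:gaussian-mech}) with noise $\sigma^2 = B/(2\rho)$, and for $\selection{d}$ we recompute using a differentially private selection subroutine whose zCDP cost is $\rho/B$. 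The error then has two sources: (i) the \emph{staleness} error, since within an epoch the statistic can drift by at most the epoch length $T/B$ (each new record changes a sum by at most $1$, hence $\maxsum{d}$ and the objective value of $\selection{d}$ change by at most $1$ per step); and (ii) the \emph{noise} error at each recomputation, which for the Gaussian mechanism is $\widetilde O(\sigma) = \widetilde O(\sqrt{B/\rho})$ and, taking a union bound over all $B$ epochs, holds with high probability at an additional $\sqrt{\log B}$ factor.

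\textbf{Item 1 (sensitivity-$1$ functions, including $\maxsum{d}$).} Here the recomputation subroutine is just the Gaussian mechanism applied to $f$ on the current prefix. Balancing staleness error $T/B$ against noise error $\widetilde O(\sqrt{B/\rho})$ gives the optimal choice $B \approx (T^2\rho)^{1/3}$ up to logs, yielding total error $\widetilde O\bigl((T/\rho^{?})^{?}\bigr)$ --- carrying through the arithmetic, $T/B \approx (T\log T/\rho)^{1/3}$, which matches the claimed $O\bigl(\min\{\sqrt[3]{T\log T/\rho},\,T\}\bigr)$ (the $T$ term covers the regime where the trivial constant mechanism is better, i.e. $B=1$).

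\textbf{Item 2 ($\selection{d}$).} The recomputation subroutine is now a private selection mechanism that, given the current prefix, returns an index $j$ whose attribute sum is within $\widetilde O\bigl(\tfrac{\log d}{\sqrt{\rho/B}}\bigr)$ of the maximum --- this is the standard exponential-mechanism/report-noisy-max guarantee, phrased in zCDP with parameter $\rho/B$, giving additive error $\widetilde O(\sqrt{B}\log d/\sqrt{\rho})$ per recomputation (again with a union bound over $B$ epochs and over the $d$ coordinates, adding a $\sqrt{\log(dT)}$ factor). Note that the error notion for $\selection{}$, defined in~(\ref{eq:err_sumselect}), is exactly this objective deficit, and it too drifts by at most $1$ per time step, so the staleness contribution is again $T/B$. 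Balancing $T/B$ against $\widetilde O(\sqrt{B}\log d/\sqrt{\rho})$ gives $B \approx (T\rho/\log^2 d)^{?}$; working it out, the optimal staleness $T/B \approx T^{1/3}\log^{2/3}(dT)/\rho^{1/3}$, matching the claim.

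\textbf{Main obstacle.} The nontrivial part is the privacy analysis \emph{in the adaptive model}: the epochs and recomputation schedule are data-independent, so naive zCDP composition seems to apply, but because each record influences all later recomputations, a careful argument is needed showing that the neighboring-stream relation translates into a single coordinate change in each of the $B$ underlying batch computations, and that the adaptive adversary (who sees each released answer before choosing the next record, including at the challenge timestep $t^*$) does not amplify the privacy loss. The clean way to handle this is via a simulator that, given oracle access to the $B$ noisy batch outputs, reconstructs the mechanism's transcript online; since the simulator's behavior at each step depends only on the already-released answers, the composition of the $B$ $(\rho/B)$-zCDP batch releases upper-bounds the closeness of the adversary's views $V^{(L)}_{\mech,\adv} \simeq_\rho V^{(R)}_{\mech,\adv}$, as required by the adaptive privacy definition. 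The accuracy proof in the adaptive model is then routine: the staleness bound is deterministic (worst-case over all streams, adaptive or not), and the noise bound is a union bound that does not interact with adaptivity since the noise is fresh and independent of the adversary's coins.
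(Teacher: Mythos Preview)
Your proposal is correct and follows essentially the same approach as the paper: partition $[T]$ into $m$ epochs, recompute at each epoch boundary using the Gaussian mechanism (Item~1) or the exponential mechanism/report-noisy-max (Item~2) with per-epoch budget $\rho/m$, argue $\rho$-zCDP in the adaptive model via a simulator that defers the challenge record's contribution to an ideal batch mechanism, and balance staleness $T/m$ against per-epoch noise $\widetilde O(\sqrt{m/\rho})$ (resp.\ $\widetilde O(\sqrt{m/\rho}\log d)$) to obtain the stated bounds. The paper's argument and yours differ only in notation and in how explicitly the simulator is spelled out.
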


Combining Theorems~\ref{thm:bintree_rho}--\ref{thm:recompute_rho}, using the conversion from zCDP to $(\eps, \delta)$-DP from \Cref{prop:CDPtoDP} and substituting $\rho = \frac{\eps^2}{16 \log(1/\delta)}$, we get the following corollary.
\begin{corollary}\label{cor:upper-bounds}
For all $\eps \in (0,1]$, $\delta \in (0,\frac{1}{2}]$, $d \in \N$, and sufficiently large $T > 0$, there exist $(\eps, \delta)$-DP mechanisms $\mech$ and $\mech'$ in the \aCR{} such that\\
(1) $\mech$ is $(\alpha,T)$-accurate for $\maxsum{d}$ for 
\ifnum\pods=1
\\
\else
\fi
$\alpha = \BigO{\min\left\{ \frac{\sqrt[3]{T\log(1/\delta)\log T} }{\eps^{2/3}}, \frac{\sqrt{d\log(dT) \log(1/\delta)}\,\log T}{\eps}, T \right\}};$\\
(2) $\mech'$ is $(\alpha,T)$-accurate for $\selection{d}$ for
\ifnum\pods=1
\\
\else
\fi
$\alpha = \BigO{ \min \left\{\frac{\sqrt[3]{T \log^2 (dT) \log(1/\delta)}  }{\eps^{2/3}},  \frac{ \sqrt{d\log(dT) \log(1/\delta)} \, \log T}{\eps} , T \right\}}$. 
\end{corollary}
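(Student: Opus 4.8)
The plan is to obtain \Cref{cor:upper-bounds} purely as a consequence of Theorems~\ref{thm:bintree_rho} and~\ref{thm:recompute_rho}: calibrate the zCDP parameter $\rho$ as a function of $(\eps,\delta)$, apply the conversion from zCDP to $(\eps,\delta)$-DP, substitute the chosen $\rho$ into the accuracy guarantees, and let the final mechanism run the best of the available sub-mechanisms.

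First I would set $\rho = \frac{\eps^2}{16\log(1/\delta)}$. Since $\eps \in (0,1]$ and $\delta \in (0,\tfrac12]$, we have $\log(1/\delta) \ge \log 2$, so $\rho \in (0,1]$ and Theorems~\ref{thm:bintree_rho} and~\ref{thm:recompute_rho} apply for this value of $\rho$. Each mechanism they produce is $\rho$-zCDP in the \aCR{}, i.e., $V_{\mech,\adv}^{(L)} \simeq_\rho V_{\mech,\adv}^{(R)}$ for every adversary $\adv$. The conversion \Cref{prop:CDPtoDP} is stated for batch algorithms, but it is really the statement that a pair of $\rho$-close distributions is $(\rho + 2\sqrt{\rho\log(1/\delta)},\delta)$-indistinguishable; applied to the pair of views $V_{\mech,\adv}^{(L)}, V_{\mech,\adv}^{(R)}$ it yields $(\rho + 2\sqrt{\rho\log(1/\delta)},\delta)$-DP in the \aCR{}. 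With our choice of $\rho$,
\[
\rho + 2\sqrt{\rho\log(1/\delta)} \;=\; \frac{\eps^2}{16\log(1/\delta)} + \frac{\eps}{2} \;\le\; \eps,
\]
since $\eps \le 1$ and $\log(1/\delta) \ge \log 2 > \tfrac18$. Hence every sub-mechanism is $(\eps,\delta)$-DP in the \aCR{}.

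Next I would substitute $\rho = \frac{\eps^2}{16\log(1/\delta)}$ into the accuracy bounds. From Theorem~\ref{thm:bintree_rho}, $\frac{\sqrt{d}\,\log T\,\sqrt{\log(dT)}}{\sqrt{\rho}} = \BigO{\frac{\sqrt{d\log(dT)\log(1/\delta)}\,\log T}{\eps}}$, which is the second term in both bounds of the corollary. From Theorem~\ref{thm:recompute_rho}, $\sqrt[3]{\frac{T\log T}{\rho}} = \BigO{\frac{\sqrt[3]{T\log T\log(1/\delta)}}{\eps^{2/3}}}$ and $\frac{T^{1/3}\log^{2/3}(dT)}{\rho^{1/3}} = \BigO{\frac{\sqrt[3]{T\log^2(dT)\log(1/\delta)}}{\eps^{2/3}}}$, which are the first terms, while the $\min\{\cdot,T\}$ already present in Theorem~\ref{thm:recompute_rho} supplies the trailing $T$ (equivalently, the trivial mechanism outputting a fixed value in the correct range has error at most $T$ and is $0$-zCDP). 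I would then let $\mech$ be the mechanism that, given the public parameters $T,d,\eps,\delta$, computes which of the two candidate bounds for $\maxsum d$ — the tree-based bound of Theorem~\ref{thm:bintree_rho} and the recomputation bound of Item~1 of Theorem~\ref{thm:recompute_rho} — is smaller and runs the corresponding $\rho$-zCDP sub-mechanism; define $\mech'$ for $\selection d$ analogously from Theorem~\ref{thm:bintree_rho} and Item~2 of Theorem~\ref{thm:recompute_rho}. Since this choice depends only on public parameters and not on the input stream, it does not affect the privacy argument, and the accuracy of the resulting mechanism is the minimum of the candidate bounds, i.e., exactly $\BigO{\min\{\cdots\}}$ as claimed.

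The argument is essentially bookkeeping, so there is no real obstacle. The only steps needing a little care are (i) checking that the chosen $\rho$ lies in $(0,1]$ and that the converted privacy parameter is at most $\eps$, and (ii) noting that \Cref{prop:CDPtoDP}, though stated for batch algorithms, transfers to the view-based privacy notion of the \aCR{} because it is fundamentally a statement about a pair of distributions; everything else is substitution and simplification.
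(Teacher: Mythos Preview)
Your proposal is correct and matches the paper's approach exactly: the paper simply says to combine Theorems~\ref{thm:bintree_rho}--\ref{thm:recompute_rho} with the zCDP-to-DP conversion \Cref{prop:CDPtoDP} and substitute $\rho = \frac{\eps^2}{16\log(1/\delta)}$. You have just fleshed out the bookkeeping (verifying $\rho\in(0,1]$, that the converted parameter is at most $\eps$, and that the distributional conversion applies to the view-based adaptive definition), all of which is sound.
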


Simple variants of our mechanisms can be used to get the following theorems for $(\eps,0)$-differential privacy. 

\begin{theorem}[Pure DP, Binary-Tree-Based Mechanisms]\label{thm:bintree_pure}
For all $\eps \in (0,1]$, $d \in \N$, and sufficiently large $T > 0$, there exist $(\eps,0)$-DP mechanisms $\mech$ and $\mech'$ in the \aCR{} such that $\mech$ is $(\alpha,T)$-accurate for $\maxsum{d}$ and $\mech'$ is $(\alpha,T)$-accurate for $\selection{d}$ for $\alpha = \BigO{\frac{d (\log d) \log^3 T}{\eps}}$.
\end{theorem}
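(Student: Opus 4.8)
The plan is to instantiate, for each of the $d$ coordinates, the binary tree mechanism of Dwork et al.\ and Chan et al., but with the noise scaled so that the \emph{joint} release over all $d$ coordinates is $(\eps,0)$-DP; this is the pure-DP analogue of the construction behind \Cref{thm:bintree_rho}, with Laplace noise in place of Gaussian noise and basic composition in place of zCDP composition. Concretely, fix the usual dyadic decomposition of $[T]$ into $O(T)$ \emph{nodes} (dyadic intervals), organized into $\lceil \log_2 T\rceil$ levels. The release procedure $R$ draws, for every node $I$, an independent vector $Z_I \in \R^d$ with each coordinate distributed as $\Lap\!\big(\tfrac{d\lceil\log_2 T\rceil}{\eps}\big)$; when node $I$ closes (at time step $\max I$), $R$ publishes $\widehat\nodeval_I = \big(\sum_{i\in I} \dstream_i\big) + Z_I \in \R^d$. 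At time $t$, both mechanisms form, for each coordinate $j\in[d]$, the estimate $\widehat\nodeval_j(t) = \sum_I \widehat\nodeval_I[j]$ over the $\le\lceil\log_2 T\rceil$ nodes in the canonical decomposition of $[1,t]$; mechanism $\mech$ then outputs $\max_{j\in[d]}\widehat\nodeval_j(t)$ and $\mech'$ outputs $\argmax_{j\in[d]}\widehat\nodeval_j(t)$.

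\textbf{Privacy.} I would prove that $R$ is $(\eps,0)$-DP in the \aCR{} by a simulator argument, as in the proof of \Cref{thm:bintree_rho}. The key structural fact is that which nodes contain the challenge time step $t^*$ is fixed in advance: these are the $\le\lceil\log_2 T\rceil$ nested nodes $I^{(0)}\subsetneq I^{(1)}\subsetneq\cdots$, one per level. Build a simulator $\Sim$ that plays the privacy game against an arbitrary adversary $\adv$: it relays every record output by $\adv$, and to answer $\adv$'s query at time $t$ it needs $\widehat\nodeval_I$ for each node $I\subseteq[1,t]$. For a node $I$ with $t^*\notin I$, $\Sim$ knows all of its records (it has observed every non-challenge record of $\adv$), so it computes $\sum_{i\in I}\dstream_i$ and adds fresh Laplace noise of the right scale itself. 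For the $\le\lceil\log_2 T\rceil$ nodes $I^{(\ell)}\ni t^*$, $\Sim$ instead queries an oracle $\mathcal O$ that holds the real challenge record $\dstream_{t^*}^{(\side)}$ as private input: at the (predetermined) time $\max I^{(\ell)}$, $\Sim$ hands $\mathcal O$ the non-challenge records of $I^{(\ell)}$, and $\mathcal O$ returns their sum plus $\dstream_{t^*}^{(\side)}$ plus fresh noise of scale $\tfrac{d\lceil\log_2 T\rceil}{\eps}$. Each such answer is the Laplace mechanism (\Cref{prelim:laplace_dp}) on a function of $\ell_1$-sensitivity $d$ with budget $\tfrac{\eps}{\lceil\log_2 T\rceil}$, so over the $\lceil\log_2 T\rceil$ adaptive queries $\mathcal O$ is $(\eps,0)$-DP with respect to $\dstream_{t^*}$; and $\Sim$ reproduces $\adv$'s view exactly. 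Hence $V^{(L)}\approx_{\eps,0}V^{(R)}$ by post-processing (\Cref{prelim:postprocess}). Since the per-step outputs of $\mech$ and $\mech'$ are fixed deterministic functions of the transcript of $R$, folding that post-processing into the adversary shows both $\mech$ and $\mech'$ are $(\eps,0)$-DP in the \aCR{} (and hence in the \CR{}).

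\textbf{Accuracy.} For any (possibly adaptive) execution, the error of the coordinate-$j$ estimate at time $t$ is $\big|\widehat\nodeval_j(t) - \sum_{i\in[t]}\dstream_i[j]\big| = \big|\sum_I Z_I[j]\big|$, a sum of at most $\lceil\log_2 T\rceil$ i.i.d.\ $\Lap\!\big(\tfrac{d\lceil\log_2 T\rceil}{\eps}\big)$ variables. Taking a union bound over the $O(dT)$ node-coordinate pairs and applying the tail bound for the maximum of i.i.d.\ Laplace variables (\Cref{lem:lapmax}), with probability at least $2/3$ every $|Z_I[j]|$ is $O\!\big(\tfrac{d\log T\,\log(dT)}{\eps}\big)$, and hence $\max_{t\in[T],\,j\in[d]}\big|\widehat\nodeval_j(t)-\sum_{i\in[t]}\dstream_i[j]\big| \le \beta$ for $\beta = O\!\big(\tfrac{d\log^2 T\,\log(dT)}{\eps}\big)$. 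For $\maxsum{d}$ the error is then at most $\beta$. For $\selection{d}$, if $\hat\jmath$ is the returned index and $j^*$ the true maximizer, then $\sum_i \dstream_i[\hat\jmath] \ge \widehat\nodeval_{\hat\jmath}(t)-\beta \ge \widehat\nodeval_{j^*}(t)-\beta \ge \sum_i\dstream_i[j^*]-2\beta$, so the $\selection{}$ error is at most $2\beta$. Since $\log(dT) = O(\log d\,\log T)$ for large $T$ and $d\ge 2$ (the case $d=1$ being trivial), we get $\alpha = O\!\big(\tfrac{d(\log d)\log^3 T}{\eps}\big)$ as claimed; when this quantity exceeds $T$, a data-independent mechanism already achieves error $\le T$.

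\textbf{Main obstacle.} The delicate part is the privacy proof in the \emph{adaptive} model: because the adversary's records after $t^*$ — and hence the later node sums — depend on the mechanism's earlier outputs, one cannot simply view $R$ as a post-processing of Laplace noise added to a fixed vector of bounded sensitivity, as one would in the nonadaptive setting. The simulator above circumvents this precisely because the simulator itself constructs the stream online and therefore knows every node sum that does not involve the challenge record, leaving only the fixed number $\lceil\log_2 T\rceil$ of ``challenge-touching'' node releases to be handled by a genuinely data-dependent (and hence $\eps$-DP) oracle; verifying that this simulation is faithful and that the composition bookkeeping is correct is the bulk of the work, while the accuracy analysis and the reduction of $\selection{}$ error to estimate error are routine.
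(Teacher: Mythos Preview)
Your proposal is correct and follows essentially the same approach as the paper: replace the Gaussian noise in \Cref{alg:selbintree} by Laplace noise of scale $\tfrac{d\lceil\log T\rceil}{\eps}$, prove privacy via the same simulator argument (now invoking basic $(\eps,0)$-composition over the $\lceil\log T\rceil$ challenge-touching nodes), and bound accuracy by a Laplace tail bound plus union bound over node--coordinate pairs, exactly as in the paper's sketch. One wording nit: the set of nodes containing $t^*$ is not literally ``fixed in advance'' in the adaptive game (the adversary chooses $t^*$ online), but your simulator description handles this correctly since every such node closes at or after $t^*$, by which time $\Sim$ knows $t^*$.
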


\begin{theorem}[Pure DP, Mechanisms via Recomputing at Regular Intervals]\label{thm:recompute_pure}
For all $\eps \in (0,1]$, $d \in \N$, sufficiently large $T > 0$, and all functions $f:\X^* \to \R$ with $\ell_1$-sensitivity at most $1$, there exist $(\eps,0)$-DP mechanisms $\mech$ and $\mech'$ in the \aCR{} such that
\begin{enumerate}
    \item Mechanism $\mech$ is $(\alpha,T)$-accurate for $f$ for
\ifnum\pods=1
\\
\else
\fi
    $\alpha = \BigO{\min\left\{\sqrt{\frac{T\log T}{\eps}},\,T \ \right\}};$
    \item Mechanism $\mech'$ is $(\alpha,T)$-accurate for $\selection{d}$ for 
\ifnum\pods=1
\\
\else
\fi
$\alpha = \BigO{\min\left\{\sqrt{\frac{T \log (dT)  }{\eps}},\,T \right\} }.$
\end{enumerate}
\end{theorem}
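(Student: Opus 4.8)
\noindent\emph{Proof plan.} The plan is to instantiate a ``recompute at regular intervals'' template. Fix a block length $B$ (to be optimized) and split $[T]$ into $m=\lceil T/B\rceil$ blocks. At each block boundary $t=jB$ the mechanism recomputes a private estimate using only the current prefix $x_{[jB]}$, and outputs that estimate repeatedly until the next recomputation (in the first block it outputs a fixed default, e.g. $f(\emptyset)$, resp. the index $1$). For a sensitivity-$1$ function $f$, the estimate at boundary $jB$ is $f(x_{[jB]})+Z_j$ with $Z_j\sim\Lap(m/\eps)$; for $\selection{d}$ it is the output of the exponential mechanism (Lemma~\ref{lem:expmech}) with privacy parameter $\eps/m$, output set $[d]$, and quality function $g(c,x_{[jB]})=\sum_{i\le jB}x_i[c]$, whose $\ell_1$-sensitivity is $1$. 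Whenever the resulting error bound exceeds $T$ (which happens, e.g., when $\eps$ is polynomially small in $T$), the mechanism instead always outputs the default value, which has error at most $T$; this produces the $\min\{\cdot,T\}$ in the statement.

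For privacy in the \aCR{}, I would fix the adversary's internal randomness $r_{\adv}$ and compare the two runs of the game $\Pi_{\mech,\adv}$ with $\side=L$ and $\side=R$. The key observation is structural: once $r_{\adv}$ and the first $j-1$ recomputed estimates $\hat{v}_1,\dots,\hat{v}_{j-1}$ are fixed, all of $\adv$'s moves through time $jB$ are determined, so the prefix $x_{[jB]}$ is a deterministic function of $(r_{\adv},\hat{v}_1,\dots,\hat{v}_{j-1},\side)$, and switching $\side$ changes it only in the single challenge record $x_{t^*}$ (and not at all when $jB<t^*$). Consequently, conditioned on $(r_{\adv},\hat{v}_1,\dots,\hat{v}_{j-1})$, the $j$-th recomputation is an $(\eps/m,0)$-DP function of $\side$: for $\mech$ this is the Laplace mechanism on a value of sensitivity $1$, and for $\mech'$ it is the exponential mechanism on a quality function of sensitivity $1$. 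The estimates $(\hat{v}_1,\dots,\hat{v}_m)$ thus form an adaptively chosen composition of $m$ mechanisms, each $(\eps/m,0)$-DP with respect to $\side$; composition of pure differential privacy makes the joint estimate $(\eps,0)$-DP with respect to $\side$, and since $\adv$'s view is a post-processing (through $r_{\adv}$) of this sequence, Lemma~\ref{prelim:postprocess} gives $V_{\mech,\adv}^{(L)}\approx_{\eps,0}V_{\mech,\adv}^{(R)}$. This ``construct the stream online from the released estimates'' viewpoint---rather than fixing two neighboring streams up front---is what makes the argument go through even though the two runs can diverge arbitrarily after $t^*$.

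For accuracy, I would bound the error at each time $t$ by a staleness term plus a noise term. If $jB$ is the last boundary at or before $t$, the estimate in use was computed on $x_{[jB]}$, which differs from $x_{[t]}$ in at most $B$ records; by sensitivity, $|f(x_{[t]})-f(x_{[jB]})|\le B$, and for $\selection{d}$ one checks directly that passing from $x_{[jB]}$ to $x_{[t]}$ raises $\maxsum{d}$ by at most $B$ and does not lower any coordinate sum, so the selection deficit grows by at most $B$. The noise term is $\max_{j\in[m]}|Z_j|=O(\tfrac{m}{\eps}\log m)$ with probability at least $2/3$ by Lemma~\ref{lem:lapmax} for $\mech$, and for $\mech'$ it is the exponential-mechanism deficit, which by Lemma~\ref{lem:expmech} with a union bound over the $m$ recomputations is $O(\tfrac{m}{\eps}\log(dm))$ with probability at least $2/3$. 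Adaptivity is harmless here: each $Z_j$ (resp. exponential-mechanism call) is fresh, and its tail bound holds conditionally on whatever prefix it runs on. Combining gives total error $O\big(B+\tfrac{T}{B\eps}\log(T/B)\big)$ for $\mech$ and $O\big(B+\tfrac{T}{B\eps}\log(dT/B)\big)$ for $\mech'$; setting $B=\Theta\big(\sqrt{T\log T/\eps}\big)$, resp. $B=\Theta\big(\sqrt{T\log(dT)/\eps}\big)$, balances the terms and yields the claimed bounds.

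I expect the adaptive privacy step to be the main obstacle: making rigorous the claim that, with the adversary's randomness and the prefix of released estimates held fixed, the two runs differ in exactly the one challenge record, and then casting the sequence of recomputations as an adaptive composition with $\side$ in the role of the binary dataset. The remaining pieces---the staleness bound, the Laplace and exponential-mechanism tail bounds, and the optimization over $B$---are routine.
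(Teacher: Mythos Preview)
Your proposal is correct and follows essentially the same approach as the paper: recompute at regular intervals using Laplace noise (for Item~1) or the exponential mechanism (for Item~2), bound the error by a staleness term plus a noise term, and balance the two by choosing the block length appropriately, falling back on the trivial mechanism when the resulting bound exceeds $T$. The only notable difference is in the packaging of the adaptive-privacy step: the paper proves it by constructing an explicit simulator $\Sim$ that interacts with an ideal one-shot mechanism $\idealrecomp$ (so that the adversary's view is literally a post-processing of a $\rho$- or $\eps$-DP output), whereas you argue directly by fixing the adversary's randomness and the prefix of released estimates and observing that the two runs differ only in the challenge record. These are two presentations of the same idea; your conditioning argument is slightly more economical, while the paper's simulator makes the reduction to batch-model composition (Lemma~\ref{lem:cdp_composition} or basic $\eps$-composition) more explicit and reusable.
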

Theorems~\ref{thm:bintree_pure}--\ref{thm:recompute_pure} yield the following corollary.

\begin{corollary}\label{cor:pure_dp}
For all $\eps \in (0,1]$, $d \in \N$, and sufficiently large $T > 0$, there exist $(\eps, 0)$-DP mechanisms $\mech$ and $\mech'$ in the \aCR{} such that
\begin{enumerate}
    \item  $\mech$ is $(\alpha,T)$-accurate for $\maxsum{d}$ for
\ifnum\pods=1
\\
\else
\fi
$\alpha = \BigO{\min\left\{\sqrt{\frac{T\log T}{\eps}},\,T, \frac{d (\log d) \log^3 T}{\eps} \ \right\}};$\\
    \item Mechanism $\mech'$ is $(\alpha,T)$-accurate for $\selection{d}$ for $\alpha = \BigO{\min\left\{\sqrt{\frac{T \log (dT)  }{\eps}},\,T,\frac{d (\log d) \log^3 T}{\eps} \right\} }$.
\end{enumerate}
\end{corollary}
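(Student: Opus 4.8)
The plan is to derive \Cref{cor:pure_dp} by running, in each parameter regime, whichever of the two $(\eps,0)$-DP mechanisms from \Cref{thm:bintree_pure} and \Cref{thm:recompute_pure} has the smaller error guarantee. The first thing I would check is that $\maxsum{d}$, viewed as a real-valued function $\X^* \to \R$, has $\ell_1$-sensitivity at most $1$: changing a single record alters each coordinate sum $\sum_{i} x_i[j]$ by at most $1$, and the maximum of functions that each change by at most $1$ changes by at most $1$. Consequently Item~1 of \Cref{thm:recompute_pure} is applicable to $f = \maxsum{d}$ and yields an $(\eps,0)$-DP mechanism in the \aCR{} with error $O(\min\{\sqrt{T\log T/\eps},\, T\})$, while \Cref{thm:bintree_pure} supplies a second $(\eps,0)$-DP mechanism with error $O(d(\log d)\log^3 T/\eps)$.

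Next, I would define the final mechanism for $\maxsum{d}$ to be the recomputation-based mechanism when $\sqrt{T\log T/\eps} \le d(\log d)\log^3 T/\eps$ and the binary-tree-based mechanism otherwise. Because this branching depends only on the public parameters $T$, $d$, $\eps$ (which the mechanism is given) and never on the data stream, the result is one of two fixed $(\eps,0)$-DP mechanisms, hence itself $(\eps,0)$-DP; its error is $O(\min\{\sqrt{T\log T/\eps},\, T,\, d(\log d)\log^3 T/\eps\})$, matching Item~1 of the corollary. (The $T$ term in the $\min$ is already contributed by \Cref{thm:recompute_pure}, since when $T$ is the smallest of the three quantities the recomputation mechanism has error $O(T)$.) The argument for $\selection{d}$, Item~2, is identical, using Item~2 of \Cref{thm:recompute_pure} — which gives error $O(\min\{\sqrt{T\log(dT)/\eps},\, T\})$ — in place of Item~1, and again taking the better of it and the binary-tree mechanism of \Cref{thm:bintree_pure}.

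All the substantive work is carried by the two underlying theorems, whose proofs appear in the following subsections; at the level of the corollary the only points requiring care are the sensitivity bound for $\maxsum{d}$ that licenses invoking \Cref{thm:recompute_pure}, Item~1, and the (routine) observation that selecting between two mechanisms based on public parameters preserves both privacy and accuracy. Neither of these is a real obstacle, so the corollary follows immediately.
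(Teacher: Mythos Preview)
Your proposal is correct and matches the paper's approach: the paper simply states that the corollary follows from Theorems~\ref{thm:bintree_pure} and~\ref{thm:recompute_pure}, and your write-up spells out exactly this combination, including the (necessary) observation that $\maxsum{d}$ has $\ell_1$-sensitivity at most $1$ so that Item~1 of \Cref{thm:recompute_pure} applies.
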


%The following notation is used multiple times in following subsections. We define $[i:j]$ for natural numbers $i \leq j$ to be all the natural numbers $\{i, \dots, j\}$.  
%%%%%%%%%%%%%%%%%%%%%%%% SELECTION UPPER BOUNDS %%%%%%%%%%%%%%
\subsection{Algorithms based on the Binary Tree Mechanism}

In this section, we prove \Cref{thm:bintree_rho} for $\selection{d}$. \Cref{thm:bintree_rho} for $\maxsum{d}$ follows from the same analysis by considering the binary tree mechanism that outputs the highest noisy sum instead of the coordinate that achieves it.

In order to approximate $\selection{d}$ on a dataset with $d$ attributes, we use the binary tree mechanism from \cite{ChanSS10,DworkNPR10} to privately sum each of the attributes of the records $\xt$ received so far, and then choose the attribute with the highest sum. For simplicity of exposition, in this section, we assume that $T$ is a power of 2. In general, we can work with the smallest power of 2 greater than $T$. Throughout this section, $[i:j]$, where $i,j\in \mathbb{N}$, denotes the set of natural numbers $\{i, \dots, j\}$.

%%%%%%%%%%%%%%%%%%%%%%%%%%%% ALGORITHM %%%%%%%%%%%%%%%%%%%%%%%%%%%%
\newcommand{\nleft}{\ell}
\newcommand{\nright}{r}

\ifnum\pods=0
\begin{algorithm}[h!]
\else
\begin{algorithm*}[h!]
\fi
    \caption{Mechanism $\mech$ for $\selection{d}$ in \aCR{}}
    \label{alg:selbintree}
    \hspace*{\algorithmicindent}
    \begin{algorithmic}[1]
           \Statex\textbf{Input:} time horizon $T \in \N$, privacy parameter $\rho$, stream $\vec{\dstream} = (\dstream_1,\dots,\dstream_T) \in \X^T$, where $\X = \{0,1\}^d$.
           \Statex\textbf{Output:} stream $(a_1,\dots,a_T) \in [d]^T$.
           \State\textbf{Initialization:} Construct a complete binary tree with $T$ leaves labeled $\node_{[1:1]},\dots,\node_{[T:T]}$. Label every internal node  $\node_{[\nleft:\nright]}$
            %where  $\node_{[\nleft,\nleft]}$ is the label of the leftmost leaf in $\node_{[\nleft,\nright]}$'s subtree, and $\node_{[\nright,\nright]}$ is the label of the rightmost leaf in $\node_{[\nleft,\nright]}$'s subtree. 
            if the subtree rooted at that node has leaves  $\node_{[\nleft:\nleft]},\dots,\node_{[\nright:\nright]}.$
            Initialize the partial sum $\nodeval_{[\nleft:\nright]}\gets 0^d$ for each node $\node_{[\nleft:\nright]}$ in the tree.
            \For{\text{$t=1$ to $T$}}
               \State Get record $\dstream_t$ from $\adv$. 
               
               \emph{$\triangleright$ Compute noisy sums for nodes completed at time $t$}
               \For{\text{each node $\node_{[\nleft:t]}$}}
               \State Draw noise $Z \sim \gauss(0,\sigma^2 \mathbb{I}^{d \times d})$, where $\sigma = \sqrt{\frac{d(\log T+1)}{2\rho}}$, and set $\nodeval_{[\nleft:t]} \gets \sum_{i=\nleft}^{\nright} \dstream_i +Z.$
               \label{line:noisebintree}
               %\State $\nodeval_{[\nleft:t]} \gets \sum_{i=\nleft}^{\nright} \dstream_i +Z.$
               \EndFor
               
               \emph{$\triangleright$ Output Steps:}
               
               \ifnum\pods=0
               \State  \label{step:output-start} $I_t \gets $ 
               \parbox[t]{5.5in}{collection  of at most $\log t + 1$ disjoint intervals whose union is $[1:t]$ and where  each interval labels a node in the binary tree. (See Remark~\ref{rem:bintree_dyadic}.)}
               \else
               \State  \label{step:output-start} $I_t \gets $ 
               \parbox[t]{5.5in}{collection  of at most $\log t + 1$ intervals that partition $[1:t]$, where  each interval labels a node in the binary tree. (See Remark~\ref{rem:bintree_dyadic}.)}
               \fi
               \State $\mathsf{sum}_t \gets \sum_{[\nleft:\nright] \in I_t}  \nodeval_{[\nleft:\nright]}$. \label{step:noisysumvec}
               \State  Output $a_t \gets \argmax_{j \in [d]} \mathsf{sum}_t[j]$. \label{step:a_tchoice} \label{line:outputselect}
               \EndFor
    \end{algorithmic}
\ifnum\pods=1
\end{algorithm*}
\else
\end{algorithm}
\fi
%Here is a compact description (adapted from~\cite{SmithT13}) of 
At the high level, the binary tree mechanism constructs a complete binary tree with $T$ leaves. The leaves correspond to the input records 
%$\dstream_1,\dots,\dstream_T$, 
$\xt$, where each record $\dstream_i\in\zo^d$. Each internal node in the tree corresponds to the sum of all the leaves in its subtree. 
%In a differentially private version of this tree, we ensure that each node's subtree 
Each node stores the noisy version of the corresponding sum 
computed by adding a noise vector drawn from $\gauss(0,\sigma^2 \mathbb{I}^{d \times d})$ with $\sigma = \sqrt{\frac{d (\log T+1)}{2\rho}}.$ The algorithm that releases the noisy sum is $\frac{\rho}{\log T+1}$-zCDP.
Since each $\dstream_t$ participates in only $\log_2 T +1$ sums in the tree, by \emph{adaptive composition} of zCDP (\Cref{lem:cdp_composition}), the complete mechanism is $\rho$-zCDP (\Cref{thm:bintree_rho}). The sum of all the attributes at any timestep can be calculated by adding at most $\log T$ of the sums stored in the tree, one at each level. The algorithm that adds the corresponding noisy sums is $(\alpha,T)$-accurate for $\alpha\approx\BigO{ \frac{\sqrt{d} \log T \log(Td)}{\sqrt{\rho}}}$. The formal description of the algorithm appears in \Cref{alg:selbintree}. The algorithm uses a dyadic decomposition (described in Remark~\ref{rem:bintree_dyadic}) to decide which nodes of the tree it accesses to compute any particular output.

%%%%%%%%%%%%%%%%%%%%%%%%%%%% PRIVACY %%%%%%%%%%%%%%%%%%%%%%%%%%%%%%%
%\begin{figure}[H]
%    \centering
%    \includegraphics[width=0.95\textwidth]{}
%    \caption{This picture needs editing}
%    \label{fig:binary_sim}
%\end{figure}
\begin{remark}[Dyadic Decomposition]\label{rem:bintree_dyadic}
For any natural number $t>1$, the interval $[1:t]$ can be expressed as a union of at most $\log t + 1$ disjoint intervals as follows. Consider the binary expansion of $t$ (which has at most $\log t + 1$ bits), and express $t$ as a sum of distinct powers of $2$ ordered from higher to lower powers. Then, the first interval $[1:r]$ will have size equal to the largest power of $2$ in the sum. The second interval will start at $r+1$ and its size will be equal to the second largest power of $2$ in the sum. Similarly, the remaining intervals are defined until all terms in the summation have been exhausted. For example, for $t=7=4+2+1$, the intervals are $[1:4]$, $[5:6]$ and $\{7\}$.
\end{remark}

We  present the privacy and accuracy analysis for \Cref{alg:selbintree} in Lemmas~\ref{lem:privbintree} and~\ref{lem:accbintree}, respectively, which together prove \Cref{thm:bintree_rho} for $\selection{}$.

\begin{lemma}\label{lem:privbintree}
For all $\rho \in \R^{+}$, $d,T \in \N$, mechanism $\mech$ described in \Cref{alg:selbintree} is $\rho$-zCDP in the \aCR.
\end{lemma}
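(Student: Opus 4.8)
The plan is to reduce privacy in the \aCR{} to a single application of adaptive zCDP composition over the noisy sums released by the binary tree. Fix an adversary $\adv$ and consider the privacy game $\Pi_{\mech,\adv}$; let $t^*$ be the challenge timestep chosen by $\adv$. Define a batch algorithm $B$ that takes as input one record $w\in\X=\zo^d$, internally runs the entire interaction between $\mech$ and $\adv$ with $w$ played as the record handed to $\mech$ at timestep $t^*$, and outputs $\adv$'s view. Because $x_{t^*}^{(L)},x_{t^*}^{(R)}\in\zo^d$ and the pair is produced by $\adv$ before it sees anything depending on the challenge record, we have $V_{\mech,\adv}^{(L)}=B(x_{t^*}^{(L)})$ and $V_{\mech,\adv}^{(R)}=B(x_{t^*}^{(R)})$; hence it suffices to show that $B$, viewed as a batch algorithm on datasets of size one, is $\rho$-zCDP (any two records in $\zo^d$ are ``neighbors'' there, and have $\ell_2$-distance at most $\sqrt d$).

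To analyze $B$, I would write it as an adaptive composition, in timestep order, of (i) the adversary's record-generation steps, (ii) the noise-addition step of \Cref{alg:selbintree} (line~\ref{line:noisebintree}) for the nodes completed at each timestep, and (iii) the output steps (lines~\ref{step:output-start}--\ref{line:outputselect}). The crucial structural fact is that, conditioned on the transcript produced before a node $\node_{[\nleft:\nright]}$ is completed, the only summand of its true count $\sum_{i=\nleft}^{\nright}\dstream_i$ that is not already determined is $\dstream_{t^*}=w$: records before timestep $t^*$ do not depend on $w$ at all, and each record after $t^*$ is a deterministic function of $\adv$'s coins and the earlier outputs $a_1,\dots,a_{s-1}$, which are post-processings of earlier noisy node sums and are therefore part of the conditioning. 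Consequently, conditioned on its prior transcript, the release for $\node_{[\nleft:\nright]}$ equals $w+c+Z$ for a fixed vector $c$ and fresh $Z\sim\gauss(0,\sigma^2\mathbb{I}^{d\times d})$ whenever $\nleft\le t^*\le\nright$, and equals a fixed vector plus fresh noise (i.e., is independent of $w$) otherwise; the output steps and record-generation steps depend on $w$ only through the transcript, so each of them is $0$-zCDP in $w$.

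There are exactly $\log_2 T+1$ nodes $\node_{[\nleft:\nright]}$ with $\nleft\le t^*\le\nright$ — namely, the leaf $\node_{[t^*:t^*]}$ and its $\log_2 T$ ancestors on the root-to-leaf path (recall $T$ is a power of $2$). For each such node, the corresponding sub-mechanism is, conditioned on its prior transcript, a (multivariate) Gaussian mechanism of $\ell_2$-sensitivity at most $\sqrt d$ in $w$, hence $\tfrac{d}{2\sigma^2}$-zCDP by \Cref{prop:gaussian-mech}; plugging in $\sigma=\sqrt{d(\log T+1)/(2\rho)}$ gives $\tfrac{\rho}{\log T+1}$-zCDP. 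All remaining sub-mechanisms are $0$-zCDP in $w$. Applying \Cref{lem:cdp_composition} inductively along the whole sequence, the only contributions to the privacy loss come from the $\log_2 T+1$ path nodes, so $B$ is $\big((\log_2 T+1)\cdot\tfrac{\rho}{\log T+1}\big)$-zCDP, i.e., $\rho$-zCDP. Therefore $V_{\mech,\adv}^{(L)}\simeq_\rho V_{\mech,\adv}^{(R)}$ for every $\adv$, which is the claim.

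The main obstacle I anticipate is making the bookkeeping of the adaptive composition fully rigorous in the adaptive model: one must set up the sub-mechanisms so that the auxiliary input threaded through them carries enough of the history (the adversary's internal coins and state, together with all previously released node sums and outputs) that the next record and the additive constant $c$ of the next node are genuinely deterministic functions of it, while ensuring that the object about which we ultimately conclude $\rho$-closeness is precisely $\adv$'s view (its coins and the messages it sends and receives) rather than some larger internal transcript — the latter being handled by noting the view is a post-processing of that transcript. A minor secondary point is dealing with $\adv$'s internal randomness: either fold it into $B$'s internal randomness so that the composition covers it automatically, or condition on it first and use quasi-convexity of Rényi divergence.
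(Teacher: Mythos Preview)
Your proposal is correct and takes essentially the same approach as the paper: isolate the $\log_2 T+1$ noisy node sums whose intervals contain $t^*$ as the only releases depending on the challenge, argue each is (conditionally) a Gaussian mechanism with $\ell_2$-sensitivity $\sqrt d$ and hence $\tfrac{\rho}{\log T+1}$-zCDP, and compose adaptively. The paper packages this via an explicit simulator $\Sim$ interacting with an ideal mechanism $\idealgauss$ whose input is $\side\in\{L,R\}$ rather than a record $w$; this framing cleanly resolves the bookkeeping concern you flag at the end (your $x_{t^*}^{(L)},x_{t^*}^{(R)}$ are random and correlated with $B$'s internal coins, so ``$B(x_{t^*}^{(L)})$'' does not type-check as a standard batch-DP input---taking $\side$ as the sensitive datum, as the paper does, avoids this).
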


\ifnum\pods=1
\begin{algorithm*}[h!]
\else
\begin{algorithm}[h!]
\fi
    \caption{Simulator $\Sim$ for the proof of \Cref{lem:privbintree}}
    \label{alg:binary_sim}
    \begin{algorithmic}[1]%\onehalfspacing
        \Statex \textbf{Input:} \Longunderstack[l]{time horizon $T \in \N$, privacy parameter $\rho\in\R^{+}$, black-box access to an adversary $\adv$ and \\ a mechanism $\idealgauss$.}
        \Statex \textbf{Output:} stream $(a_1,\dots,a_T) \in [d]^T$.
        \Statex \textbf{$\boldsymbol{\adv}$:} \Longunderstack[l]{
         At each timestep $t\in[T]\setminus\{t^*\}$, $\adv$ provides $\Sim$ with record $\dstream_t\in\X$, where $\X = \{0,1\}^d$.
         At \\the challenge timestep~$t^*$ (chosen by $\adv$), it provides records $\dstream_{t^*}^{(L)}, \dstream_{t^*}^{(R)} \in \X$. 
         At each timestep\\ $t\in[T]$, $\Sim$ provides $\adv$ with output $a_t\in [d]$.}
        \Statex \textbf{$\boldsymbol{\idealgauss}$:} $\Sim$ exchanges $\log_2 T + 1$ messages with $\idealgauss$.
        \State \textbf{Initialization:} Perform Step~1 (the initialization phase) of \Cref{alg:selbintree}.
        \State $j \gets 1$.
        \For{\text{$t\in [T]$}}
            \If{\text{$t = t^*$}}
                \State Get input $(\dstream_{t^*}^{(L)}, \dstream_{t^*}^{(R)})$ from $\adv$.
                \For{\text{$i \in [\log T + 1]$}}
                    \State Send $(\dstream_{t^*}^{(L)}, \dstream_{t^*}^{(R)})$ to $\idealgauss$, and get back a response $p_i$.
                \EndFor
            \Else
                \State Get record $\dstream_t$ from $\adv$.
            \EndIf
            \For{\text{each node $\node_{[\nleft,t]}$}}
                \If{$t^* \not\in [\nleft:t]$ where $[\nleft:t]$ denotes the integers $\{\nleft, ..., t\}$}
                    \State Draw noise $Z \sim \gauss(0,\sigma^2 \mathbb{I}^{d \times d})$, where $\sigma = \sqrt{\frac{d (\log T+1)}{2\rho}}$.
                    \State $\nodeval_{[\nleft:t]} \gets Z + \sum_{i=\nleft}^{\nright} \dstream_i$
                \Else\label{algline:contains_challenge_bintree}
                    \State $\node_{[\nleft:t]} \gets \sum_{i \in [\nleft:t]\setminus\{t^*\}} \dstream_i + p_j$. 
                    \State $j \gets j+1$.
                \EndIf
            \EndFor
            \State \Longunderstack[l]{\emph{Output Steps:}} Perform Steps~\ref{step:output-start}--\ref{step:a_tchoice} (the output steps) of \Cref{alg:selbintree}.
        \EndFor
    \end{algorithmic}
\ifnum\pods=0
\end{algorithm}
\else
\end{algorithm*}
\fi

\begin{algorithm}[h!]
\caption{Mechanism $\idealgauss$}
\label{alg:ideal_gauss}
    \begin{algorithmic}[1]
        \Statex \textbf{Input:} $\side \in \{L, R\}$ (not known to $\Sim$).
        \Statex \textbf{Output:} A natural number.
        \For{\text{$i=1$ to $\log T + 1$}}
            \State Get records $\node_i^{(L)}$, $\node_i^{(R)} \in \zo^d$ from $\Sim$.
            \State Draw noise from a multivariate Gaussian distribution $Z \sim \gauss(0,\sigma^2 \mathbb{I}^{d \times d})$, where $\sigma = \sqrt{\frac{d(\log T+1)}{2\rho}}$.\label{algline:idealgauss-sigma}
            \State Output $\node_i^{(\side)} + Z$
        \EndFor
%        \State  Return $V_{\mech,\adv}^{(side)}$
    \end{algorithmic}
\end{algorithm}

\begin{proof}
Consider an adversary $\adv$ interacting with the privacy game $\Pi_{\mech,\adv}$. We want to argue that the adversary's view is $\rho$-close in the two versions of the privacy game (for the two possible values of $\side \in \{L,R\}$.)  We will achieve this by introducing a $\rho$-zCDP mechanism $\idealgauss$ with input $\side$ and reducing our goal to the privacy of $\idealgauss$. 

For this, we use a simulation argument similar to those used in cryptography. Specifically, our proof defines two algorithms: (a) a $\rho$-zCDP mechanism $\idealgauss$ that gets input $\side \in \{L,R\}$ and (b) a simulator $\Sim$ with query access to $\idealgauss$ that does not know the value of $\side$. The simulator $\Sim$ interacts with adversary $\adv$ and satisfies a key guarantee: 
\begin{quote}
    The view of the adversary $\adv$ in its interaction with $\Sim$ is identically distributed to its view in the privacy game $\Pi_{\mech,\adv}$, defined in \Cref{alg:privacy_game}.   (\Cref{fig:sim_diagram} illustrates the structure of these two kinds of interaction.)
\end{quote} 

Since the simulator's outputs to $\adv$ are a post-processing of the query responses from $\idealgauss$, we can argue that the adversary's view is $\rho$-close in the two versions of the privacy game $\Pi_{\mech,\adv}$.

\begin{figure*}
    \centering
    \ifnum\pods=1
    \includegraphics[width={0.9\textwidth}]{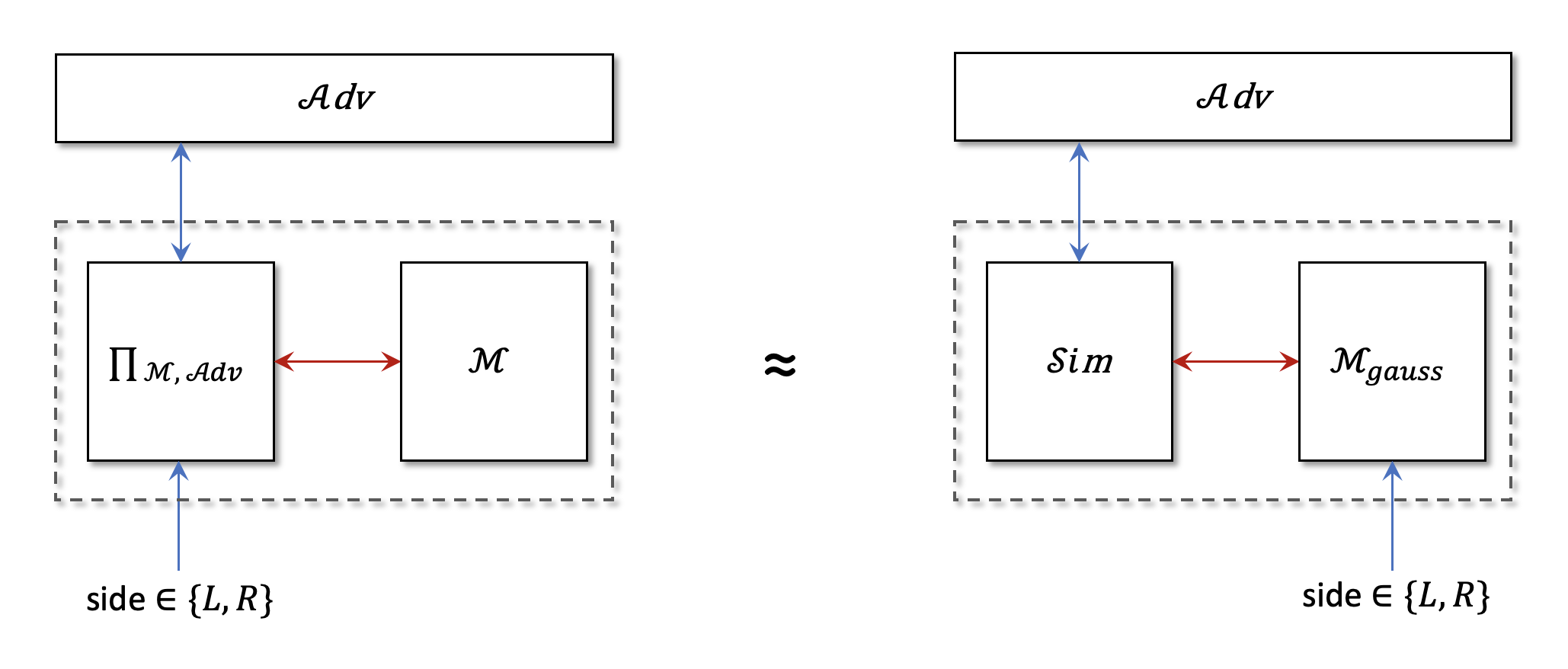}
    \else 
    \includegraphics[scale =0.35]{images/sim_diagram.png}
    \fi
    \caption{An illustration of the simulation argument from the proof of \Cref{lem:privbintree}. The left-hand side shows the game used to define privacy with adaptively selected inputs. The right-hand side shows the simulation structure described in the proof. For each value of $\side$, the adversary's view is identical in these two settings.}
    \label{fig:sim_diagram}
\end{figure*}

%To show that the view of the adversary $\adv$ in its interaction with $\Sim$ is identically distributed to its view in the privacy game $\Pi_{\mech,\adv}$, we want the probability that $\adv$ can guess the value of $\side$ in the privacy game to be the same as the probability that $\adv$ can guess the value of $\side$ in its interaction with $\Sim$. Then, since the simulator doesn't know the value of $\side$, $\adv$ can only learn as much about $\side$ from its interaction with $\Sim$ as one can learn by querying $\idealgauss$. Intuitively then, if $\idealgauss$ does not reveal much about the value of $\side$ then neither does $\mech$.

To see why this is helpful, recall that we want to show that the probability of $\adv$ guessing the value of $\side$ in the privacy game is small. If the probability of $\adv$ guessing the value of $\side$ is the same in the privacy game as in its interaction with $\Sim$, then—since the simulator doesn't know the value of $\side$—$\adv$ can only learn as much about $\side$ from its interaction with $\Sim$ as one can learn by querying $\idealgauss$. Intuitively, if $\idealgauss$ does not reveal much about the value of $\side$ then neither does $\mech$. We now describe $\idealgauss$ and the simulator, and formalize the argument.

The mechanism $\idealgauss$ (described in \Cref{alg:ideal_gauss}) gets  an input $\side \in \{L,R\}$. It receives at most $\log T + 1$ queries of the form $v^{(L)},v^{(R)}$ from $\Sim$ to which it responds with $p = v^{(\side)} + Z$ where the noise $Z$ is drawn from $\gauss(0,\sigma^2 \mathbb{I}^{d \times d})$ for $\sigma = \sqrt{\frac{d (\log T+1)}{2 \rho}}$. Observe that if $\idealgauss$ has only a single interaction with $\Sim$ and outputs a single noised value, then by the privacy guarantee of the Gaussian mechanism (Lemma~\ref{prop:gaussian-mech}), $\idealgauss$ is $\frac{\rho}{\log T+1}$-zCDP. This can be seen by imagining that $\idealgauss$ is computing a function $f(\side) = \dstream_i^{\side}$ and observing that the $\ell_2$-sensitivity of $f$ is $\sqrt{d}$. Since there are $\log T +1$ interactions between $\idealgauss$ and $\Sim$, $\idealgauss$ is an adaptive composition of $\log T+1$ algorithms, each of which is $\frac{\rho}{\log T+1}$-zCDP. By \Cref{lem:cdp_composition} on composition, $\idealgauss$ is $\rho$-zCDP. 

The simulator $\Sim$ (described in \Cref{alg:binary_sim}) interacts with the adversary without knowing the input $\side \in \{L,R\}$ that is given to $\idealgauss$. It queries $\idealgauss$ exactly $\log T + 1$ times and uses the query responses to provide outputs to the adversary. The aim of the simulator is to mimic the behaviour of $\Pi_{\mech,\adv}$ even though it doesn't know $\side$. The simulator constructs a binary tree as described in \Cref{alg:selbintree}. 
For all nodes in the binary tree except for those whose interval contains the challenge timestep $t^*$, 
the computation of the noisy subtree sums 
can be done by $\Sim$ without any help from $\idealgauss$. For the nodes whose interval does contain $t^*$, the simulator sends $(x_{t^*}^{(L)},x_{t^*}^{(R)})$ to $\idealgauss$ and gets a noisy value of $x_{t^*}^{\side}$. It can then compute the corresponding subtree sum by adding the input records corresponding to the remaining leaves. Notice that the simulator can produce these outputs  online---at the same time that $\Pi_{\mech,\adv}$ would.

The crucial point to note is that the view of the adversary $\adv$ in the privacy game $\Pi_{\mech,\adv}$ is identically distributed to its view in the interaction with $\idealgauss$ and $\Sim$. Furthermore, the view of the adversary $\adv$ when interacting with $\Sim$ and $\idealgauss$ is simply a post-processing of the outputs provided to it by $\Sim$, which are a post-processing of the outputs provided to $\Sim$ by $\idealgauss$. Hence,
$$\idealgauss\text{ is }\rho\text{-zCDP } \implies V_{\mech,\adv}^{(L)}\simeq_{\rho} V_{\mech,\adv}^{(R)}.$$

It remains to argue that exactly $\log T+1$ nodes have a subtree sum that depends on the inputs from the challenge timestep $t^*$. Each node $\node_{[\nleft,\nright]}$ whose subtree sum depends on the inputs from timestep $t^*$ satisfies $t^*\in [\nleft:\nright]$. This holds only for one node at each level of the binary tree created by $\Sim$ (because the intervals represented by the nodes at a particular level are disjoint.) Since the binary tree has depth $\log T + 1$, exactly $\log T+1$ nodes have a subtree sum that depends on the inputs from the challenge timestep $t^*$. 
\end{proof}

%%%%%%%%%%%%%%%%%%%%%%%%%%%% ACCURACY %%%%%%%%%%%%%%%%%%%%%%%%%%%%
\begin{lemma}\label{lem:accbintree}
For all $\rho > 0$ and sufficiently large $T \in \N$, mechanism $\mech$ is $(\alpha,T)$-accurate for $\selection{}$ in the \aCR{}  for $\alpha = \BigO{ \frac{\sqrt{d} \log T \sqrt{\log(Td)}}{\sqrt{\rho}}}$.
\end{lemma}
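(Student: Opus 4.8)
The plan is to reduce the accuracy guarantee to a tail bound on the Gaussian noise. First I would observe that the only randomness in \Cref{alg:selbintree} is the noise vectors $Z_{[\nleft:\nright]}\sim\gauss(0,\sigma^2\mathbb{I}^{d\times d})$ attached to the (at most $2T-1$) nodes, and that the dyadic decomposition $I_t$ of $[1:t]$ used at \Cref{step:noisysumvec} is a deterministic function of $t$ alone, with $|I_t|\le\log t+1$. Hence, for every $t\in[T]$ and $j\in[d]$, the quantity $W_{t,j}:=\mathsf{sum}_t[j]-\sum_{i\in[t]}\dstream_i[j]$ is a fixed sum of $|I_t|$ of the independent scalar Gaussians $Z_{[\nleft:\nright]}[j]$, so $W_{t,j}\sim\gauss(0,\sigma_t^2)$ with $\sigma_t^2=|I_t|\,\sigma^2\le(\log T+1)\sigma^2$. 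The point to stress is that this distributional statement holds no matter how $\adv$ generates the stream, since it concerns only the mechanism's internal noise.

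Next I would establish the ``good noise'' event. Applying \Cref{lem:gauss_conc} to each $W_{t,j}$ (using $\sigma_t^2\le(\log T+1)\sigma^2$) and taking a union bound over the $Td$ pairs $(t,j)$, with probability at least $2/3$ we have $|W_{t,j}|\le B$ for all $t,j$ simultaneously, where $B:=\sigma\sqrt{2(\log T+1)\ln(6Td)}$. I would then carry out the deterministic part: condition on this event and fix any $t$; let $j^\star=\selection{d}(\xt)$ achieve the true maximum $M_t=\maxsum{d}(\xt)$, and let $a_t$ be the algorithm's output. Since $a_t=\argmax_{j}\mathsf{sum}_t[j]$, we have $\mathsf{sum}_t[a_t]\ge\mathsf{sum}_t[j^\star]$, and therefore
\[
\sum_{i\in[t]}\dstream_i[a_t]=\mathsf{sum}_t[a_t]-W_{t,a_t}\ \ge\ \mathsf{sum}_t[j^\star]-W_{t,a_t}\ =\ M_t+W_{t,j^\star}-W_{t,a_t}\ \ge\ M_t-2B,
\]
so $\err_{\selection{}}(\xt,a_t)=M_t-\sum_{i\in[t]}\dstream_i[a_t]\le 2B$. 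Taking the maximum over $t\in[T]$, the good noise event implies $\max_t\err_{\selection{}}(\xt,a_t)\le 2B$, which is exactly $(\alpha,T)$-accuracy with $\alpha=2B$. Substituting $\sigma=\sqrt{d(\log T+1)/(2\rho)}$ gives $\alpha=2(\log T+1)\sqrt{d\ln(6Td)/\rho}=\BigO{\sqrt d\,\log T\,\sqrt{\log(Td)}/\sqrt\rho}$, as claimed.

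The main thing to be careful about — the ``obstacle,'' such as it is — is the interplay between adaptivity and accuracy. One must verify that the good noise event is a statement purely about the mechanism's internal randomness and hence independent of $\adv$'s choices, so that conditioning on it leaves the error bound holding \emph{deterministically} for every stream, in particular the adaptively generated one; this in turn relies on $I_t$ (and thus the variance of $W_{t,j}$) not depending on the stream. Once this is in place, the remainder is a routine Gaussian union bound. The $\maxsum{d}$ case of \Cref{thm:bintree_rho} follows by the identical argument, replacing the $\argmax$ step by outputting $\max_{j}\mathsf{sum}_t[j]$ and using that $|\mathsf{sum}_t[j^\star]-M_t|\le B$ directly.
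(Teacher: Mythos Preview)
Your proposal is correct and follows essentially the same approach as the paper's proof: both bound $\err_{\selection{}}(\xt,a_t)$ by twice the max coordinate of the Gaussian noise in $\mathsf{sum}_t-\sum_i\dstream_i$, observe that this noise has variance at most $(\log T+1)\sigma^2$ per coordinate, and union bound over all $(t,j)$ pairs to get the stated $\alpha$. If anything, your treatment of adaptivity---isolating the ``good noise'' event as depending only on the mechanism's coins via the stream-independence of $I_t$---is cleaner and more explicit than the paper's, which handles this point somewhat implicitly.
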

\begin{proof}
Consider any adversarial process $\adv$ interacting with $\mech$.
We  first argue that, at every timestep~$t$, the random variable $ \err_{\selection{d}}(\xt, a_t)$ corresponding to the error at any timestep~$t$ can be upper bounded by a random variable that is the sum of at most $2\log t$ independent Gaussian random variables. We then use tail bounds for Gaussian random variables, along with a union bound, to argue that, with high probability, the maximum value of this random variable is not too large. Finally, we take a union bound over timesteps to argue that, with high probability, $\max_{t\in [T]} \err_{\selection{d}}(\xt, a_t)$ is not too large.

First, at any timestep~$t$, let $sum_t$ represent the vector of noisy sums defined in Step~\ref{step:noisysumvec} of \Cref{alg:selbintree}. Therefore, each coordinate of this sum, $sum_t[j] = \sum_{i\in [t]} \dstream_t[j] + \sum_{i\in [|I_t|]} Z_i$ is the sum of at most $\log t+1$ noisy interval sums. Here, $Z_i$ is a Gaussian random variable with mean $0$ and standard deviation $\sigma = \sqrt{\frac{d(\log T+1)}{2\rho}}$, and all $Z_i$s are mutually independent. Hence, by the linearity of expectation, and by the linearity of the variance of independent random variables, we get that $\sum_{i\in [|I_t|]} Z_i$ is a Gaussian random variable with mean $0$ and standard deviation $\sqrt{\frac{d(\log T+1)|I_t|}{2\rho}} \leq \sqrt{\frac{d(\log T+1)(\log t + 1)}{2\rho}}$. Consider the vector $N$ consisting of the absolute values of $d$ random variables independently drawn from the distribution $\gauss(0,\sigma^2)$ where $\sigma = \sqrt{\frac{d|I_t|(\log T+1)}{2\rho}}$. The distribution of $N$ is identical to the component-wise absolute values of the Gaussian noise vector $sum_t - \sum_{i\in [t]} \dstream_i$. Then,
$$\sum_{i\in [t]} \dstream_i[a_t] + \max_{j \in [d]} N[j] \geq \maxsum{d}(\xt) - \max_{j \in [d]} N[j],$$ 
since if $a_t$ is selected at timestep~$t$, the noisy sum of coordinate $a_t$ at timestep~$t$ is larger than the noisy sums of all other coordinates at timestep~$t$ (see Step~\ref{step:a_tchoice} in \Cref{alg:selbintree}). Thus,
\ifnum\pods=0
$$\err_{\selection{d}}(\xt, a_t) = \maxsum{d}(\xt) - \sum_{i\in [t]} \dstream_i[a_t] \leq 2 \max_{j \in [d]} N[j].$$
\else
\begin{align*}
&\err_{\selection{d}}(\xt, a_t)\\ = &\quad \maxsum{d}(\xt) - \sum_{i\in [t]} \dstream_i[a_t] \leq 2 \max_{j \in [d]} N[j].
\end{align*}
\fi
Next, we reason about $\max_{j \in [d]} N[j]$ using standard probability tools. Set $\ell = \sqrt{10\frac{d(\log T+1)(\log t+1) \log(dT)}{2\rho}}.$ By Lemma~\ref{lem:gauss_max} on concentration of the maximum of the absolute values of Gaussian random variables, and since
$\sigma \leq \sqrt{\frac{d(\log T+1)(\log t + 1)}{2\rho}}$, we get that
$$\Pr [\max_{j \in [d]} N[j] > \ell] \leq 2d e^{-\frac{\ell^2}{2\sigma^2}} \leq 2d e^{-5\log(dT)} \leq \frac{2}{T^{5}}.$$ 
Then,  with probability at most $\frac{2}{T^{5}}$ (over the coins of the algorithm $\alg$ and the adversarial process $\adv$), 
\begin{align*}
\err(\xt, a_t) &> 20 \sqrt{\frac{d(\log T+1)(\log t+1) \log(dT)}{2\rho}}\\
&\geq 20 \sqrt{\frac{d(\log T+1)^2\log(dT)}{2\rho}},
\end{align*}
since $t \leq T$.
By a union bound over all $t \in [T]$, we get that $\max_{t\in [T]} \err_{\selection{d}}(\xt, a_t) > 20 \sqrt{\frac{d(\log T+1)^2 \log(dT)}{2\rho}}$ 
with probability at most $\frac{2}{T^{4}} \leq \frac{1}{3}$ for sufficiently large~$T$. 
This proves the lemma.
\end{proof}
\begin{proof}[Proof Sketch of Theorem~\ref{thm:bintree_pure}]
The proof of Theorem~\ref{thm:bintree_pure} for $\selection{d}$ closely follows the exposition above. The mechanism used is the same as Algorithm~\ref{alg:selbintree}, except that in Line~\ref{line:noisebintree}, $Z$ is drawn from $Lap(\frac{d(\log T + 1)}{\eps})$ instead of a Gaussian distribution. The privacy proof is exactly as in Lemma~\ref{lem:privbintree}, except that we use that the composition of $\log T + 1$ mechanisms that are $\big(\frac{\eps}{\log T + 1},0\big)$-DP is $(\eps, 0)$-DP instead a composition theorem for $\rho$-zCDP. The accuracy proof closely follows that of Lemma~\ref{lem:accbintree}, with the main difference being that the the vector $N$ is defined as the component-wise absolute value of $d$ random variables independently drawn from the distribution of the sum of $|I_t|$ independent random variables distributed as $Lap\left(\frac{d(\log(T)+1)}{\eps}\right)$. We  then use the concentration inequality for the maximum of the absolute values of independent Laplace random variables over $d|I_t|$ random variables in Lemma~\ref{lem:lapmax} with $a = 2 \log T$ to argue that the absolute value of each Laplace random variable is smaller than 
$\frac{d(\log T + 1)}{\eps}(\log(d|I_t|) + 10 \log T)$ with probability at least $\frac{1}{T^{10}}$. This implies that $\max_{j \in [d]} N[j]$ is smaller than 
$\frac{d(\log T + 1)^2}{\eps}(\log(d(\log T+1)) + 10 \log T)$ with probability at least $\frac{1}{T^{10}}$, upper bounding $|I_t|$ by $\log T + 1$. Taking a union bound over $T$ and using the fact that $\log(d(\log T + 1)) \leq \log d (\log T + 1)$ for sufficiently large $T$ completes the proof.
\end{proof}
Theorems~\ref{thm:bintree_rho} and~\ref{thm:bintree_pure} for $\maxsum{d}$ are proved analogously. The main difference is that we output $\max_{j \in [d]} sum_t[j]$ instead of $\argmax_{j \in [d]} sum_t[j]$ in \Cref{line:outputselect} of \Cref{alg:selbintree}.

%%%%%%%%%% Sensitivity one functions %%%%%%%%%%%%%%%%%%
\subsection{Algorithms that Recompute at Regular Intervals}\label{sec:maxsum-upperbounds}

In this section, we prove Item~1 of \Cref{thm:recompute_rho} for sensitivity-1 functions. The proof of Item~2 of \Cref{thm:recompute_rho} builds on  the same idea of recomputing $\selection{d}$ every $T/m$ timesteps, but it uses the report noisy max (with exponential noise) algorithm for $\selection{d}$ \cite{McSheldon} instead of adding Gaussian noise to the function. We omit the details, since the argument is essentially the same as in the rest of this section.

The mechanism recomputes the function every $r$ timesteps. Between recomputations, it outputs the most recently computed value. We select $r$ to balance the privacy cost of composition with the error due to returning stale values between recomputations.

\begin{algorithm}[H]
    \caption{Mechanism $\mech$ for sensitivity-$1$ functions in \aCR}
    \label{alg:maxsum}
    \hspace*{\algorithmicindent}
    \begin{algorithmic}[1] % The number tells where the line numbering should start
        \Statex\textbf{Input:} time horizon $T$, privacy parameter $\rho > 0$, recompute period $r \in [T-1]$, function $f$, stream $\vec{\dstream} = (\dstream_1,\dots,\dstream_T) \in \X^n$ where $\X = \{0,1\}^d$.  \Statex\textbf{Output:} stream $(a_1,\dots,a_T) \in \R^T$.
        \State $m \gets \lfloor\frac{T-1}{r}\rfloor$.
        \For{\text{$k=1$ to $m$}}
        \State Get input record $\dstream_{(k-1)r+1}$. 
        \State Draw $Z_k \sim \gauss(0,\sigma^2)$, where $\sigma = \sqrt{\frac{m}{2\rho}}$.   
               \label{line:gaussnoiserecom}
               \State Output $a_{(k-1)r+1} \gets f(\vec{\dstream}_{[(k-1)r+1]} ) + Z_k$. \label{line:noise_sensone}
               \For{\text{$t=(k-1)r+2$ to $kr$}}
                  \State Get input record $\dstream_t$.
                  \State Output $a_t \gets a_{(k-1)r+1}$. \label{line:recompute_sensone}
               \EndFor
             \EndFor
    \end{algorithmic}
\end{algorithm}

\begin{claim}\label{lem:privmax}
For all $\rho, T > 0$, $r \in [T-1]$, mechanism $\mech$ defined in \Cref{alg:maxsum} is $\rho$-zCDP in the \aCR{}.  
\end{claim}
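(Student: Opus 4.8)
The plan is to follow the simulation template used in the proof of \Cref{lem:privbintree}. Fix an adversary $\adv$ interacting with the privacy game $\Pi_{\mech,\adv}$ (\Cref{alg:privacy_game}); we must show $V_{\mech,\adv}^{(L)} \simeq_{\rho} V_{\mech,\adv}^{(R)}$. To this end we define (a) an ideal mechanism $\idealrecomp$ that receives a secret bit $\side\in\{L,R\}$ and (b) a simulator $\Sim$ with black-box access to $\idealrecomp$ and to $\adv$, where $\Sim$ does not know $\side$, such that for each fixed value of $\side$ the view of $\adv$ in its interaction with $\Sim$ is distributed identically to $V_{\mech,\adv}^{(\side)}$. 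Since this simulated view is obtained from the responses of $\idealrecomp$ by post-processing with coins (those of $\adv$ and $\Sim$) that are independent of $\side$, and R\'enyi divergence does not increase under post-processing, it suffices to prove that $\idealrecomp$ is $\rho$-zCDP.

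The mechanism $\idealrecomp$ receives from $\Sim$ at most $m$ adaptively chosen pairs $(v_i^{(L)},v_i^{(R)})\in\R^2$ with $|v_i^{(L)}-v_i^{(R)}|\le 1$, where $m=\lfloor\frac{T-1}{r}\rfloor$ as in \Cref{alg:maxsum}, and answers the $i$-th pair with $v_i^{(\side)}+Z_i$, where $Z_i\sim\gauss(0,\sigma^2)$ and $\sigma=\sqrt{m/(2\rho)}$. Each answer is the Gaussian mechanism applied to the map $\side\mapsto v_i^{(\side)}$, whose sensitivity over the two inputs $L$ and $R$ is at most $1$; hence, by \Cref{prop:gaussian-mech}, each answer is $(1/(2\sigma^2))$-zCDP, i.e., $(\rho/m)$-zCDP, and by adaptive composition of zCDP (\Cref{lem:cdp_composition}) the whole interaction with $\idealrecomp$ is $\rho$-zCDP. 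The simulator $\Sim$ mimics \Cref{alg:maxsum}: it forwards $\adv$'s regular records, maintains a single input stream internally (with position $t^*$ holding the two candidate records $x_{t^*}^{(L)},x_{t^*}^{(R)}$ supplied by $\adv$), and at each recompute point $(k-1)r+1$ it produces an output exactly as the algorithm would. The only quantities in \Cref{alg:maxsum} that depend on the challenge record $x_{t^*}^{(\side)}$ are the recomputed values $f(\vec{\dstream}_{[(k-1)r+1]})$ for blocks $k$ with $(k-1)r+1\ge t^*$; for every earlier block the partial stream, and therefore the output, is independent of $\side$, so $\Sim$ computes those values and draws the noise itself. For a block $k$ with $(k-1)r+1\ge t^*$, the simulator takes the stream realized so far, substitutes $x_{t^*}^{(L)}$ (resp.\ $x_{t^*}^{(R)}$) in position $t^*$, and lets $v_k^{(L)}$ (resp.\ $v_k^{(R)}$) be the value of $f$ on the resulting prefix; these two prefixes are neighbors, so $|v_k^{(L)}-v_k^{(R)}|\le 1$ and the pair is a legal query. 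It sends the pair to $\idealrecomp$, outputs the response, and repeats it for the stale timesteps of the block. A routine induction on $t$ shows that, for each value of $\side$, the resulting output stream---and hence $\adv$'s view---has exactly the distribution it has in $\Pi_{\mech,\adv}$ with that $\side$, since $\Sim$ and $\mech$ work with the same realized transcript after the challenge timestep.

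The main point to be careful about---and the reason a one-line ``$m$-fold Gaussian composition'' argument does not apply directly---is that, because $\adv$ chooses its later records based on $\mech$'s (side-dependent) outputs, the input stream after the challenge timestep may genuinely differ between the two executions. The resolution, which is the crux of the argument, is the observation used above: within any single execution the stream is fixed, and the simulator, holding both challenge records, can always present $\idealrecomp$ with a pair of $f$-values that differ only in the $t^*$-th record and hence by at most the sensitivity of $f$; thus the total privacy loss across the (at most $m$) side-dependent recomputations is exactly $\rho$, regardless of how the stream evolves. The degenerate case in which no recompute point lies at or after $t^*$ is handled automatically, since then $\adv$'s view does not depend on $\side$ at all.
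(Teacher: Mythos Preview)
Your proof is correct and follows essentially the same simulation approach as the paper: an ideal mechanism handles the at most $m$ side-dependent recomputations via the Gaussian mechanism (each $\rho/m$-zCDP), adaptive composition yields $\rho$-zCDP for the ideal mechanism, and the simulator reproduces the adversary's view exactly for each value of $\side$. The only cosmetic difference is packaging: the paper's $\idealrecomp$ receives the function $f$ together with the two neighboring prefixes and evaluates $f$ internally (and is invoked up to $m$ times), whereas you evaluate $f$ inside $\Sim$ and hand $\idealrecomp$ the resulting real-valued pair (with the $m$ queries folded into a single interactive mechanism); both formulations are equivalent.
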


%%%%%%%%%%%%%%%%%%%%% PRIVACY PROOF %%%%%%%%%%%%%%%%%%%%%%%%%%%%%%
\begin{proof}
Consider an adversary $\adv$ interacting with $\mech$. We define a mechanism \sstext{$\idealrecomp$}, similar to \Cref{alg:ideal_gauss}, and a simulator $\Sim$ that interacts with the adversary $\adv$ such that the view of adversary $\adv$ in the interaction with $\idealrecomp$ and $\Sim$ is identically distributed to its view in the privacy game $\Pi_{\mech,\adv}$, defined in \Cref{alg:privacy_game}. 

\begin{algorithm}[h!]
\caption{Mechanism $\idealrecomp$}
\label{alg:ideal_recomp}
    
    \begin{algorithmic}[1]
        %\Statex Let $f$ be a known function with L2 sensitivity $\Delta_f = 1$. 
        \Statex \textbf{Input:} $\side \in \{L, R\}$ (not known to $\Sim$) %, function $f$ with L2 sensitivity $\Delta_f = 1$.
        \Statex \textbf{Output:} A natural number.
        \State Get neighboring datasets $\vec{\dset}^{(L)}$, $\vec{\dset}^{(R)} \in \zo^d$ and a function $f$ with $\ell_2$ sensitivity  at most $1$ from $\Sim$.
        \State Draw noise $Z \sim \gauss(0,\sigma^2)$, where $\sigma = \sqrt{\frac{m}{2\rho}}$.
        \State Output $f\left(\dset^{(\side)}\right) + Z$
%        \State  Return $V_{\mech,\adv}^{(side)}$
    \end{algorithmic}
\end{algorithm}

\ifnum\pods = 1
\begin{algorithm*}[h!]
\else
\begin{algorithm}[h!]
\fi
    \caption{Simulator $\Sim$ for the proof of \Cref{lem:privmax}}
    \label{alg:recompute_sim}
    \begin{algorithmic}[1]%\onehalfspacing
        \Statex \textbf{Input:} time horizon $T$, privacy parameter $\rho > 0$, recompute period $r \in [T-1]$, function $f$. $\Sim$ also has black-box access to an adversary $\adv$ and a process $\idealrecomp$.
        \Statex \textbf{Output:} stream $(a_1,\dots,a_T) \in \R^T$
        \Statex \textbf{$\boldsymbol{\adv}$:} \Longunderstack[l]{
         At each timestep $t\in[T]\setminus\{t^*\}$, $\adv$ provides $\Sim$ with record $\dstream_t\in\X$. At the challenge timestep~$t^*$\\ (chosen by $\adv$), it provides two records $\dstream_{t^*}^{(L)}, \dstream_{t^*}^{(R)}\in\X$. At every timestep $t\in[T]$, $\Sim$ provides \\$\adv$ with output $a_t\in \R$.}
        \Statex \textbf{$\boldsymbol{\idealrecomp}$:} $\Sim$ exchanges $T/r$ messages with $\idealrecomp$.
        \State \textbf{Initialization:} $m \gets \as{\lceil\frac{T}{r}\rceil} $, $j \gets 1$.
        \State For timesteps $t<t^*$, run mechanism $\mech$ in \Cref{alg:maxsum} with inputs from $\adv$, with the same $T, \rho, r, f$. Let $a_t$ be $\mech$'s output at timestep $t$.
        \For{\text{$t \geq t^*$}}
            \If{\text{$t = t^*$}}
                \State Get input $(\dstream_{t^*}^{(L)}, \dstream_{t^*}^{(R)})$ from $\adv$. 
                %\For{\sstext{\sout{\text{$i \in [m]$}}}}
                %    \State \sstext{\sout{Send $(\dstream_{t^*}^{(L)}, \dstream_{t^*}^{(R)})$ to $\idealgauss$, and get back a response $p_i$}}.
                %\EndFor
            \EndIf
            \If{\text{$t \neq t^*$}}
                \State Get record $\dstream_t$ from $\adv$.
            \EndIf
            \If{$t \mod r = 1$}
                %\State \sstext{Let $f_t$ be the function defined by $f_t(\dstream) = f\left(\dstream_{[t^*-1]},\dstream,\dstream_{t^*+1},\dots, \dstream_{t})\right)$}. 
                \State Let $\vec{\dset}^{(\side)}_t = \{\dstream_1,\dots,\dstream_{t^*-1},\dstream^{(\side)}_{t^*},\dstream_{t^*+1},\dots,\dstream_t\}$ \as{for each $\side \in \{L,R\}$}
                \State \sstext{ 
                $a_t \gets \idealrecomp\left(f, \vec{\dset}_t^{(L)}, \vec{\dset}_t^{(R)}\right)$.}
            \Else
                \State $q \gets \lfloor \frac{t}{r}\rfloor$; output $a_t \gets a_{q +1}$. 
            \EndIf
            \EndFor
    \end{algorithmic}
\ifnum\pods = 0
\end{algorithm}
\else
\end{algorithm*}
\fi
%\sstext{I think the next 2 paragraphs should be replaced by the red text below.}
%The mechanism $\idealgauss$ is exactly as described in \Cref{alg:ideal_gauss}, except that in Step~\ref{algline:idealgauss-sigma} it sets $\sigma = \sqrt{\frac{dm}{2\rho}}$. Similarly to the proof of \Cref{lem:privbintree}, each of the $m$ interactions between $\idealgauss$ and $\Sim$ are $\frac{\rho}{m}$-zCDP. We can reason about the overall privacy of $\idealgauss$ by thinking about it as an adaptive composition of $m$ algorithms, each of which are $\frac{\rho}{m}$-zCDP, and correspond to a single interaction with $\Sim$. Applying \Cref{lem:cdp_composition} on composition then gives us that $\idealgauss$ is $\rho$-zCDP.

%Since the function it accepts has $\ell_2$ sensitivity $1$, each of the $m$ interactions between $\idealgauss$ and $\Sim$ are $\frac{\rho}{m}$-zCDP. We can reason about the overall privacy of $\idealgauss$ by thinking about it as an adaptive composition of $m$ algorithms, each of which are $\frac{\rho}{m}$-zCDP, and correspond to a single interaction with $\Sim$. Applying \Cref{lem:cdp_composition} on composition then gives us that $\idealgauss$ is $\rho$-zCDP.

\sstext{The mechanism $\idealrecomp$ is defined in Algorithm~\ref{alg:ideal_recomp}. Since the function $f$ has $\ell_2$ sensitivity at most $1$, then by the privacy of the Gaussian mechanism, and since the variance of the noise added is $\frac{m}{2\rho}$), $\idealrecomp$ is $\frac{\rho}{m}$-zCDP with respect to the dataset consisting of $\side \in \{L,R\}$.} 
%every timestep $t$ after the challenge timestep at which a recomputation happens, the simulator passes the function with all data except that of the challenge timestep hardcoded to $\idealrecomp$ as well as the pair . B, every interaction between $\idealrecomp$ and $\Sim$ is $\frac{\rho}{m}$-zCDP with respect to the dataset consisting of only the element $\side$. Hence, by adaptive composition, since $\idealrecomp$ is called at most $m$ times by $\Sim$, we get that the output transcript of $\idealrecomp$ is $\rho$-zCDP with respect to the dataset containing $\side$.}

The simulator $\Sim$ (described in \Cref{alg:recompute_sim}) gets inputs from $\adv$, but it does not know the input $\side \in \{L,R\}$ that is given to \sstext{$\idealrecomp$}. It interacts with \sstext{$\idealrecomp$} to provide outputs to the adversary $\adv$. The aim of the Simulator is to mimic the behaviour of $\Pi_{\mech,\adv}$ even though it doesn't know $\side$. For all timesteps $t<t^*$ before the challenge timestep, the simulator behaves exactly like $\mech$. 
%
% \sstext{\sout{Starting at the challenge timestep, for every $t\in[t^*:T]$ where $\mech$ would recompute the noised value of the sum, $\Sim$ queries $\idealgauss$ on $(x_{t^*}^{(L)},x_{t^*}^{(R)})$ and gets a response $p_j$, it then computes \\$f\left((\dstream_1,\dots,\dstream_{t^*-1},p_j,\dstream_{t^*+1},\dots, \dstream_{t-1})\right)$.} }
%
\sstext{Starting at the challenge timestep, for every $t\in[t^*:T]$ where $\mech$ would recompute the noised value of the sum, $\Sim$ sends $\idealrecomp$ the function $f$ as well as neighboring datasets  $\vec{\dset}_t^{(L)}, \vec{\dset}_t^{(R)}$ defined by
$$\vec{\dset}^{(\side)}_t = \{\dstream_1,\dots,\dstream_{t^*-1},\dstream^{(\side)}_{t^*},\dstream_{t^*+1},\dots,\dstream_t\}.$$
Since $\Sim$ queries $\idealrecomp$ at most $m$ times, by adaptive composition, the output transcript of $\idealrecomp$ is $\rho$-zCDP with respect to the dataset consisting of $\side$.}

The view of the adversary $\adv$ in the real privacy game $\Pi_{\mech,\adv}$ is identically distributed to its view in the interaction with \sstext{$\idealrecomp$} and $\Sim$. Furthermore, the view of the adversary $\adv$ when interacting with $\Sim$ and \sstext{$\idealrecomp$} is simply a post-processing of the outputs provided to it by $\Sim$, which are a post-processing of the outputs provided to $\Sim$ by \sstext{$\idealrecomp$}. 
\sstext{As argued previously, the output  of $\idealrecomp$} when $\side = L$ is $\rho$-close to its output \sstext{transcript} when $\side  = R$. Hence we have that $V_{\mech,\adv}^{(L)}\simeq_{\rho} V_{\mech,\adv}^{(R)}$.
\end{proof}

\begin{claim}\label{lem:acc-max-delta}
Fix $\rho > 0$, sufficiently large $T > 0$, and $2 \leq m \leq T$. Let $f:\X^* \to \Z$ be a function with $\ell_2$-sensitivity at most $1$. Then mechanism $\mech$, defined in \Cref{alg:maxsum} is $(\alpha, T)$-accurate for $f$ in the \aCR{} where $\alpha = \frac{T}{m} + \sqrt{\frac{10m \log m}{\rho}}.$
\end{claim}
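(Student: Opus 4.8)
The plan is to decompose the error at each timestep into a deterministic ``staleness'' term (from serving a value that is a few timesteps out of date) and a random Gaussian-noise term, bound each separately, and take a union bound over the $m$ noise variables. A key feature of this decomposition is that it holds pointwise over realizations of the noise, which is what makes the argument robust to adaptivity.

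First I would fix an arbitrary adversary $\adv$ and observe that the only internal randomness of $\mech$ is the vector $(Z_1,\dots,Z_m)$ of i.i.d.\ $\gauss(0,\sigma^2)$ variables with $\sigma^2 = m/(2\rho)$ drawn in Line~\ref{line:gaussnoiserecom}. For any fixed realization of these variables, the interaction with $\adv$ determines a fixed stream $\vec{\dstream}$ and output stream $\vec{a}$. For a timestep $t$ lying in block $k$ (so $(k-1)r+1 \le t \le kr$), Lines~\ref{line:noise_sensone} and~\ref{line:recompute_sensone} give $a_t = f\bigl(\vec{\dstream}_{[(k-1)r+1]}\bigr) + Z_k$, so, by the definition of $\err_f$ in (\ref{eq:error-real-valued}) for scalar-valued $f$,
\begin{equation*}
\err_f(\xt,a_t) = \Bigl| f(\xt) - f\bigl(\vec{\dstream}_{[(k-1)r+1]}\bigr) - Z_k \Bigr| \le \Bigl| f(\xt) - f\bigl(\vec{\dstream}_{[(k-1)r+1]}\bigr) \Bigr| + |Z_k| .
\end{equation*}
For the staleness term I would use that appending a single record changes $f$ by at most its sensitivity, which is at most $1$ (e.g.\ for $\maxsum{d}$, appending a record from $\zo^d$ raises each attribute sum, and hence their maximum, by at most $1$); since each block has length at most $r$ and $r \le (T-1)/m \le T/m$ by the choice $m = \lfloor (T-1)/r\rfloor$, the staleness term is at most $T/m$. (I would also note that, after a harmless adjustment of the last block so that the blocks partition $[T]$, this bound still holds up to constants.)

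Next I would control $\max_{k\in[m]}|Z_k|$ using \Cref{lem:gauss_max}: taking $\ell = \sqrt{10\,m\log m/\rho}$ gives $\ell^2/(2\sigma^2) = 10\log m$, hence
\begin{equation*}
\Pr\Bigl[\max_{k\in[m]} |Z_k| > \ell\Bigr] \le 2m\, e^{-10\log m} \le \tfrac13
\end{equation*}
for every $m \ge 2$ (in particular for sufficiently large $T$, which also ensures the recompute period is a valid integer in $[T-1]$). Conditioning on the complementary event, every timestep satisfies $\err_f(\xt,a_t) \le T/m + \ell = \alpha$, so $\Pr\bigl[\max_{t\in[T]}\err_f(\xt,a_t)\le \alpha\bigr] \ge 2/3$, which is exactly $(\alpha,T)$-accuracy in the \aCR{}.

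The one point I would be most careful about is that the bound must hold for \emph{every} adversary, including adaptive ones whose records depend on past outputs and hence on the noise. This is handled automatically by the structure of the argument: the staleness estimate $\bigl|f(\xt)-f(\vec{\dstream}_{[(k-1)r+1]})\bigr|\le T/m$ holds for \emph{any} stream whatsoever, while the tail bound on $\max_k|Z_k|$ concerns only $\mech$'s internal coins; therefore the ``good event'' has probability at least $2/3$ no matter what $\adv$ does, and on that event the error is simultaneously bounded at all timesteps of whatever stream was actually produced. No genuinely hard step is expected here — the content is entirely in setting up this noise/staleness split so that it survives adaptivity.
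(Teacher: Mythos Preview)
Your proposal is correct and follows essentially the same approach as the paper: split the error via the triangle inequality into a staleness term bounded by $T/m$ (using sensitivity $1$ over at most $r\le T/m$ appended records) and a noise term $|Z_k|$, then control $\max_k |Z_k|$ by \Cref{lem:gauss_max} with $\ell=\sqrt{10m\log m/\rho}$. You are somewhat more explicit than the paper about why the argument survives adaptivity (the staleness bound is pointwise over all streams and the tail bound depends only on $\mech$'s coins), but the decomposition, the key lemma, and the constants are the same.
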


\begin{proof}
Consider any adversarial process $\adv$ interacting with $\mech$. Fix a timestep~$t \in [T]$. Consider time horizon $T$ divided into $m$ stages, where the stage $k\in[m]$ is from timestep $(k-1)r+ 1$ to $kr$. Let timestep~$t$ be in  stage $k$. Intuitively, since $\mech$, defined in \Cref{alg:maxsum}, corresponds to recomputing the noisy sum every $r$ timesteps (and using each recomputed value for the next $r$ timesteps), the error can be decomposed into two parts: one caused by the drift in the true value of the function since the last recomputation and the other caused by noise addition. By the triangle inequality,
\ifnum\pods=0
\begin{align*}
    \err_f(\xt, a_t) 
    &= \abs{a_t - f(\xt)} \\
    &\leq \abs{f(\xt) - f(\vec{\dstream}_{[(k-1)r+1}]}) + \abs{a_t - f(\vec{\dstream}_{[(k-1)r+1}]}) \\
    & \leq T/m + \abs{a_{(k-1)r+1} - f(\vec{\dstream}_{[(k-1)r+1]})} 
    \leq \frac{T}{m} + \abs{Z_k}.
\end{align*}
\else
\begin{align*}
    & \err_f(\xt, a_t) 
    = \abs{a_t - f(\xt)} \\
    &\leq \abs{f(\xt) - f(\vec{\dstream}_{[(k-1)r+1}]}) + \abs{a_t - f(\vec{\dstream}_{[(k-1)r+1}]}) \\
    & \leq T/m + \abs{a_{(k-1)r+1} - f(\vec{\dstream}_{[(k-1)r+1]})} 
    \leq \frac{T}{m} + \abs{Z_k}.
\end{align*}
\fi
The second inequality above holds because the $\ell_2$-sensitivity of $f$ is at most $1$, and since we recompute every $r=T/m$ timesteps, the maximum change in the function $f$ since the last recomputation is $T/m$. The third inequality follows from Steps~\ref{line:noise_sensone} and~\ref{line:recompute_sensone} in \Cref{alg:maxsum}. Finally, observe that $Z_k$ for $k \in [m]$ are mutually independent Gaussian random variables with mean $0$ and standard deviation $\sqrt{\frac{m}{2\rho}}$. Hence, applying Lemma~\ref{lem:gauss_max} on the concentration of the maximum of the absolute values of Gaussian random variables (setting $\ell = \sqrt{\frac{10 m \log m}{\rho}}$), and using the fact that $m \geq 2$, 
\ifnum\pods=0
\begin{align*}
    \Pr_{\text{coins of }\alg,\adv}\left( \max_{t\in[T]} \err_f(\xt,a_t) \geq \frac{T}{m} + \sqrt{ \frac{10m \log m}{\rho}} \right) & =
    \Pr_{\text{coins of }\alg,\adv}\left( \max_{k\in[m]} |Z_k| \geq \sqrt{ \frac{10m \log m}{\rho}} \right) \\
    & \leq \frac{2}{m^9} \leq \frac{1}{3}.\qedhere
\end{align*}
\else
\begin{align*}
    & \Pr_{\text{coins of }\alg,\adv}\left( \max_{t\in[T]} \err_f(\xt,a_t) \geq \frac{T}{m} + \sqrt{ \frac{10m \log m}{\rho}} \right) \\ =
    & \Pr_{\text{coins of }\alg,\adv}\left( \max_{k\in[m]} |Z_k| \geq \sqrt{ \frac{10m \log m}{\rho}} \right) \leq \frac{2}{m^9} \leq \frac{1}{3}. \qedhere
\end{align*}
\fi
\end{proof}
%We are now ready to prove Item~1 of \Cref{thm:recompute_rho}.
\begin{proof}[Proof of Item~1 in \Cref{thm:recompute_rho}]
By Claim~\ref{lem:privmax}, the mechanism $\mech$ is $\rho$-zCDP in the \aCR{}. 

For $\rho \leq \frac{\log T}{T^2}$, consider the mechanism that doesn't touch the data and always outputs $0$. Clearly it is $0$-zCDP. Additionally, for this mechanism, $\alpha = O(T)$. For $\rho > \frac{\log T}{T^2}$, by Claim~\ref{lem:acc-max-delta}, mechanism $\mech$ is $(\alpha, T)$-accurate for $f$ in the \aCR{}, where $\alpha = T/m + 10\sqrt{\frac{m \log m}{\rho}}$. Setting $m = \lfloor \frac{\rho^{1/3} T^{2/3}}{\log^{1/3} T} \rfloor$ gives $\alpha = \BigO{\min \left\{T, \sqrt[3]{\frac{T \log T}{\rho}} \right\}}$, where the $\min$ comes from the option of using the trivial mechanism. 
\end{proof}

\begin{proof}[Proof Sketch of Item~1 in \Cref{thm:recompute_pure}]
The mechanism $\mech$ used is a variant of \Cref{alg:maxsum}. The only difference is that in Line~\ref{line:gaussnoiserecom}, instead of the random variable $Z_k$ being distributed as a Gaussian, it is distributed as $Lap(\frac{m}{\eps})$. The privacy proof follows a structure similar to that of Claim~\ref{lem:privmax}, with the main difference being that instead of using a composition theorem for $\rho$-zCDP, we instead use that the composition of $m$ mechanisms that are $(\frac{\eps}{m},0)$-DP is $(\eps, 0)$-DP. 

For accuracy, we can prove a claim phrased exactly as  Claim~\ref{lem:acc-max-delta}, with $\alpha = \frac{T}{m} + \frac{m}{\eps}[ \log m + 2 \log T ]$ instead of $\alpha = \frac{T}{m} + \sqrt{\frac{10m \log m}{\rho}}$. The proof is similar, with the only difference being that instead of using Lemma~\ref{lem:gauss_max} on the maximum of i.i.d. Gaussian random variables, we instead use Lemma~\ref{lem:lapmax} on the maximum of i.i.d. Laplace random variables, with $t = 2\log T$.

Finally, we prove the theorem as follows: for $\eps > \frac{\log T}{T}$, setting $m = \lfloor \sqrt{\frac{\eps T}{\log T}} \rfloor$ in the accuracy claim gives $\alpha = O(\sqrt{\frac{T}{\eps}\log T})$. For $\eps \leq \frac{\log T}{T}$, we can consider the mechanism that always outputs $0$ at every timestep. This mechanism is $(0,0)$-DP and $(\alpha, T)$-accurate for $f$ in the \aCR{} with $\alpha = O(T)$. This completes the proof.
\end{proof}
\begin{proof}[Proof Sketch of Item~2 in Theorems~\ref{thm:recompute_rho} and \ref{thm:recompute_pure}]
We \ifnum\pods=1 \\ \fi  sketch the proof of Item 2 of \Cref{thm:recompute_rho}. The proof of Item 2 of \Cref{thm:recompute_pure} is essentially the same. The upper bound mechanism $\mech$ used for this proof is a variant of \Cref{alg:maxsum} where we recompute $\selection{d}$ using the exponential mechanism \cite{McTalwar} with $\eps' = \sqrt{\frac{2\rho}{m}}$ (for Item 2 of \Cref{thm:recompute_pure} on pure DP, we use $\eps' = \frac{\eps}{m}$). The quality function of an attribute and dataset pair is defined to be the sum of that attribute over all entries in the dataset. The exponential mechanism instantiated as described above is used to privately compute $\selection{d}$ every $T/m$ timesteps. Between recomputations, the attribute index produced at the last recomputation is used as the output. 

The privacy proof follows a structure similar to that of Claim~\ref{lem:privmax}. The main difference for this proof is that the simulator will now interact with an ideal mechanism that takes as input a differentially private algorithm as well as neighboring datasets to run the algorithm on. In particular, the neighboring datasets will be the inputs $x_{t^*}^{(L)}$ and $x_{t^*}^{(R)}$ from the challenge timestep, and the algorithm will be the exponential mechanism hardcoded with all the inputs of the adversary so far (except for the inputs from the challenge timestep.) The ideal mechanism will run the algorithm with challenge input $x_{t^*}^{(\side)}$ and output the result. The adversary's view in the privacy game is clearly identical to its view when interacting with the simulator. Finally, the closeness of the adversary's view in the simulated world when $\side = L$ and when $\side = R$ follows directly from the privacy of the exponential mechanism and adaptive composition~\cite{DworkRV10,BunS16}.

For accuracy, we prove a claim akin to  Claim~\ref{lem:acc-max-delta}, with $\alpha = \frac{T}{m} + 2\sqrt{\frac{m}{2\rho}}[ \log d + 5 \log m ]$. The proof is similar to that of Claim~\ref{lem:acc-max-delta}; here, we define $|Z_k|$ as the error incurred by the $k^{th}$ instantiation of the exponential mechanism, and use Lemma~\ref{lem:expmech} on the accuracy of the exponential mechanism (setting $a = 5 \log m$) and take a union bound over the $m$ recomputations to argue that the maximum error is greater than $\alpha = \frac{T}{m} + 2\sqrt{\frac{m}{2\rho}}[ \log d + 5 \log m ]$ with probability at most $\frac{1}{m^4}$.

For $\rho > (\frac{\log (d T)}{T})^2$, by the accuracy claim, mechanism $\mech$ is $(\alpha, T)$-accurate for $f$ in the \aCR{}, where $\alpha = \frac{T}{m} + 2\sqrt{\frac{m}{2\rho}}[ \log d + 2 \log m ]$. Setting $m = \lfloor \frac{\rho^{1/3} T^{2/3}}{(\log (d T)^{2/3}} \rfloor$  yields $\alpha = O\left(\frac{T^{1/3}\log (d T)^{2/3} }{\rho^{1/3}} \right)$. Finally, for $\rho \leq (\frac{\log (d T)}{T})^2$, consider the mechanism that doesn't touch the data and always outputs $0$. It is clearly $0$-zCDP, and has $\alpha = O(T)$.
\end{proof}

%Finally, we convert from $\rho$-zCDP to $(\eps, \delta)$-DP.
%\begin{corollary}\label{thm:sens1_cdp_delta}
%For all $\eps, \delta \in (0,1/2]$, sufficiently large $T>0$ and functions $f:\X^* \to \Z$ such that $\Delta_f > 0$, mechanism $\mech$ defined in \Cref{alg:maxsum} is $(\epsilon, \delta)$-DP and $(\alpha,T)$-accurate in the \aCR{} where $\alpha = \BigO{\min\left\{T,\frac{T^{1/3} \sqrt{\log T} \sqrt[3]{\log 1/\delta}}{\eps^{2/3}}\right\}}.$
%\end{corollary}

%\begin{proof}
%Setting $\rho = \frac{\epsilon^2}{16 \log(1/\delta)}$, we get by \Cref{prop:CDPtoDP} on converting from $\rho$-zCDP to $(\eps, \delta)$-DP and \Cref{lem:privmax} on $\rho$-zCDP of $\mech$, that mechanism $\mech$ is $(\epsilon, \delta)$-DP. 

%Substituting this into the expression for $\alpha$ in Item~\pjnote{?} from \Cref{thm:recompute_rho} proves the corollary.
%\end{proof}
%\begin{theorem}
%For all $d,t \in \N$, the $\ell_1$-sensitivity of the function $\maxsum{d}:\X^t \to \R$ is $1$.  
%\end{theorem}\pjnote{say something about this meaning that our upper bound is tight}

\ifnum\pods=0
\section*{Acknowledgments}

We are grateful to Kobbi Nissim for being part of the conversations that got this work started and for subsequent helpful comments. We are also grateful to Jon Ullman for insights into  the difficulty of the top-$k$ selection problem. 
\else 
\fi

%%%%%%%%%%%%%%%%%%%%%%%%%%%%%%%%%%%%%%%%%%%%%%%%%%%%%%%%%%%%% 
%                       BIBLIOGRAPHY
%%%%%%%%%%%%%%%%%%%%%%%%%%%%%%%%%%%%%%%%%%%%%%%%%%%%%%%%%%%%% 
\ifnum\pods=0
\addcontentsline{toc}{section}{References}
\bibliographystyle{plain}
\bibliography{bibliography}

\begin{thebibliography}{10}

\bibitem{AgarwalS17}
Naman Agarwal and Karan Singh.
\newblock The price of differential privacy for online learning.
\newblock In Doina Precup and Yee~Whye Teh, editors, {\em Proceedings of the
  34th International Conference on Machine Learning}, volume~70 of {\em
  Proceedings of Machine Learning Research}, pages 32--40. PMLR, 06--11 Aug
  2017.

\bibitem{EliezerJWY20}
Omri Ben-Eliezer, Rajesh Jayaram, David~P. Woodruff, and Eylon Yogev.
\newblock A framework for adversarially robust streaming algorithms.
\newblock In {\em Proceedings of the 39th ACM SIGMOD-SIGACT-SIGAI Symposium on
  Principles of Database Systems}, PODS'20, page 63–80, New York, NY, USA,
  2020. Association for Computing Machinery.

\bibitem{BolotFMNT13}
Jean Bolot, Nadia Fawaz, S.~Muthukrishnan, Aleksandar Nikolov, and Nina Taft.
\newblock Private decayed predicate sums on streams.
\newblock In {\em Proceedings of the 16th International Conference on Database
  Theory}, ICDT '13, page 284–295, New York, NY, USA, 2013. Association for
  Computing Machinery.

\bibitem{BunS16}
Mark Bun and Thomas Steinke.
\newblock Concentrated differential privacy: Simplifications, extensions, and
  lower bounds.
\newblock In Martin Hirt and Adam~D. Smith, editors, {\em Theory of
  Cryptography - 14th International Conference, {TCC} 2016-B, Beijing, China,
  October 31 - November 3, 2016, Proceedings, Part {I}}, volume 9985 of {\em
  Lecture Notes in Computer Science}, pages 635--658, 2016.

\bibitem{bunUV18}
Mark Bun, Jonathan Ullman, and Salil Vadhan.
\newblock Fingerprinting codes and the price of approximate differential
  privacy.
\newblock {\em SIAM Journal on Computing}, 47(5):1888--1938, 2018.

\bibitem{censuscite}
Census Bureau.
\newblock Census disclosure avoidance system, 2020.
\newblock
  \url{https://www.census.gov/programs-surveys/decennial-census/decade/2020/planning-management/process/disclosure-avoidance.html}.

\bibitem{CardosoR21}
Adrian~Rivera Cardoso and Ryan Rogers.
\newblock Differentially private histograms under continual observation:
  Streaming selection into the unknown.
\newblock {\em CoRR}, abs/2103.16787, 2021.

\bibitem{ChanSS10}
T.{-}H.~Hubert Chan, Elaine Shi, and Dawn Song.
\newblock Private and continual release of statistics.
\newblock {\em {IACR} Cryptol. ePrint Arch.}, 2010:76, 2010.

\bibitem{DworkKMMN06}
Cynthia Dwork, Krishnaram Kenthapadi, Frank McSherry, Ilya Mironov, and Moni
  Naor.
\newblock Our data, ourselves: Privacy via distributed noise generation.
\newblock In {\em International Conference on the Theory and Applications of
  Cryptographic Techniques}, EUROCRYPT '06, pages 486--503, St.~Petersburg,
  Russia, 2006.

\bibitem{dwork2006calibrating}
Cynthia Dwork, Frank McSherry, Kobbi Nissim, and Adam Smith.
\newblock Calibrating noise to sensitivity in private data analysis.
\newblock In {\em Theory of cryptography conference}, pages 265--284. Springer,
  2006.

\bibitem{DworkNPR10}
Cynthia Dwork, Moni Naor, Toniann Pitassi, and Guy~N. Rothblum.
\newblock Differential privacy under continual observation.
\newblock In Leonard~J. Schulman, editor, {\em Proceedings of the 42nd {ACM}
  Symposium on Theory of Computing, {STOC} 2010, Cambridge, Massachusetts, USA,
  5-8 June 2010}, pages 715--724. {ACM}, 2010.

\bibitem{DworkNPRY10}
Cynthia Dwork, Moni Naor, Toniann Pitassi, Guy~N. Rothblum, and Sergey
  Yekhanin.
\newblock Pan-private streaming algorithms.
\newblock In Andrew~Chi{-}Chih Yao, editor, {\em Innovations in Computer
  Science - {ICS} 2010, Tsinghua University, Beijing, China, January 5-7, 2010.
  Proceedings}, pages 66--80. Tsinghua University Press, 2010.

\bibitem{DworkRV10}
Cynthia Dwork, Guy~N. Rothblum, and Salil~P. Vadhan.
\newblock Boosting and differential privacy.
\newblock In {\em 51th Annual {IEEE} Symposium on Foundations of Computer
  Science, {FOCS} 2010, October 23-26, 2010, Las Vegas, Nevada, {USA}}, pages
  51--60. {IEEE} Computer Society, 2010.

\bibitem{FichtenHO21}
Hendrik Fichtenberger, Monika Henzinger, and Wolfgang Ost.
\newblock Differentially private algorithms for graphs under continual
  observation.
\newblock In Petra Mutzel, Rasmus Pagh, and Grzegorz Herman, editors, {\em 29th
  Annual European Symposium on Algorithms, {ESA} 2021, September 6-8, 2021,
  Lisbon, Portugal (Virtual Conference)}, volume 204 of {\em LIPIcs}, pages
  42:1--42:16. Schloss Dagstuhl - Leibniz-Zentrum f{\"{u}}r Informatik, 2021.

\bibitem{SmithT13}
Abhradeep Guha~Thakurta and Adam Smith.
\newblock {(Nearly)} optimal algorithms for private online learning in
  full-information and bandit settings.
\newblock In C.~J.~C. Burges, L.~Bottou, M.~Welling, Z.~Ghahramani, and K.~Q.
  Weinberger, editors, {\em Advances in Neural Information Processing Systems},
  volume~26. Curran Associates, Inc., 2013.

\bibitem{HardtLM12}
Moritz Hardt, Katrina Ligett, and Frank McSherry.
\newblock A simple and practical algorithm for differentially private data
  release.
\newblock In Peter~L. Bartlett, Fernando C.~N. Pereira, Christopher J.~C.
  Burges, L{\'{e}}on Bottou, and Kilian~Q. Weinberger, editors, {\em Advances
  in Neural Information Processing Systems 25}, pages 2348--2356, 2012.

\bibitem{HardtT10}
Moritz Hardt and Kunal Talwar.
\newblock On the geometry of differential privacy.
\newblock In {\em Proceedings of the 42nd Annual ACM Symposium on the Theory of
  Computing}, STOC '10, pages 705--714, New York, NY, USA, 2010. ACM.

\bibitem{HassidimKMMS20}
Avinatan Hassidim, Haim Kaplan, Yishay Mansour, Yossi Matias, and Uri Stemmer.
\newblock Adversarially robust streaming algorithms via differential privacy.
\newblock In Hugo Larochelle, Marc'Aurelio Ranzato, Raia Hadsell,
  Maria{-}Florina Balcan, and Hsuan{-}Tien Lin, editors, {\em Advances in
  Neural Information Processing Systems 33: Annual Conference on Neural
  Information Processing Systems 2020, NeurIPS 2020, December 6-12, 2020,
  virtual}, 2020.

\bibitem{jainkt12}
Prateek Jain, Pravesh Kothari, and Abhradeep Thakurta.
\newblock Differentially private online learning.
\newblock In Shie Mannor, Nathan Srebro, and Robert~C. Williamson, editors,
  {\em Proceedings of the 25th Annual Conference on Learning Theory}, volume~23
  of {\em Proceedings of Machine Learning Research}, pages 24.1--24.34,
  Edinburgh, Scotland, 25--27 Jun 2012. JMLR Workshop and Conference
  Proceedings.

\bibitem{KaplanMNS21}
Haim Kaplan, Yishay Mansour, Kobbi Nissim, and Uri Stemmer.
\newblock Separating adaptive streaming from oblivious streaming.
\newblock {\em CoRR}, abs/2101.10836, 2021.

\bibitem{McSheldon}
Ryan McKenna and Daniel~R Sheldon.
\newblock Permute-and-flip: A new mechanism for differentially private
  selection.
\newblock In H.~Larochelle, M.~Ranzato, R.~Hadsell, M.~F. Balcan, and H.~Lin,
  editors, {\em Advances in Neural Information Processing Systems}, volume~33,
  pages 193--203. Curran Associates, Inc., 2020.

\bibitem{McTalwar}
Frank McSherry and Kunal Talwar.
\newblock Mechanism design via differential privacy.
\newblock In {\em Proceedings of the 48th Annual IEEE Symposium on Foundations
  of Computer Science}, FOCS '07, page 94–103, USA, 2007. IEEE Computer
  Society.

\bibitem{PerrierAK19}
Victor Perrier, Hassan~Jameel Asghar, and Dali Kaafar.
\newblock Private continual release of real-valued data streams.
\newblock In {\em 26th Annual Network and Distributed System Security
  Symposium, {NDSS} 2019, San Diego, California, USA, February 24-27, 2019}.
  The Internet Society, 2019.

\bibitem{Renyi61}
Alfred R\'enyi.
\newblock On measures of entropy and information.
\newblock {\em Proceedings of the Fourth Berkeley Symposium on Mathematical
  Statistics and Probability, Volume 1: Contributions to the Theory of
  Statistics, pages 547--561, Berkeley, Calif., 1961. University of California
  Press}, abs/2101.10836, 1961.

\bibitem{SongLMVC18}
Shuang Song, Susan Little, Sanjay Mehta, Staal~A. Vinterbo, and Kamalika
  Chaudhuri.
\newblock Differentially private continual release of graph statistics.
\newblock {\em CoRR}, abs/1809.02575, 2018.

\bibitem{US17}
Thomas Steinke and Jonathan~R. Ullman.
\newblock Tight lower bounds for differentially private selection.
\newblock In Chris Umans, editor, {\em 58th {IEEE} Annual Symposium on
  Foundations of Computer Science, {FOCS} 2017, Berkeley, CA, USA, October
  15-17, 2017}, pages 552--563. {IEEE} Computer Society, 2017.

\bibitem{TalwarTZ15}
Kunal Talwar, Abhradeep Thakurta, and Li~Zhang.
\newblock Nearly optimal private {LASSO}.
\newblock In Corinna Cortes, Neil~D. Lawrence, Daniel~D. Lee, Masashi Sugiyama,
  and Roman Garnett, editors, {\em Advances in Neural Information Processing
  Systems 28: Annual Conference on Neural Information Processing Systems 2015,
  December 7-12, 2015, Montreal, Quebec, Canada}, pages 3025--3033, 2015.

\bibitem{Ullman21pers}
Jonathan Ullman, 2021.
\newblock Personal communication.

\end{thebibliography}

%%%%%%%%%%%%%%%%%%%%%%%%%%%%%%%%%%%%%%%%%%%%%%%%%%%%%%%%%%%%% 
%                       APPENDIX
%%%%%%%%%%%%%%%%%%%%%%%%%%%%%%%%%%%%%%%%%%%%%%%%%%%%%%%%%%%%% 

\appendix
\ifnum\pods=0
\newpage
\appendix
\section{Useful Concentration Inequalities}
\begin{lemma}\label{lem:gauss_conc}
For all random variables $R \sim \gauss(0,\sigma^2)$,
\begin{equation*}
    \Pr[|R| > \ell] \leq 2e^{-\frac{\ell^2}{2 \sigma^2}}.
\end{equation*}
\end{lemma}
\begin{lemma}\label{lem:gauss_max}
Consider $m$ random variables $R_1,\dots,R_m \sim \gauss(0,\sigma^2)$. Then
\begin{equation*}
    \Pr[\max_{j \in [m]} |R_j| > \ell] \leq 2 m e^{-\frac{\ell^2}{2 \sigma^2}}.
\end{equation*}
\end{lemma}
\begin{proof}
By a union bound and Lemma~\ref{lem:gauss_conc},
\begin{align*}
    \Pr[\max_{i \in [m]} |R_i| > \ell] 
      = \Pr( \exists i \in [m] \text{ such that } |R_i| > \ell) \\
     \leq \sum_{i=1}^m \Pr( |R_i| > \ell) \leq
    \sum_{i=1}^m  2e^{-\frac{\ell^2}{2 \sigma^2}}  =  2m e^{-\frac{\ell^2}{2 \sigma^2}}.
\end{align*}
\end{proof}
A similar union bound argument yields the following concentration inequality on the maximum of the absolute values of i.i.d.\ Laplace random variables.
\begin{lemma}\label{lem:lapmax}
Fix $m \in \N$, $\lambda > 0$. Consider $m$ random variables $R_1,\dots,R_m \sim Lap(\lambda)$. Then for all $a>0$,
$$\Pr(\max_{i \in [m]} |R_i| > \lambda (\log m + \log a) ) \leq e^{-a}.$$
\end{lemma} 

\fi
\else
\fi

\end{document}